\newtheorem{theorem}{Theorem}[section]
\newtheorem{lemma}{Lemma}[section]
\newtheorem{claim}{Claim}[section]
\newtheorem{proposition}{Proposition}[section]
\newcommand*{\bdiv}{%
  \nonscript\mskip-\medmuskip\mkern5mu%
  \mathbin{\operator@font div}\penalty900\mkern5mu%
  \nonscript\mskip-\medmuskip
}
\newcommand{\qed}{\hfill $\Box$ \bigbreak}
\newenvironment{proof}{\noindent {\bf Proof.}}{\qed}
\newenvironment{proofclaim}{\noindent{\bf Proof of the claim.}}{\hfill$\star$}
\newcommand{\ie}{{i.e.,}\xspace}
\algrenewcommand\algorithmicindent{0.8em}%
\algnewcommand{\True}{\textt{true}}
\algnewcommand{\False}{\textt{false}}
\title{{\bf Want to Gather? No Need to Chatter!}}
\author{
S\'{e}bastien Bouchard\thanks{Sorbonne Universit\'e, CNRS, INRIA, LIP6, F-75005 Paris, France, E-mail: sebastien.bouchard@lip6.fr}
\and
Yoann Dieudonn\'{e}\thanks{
MIS Lab., Universit\'{e} de Picardie Jules Verne, France, E-mail: yoann.dieudonne@u-picardie.fr}
\and
Andrzej Pelc\thanks{D\'{e}partement d'informatique, Universit\'{e} du Qu\'{e}bec en Outaouais,
Gatineau, Qu\'{e}bec J8X 3X7,
Canada. E-mail: pelc@uqo.ca.
Supported in part by NSERC discovery grant 2018-03899
and by the Research Chair in Distributed Computing of
the Universit\'{e} du Qu\'{e}bec en Outaouais.}
}
\date{ }
\begin{document}

	\maketitle

	\begin{abstract}

		A team of mobile agents, starting from different nodes of an unknown network, possibly at different times, have to meet at the same node and declare that they have all met. Agents have different labels which are positive integers, and move in synchronous rounds along links of the network. The above task is known as {\em gathering} and was traditionally considered under the assumption that when some agents are at the same node then they can talk, i.e., exchange currently available information. In this paper we ask the question of whether this ability of talking is needed for gathering. The answer turns out to be no.

		Our main contribution are two deterministic algorithms that always accomplish gathering in a much weaker model. We only assume that at any time an agent knows how many agents are at the node that it currently occupies but agents do not see the labels of other co-located agents and cannot exchange any information with them. They also do not see other nodes than the current one.
Our first algorithm works under the assumption that agents know {\em a priori} some upper bound $N$ on the size of the network, and it works in time polynomial in $N$ and in the length $\ell$ of the smallest label. Our second algorithm does not assume any {\em a priori} knowledge about the network but  its complexity is exponential in the size of the network and in the labels of agents.
Its purpose is to show feasibility of gathering under this harsher scenario.

		As a by-product of our  techniques we obtain, in the same weak model, the solution of the fundamental problem of leader election among agents: One agent is elected a {\em leader} and all agents learn its identity. As an application of our  result we also solve, in the same model, the well-known {\em gossiping} problem: if each agent has a message at the beginning, we show how to make all messages known to all agents, even without any {\em a priori} knowledge about the network. If agents know an upper bound $N$ on the size of the network then our gossiping algorithm works in time polynomial in  $N$, in the length of the smallest label and in the length of the largest message.

		\vspace{2ex}
		\noindent {\bf Keywords:} gathering, deterministic algorithm, anonymous mobile agent.

	\end{abstract}

	\vfill

	\vfill

	\thispagestyle{empty}
	\setcounter{page}{0}
	\pagebreak

	\section{Introduction}

		\subsection{The background}

			A team of at least two mobile agents, starting from different nodes of an unknown network, possibly at different times, have to meet at the same node and declare that they have all met.
			This basic task,  known  as {\em gathering}, has been extensively studied in the literature (cf., e.g., surveys  \cite{alpern02b,Pe}). Gathering often has to be solved in real life when some people have to meet in a city whose streets form a network. In computer science applications,	mobile agents usually represent software agents navigating in computer networks. In robotics applications, agents may  represent mobile robots  moving along corridors of a contaminated mine that is not accessible to humans. The reason to meet may be to gather in one place samples previously collected by the agents, or to coordinate some future task, such as network maintenance or finding a map of the network.

		\subsection{The model and the problem} \label{sec:model}

			The network is modeled as an undirected connected graph. In the sequel, we refer to it simply as a graph. The number of nodes in the graph is denoted by $n$ and is called the {\em size} of the graph. As it is usually done in the literature concerning rendezvous of mobile agents  (cf. \cite{Pe}), we seek gathering algorithms that do not	rely on the knowledge of node labels, and can work in anonymous graphs as well. The importance of designing such algorithms	is motivated by the fact that, even when nodes are equipped with distinct labels, agents may be unable to perceive them	because of limited perception capabilities,	or nodes may be reluctant to reveal their labels, e.g., due to security or privacy reasons.	Also note that if nodes had distinct labels, then agents might explore the graph and meet in the smallest node, hence gathering would reduce to the well-studied task of exploration.

			On the other hand, we assume that edges incident to a node $v$ have distinct labels in $\{0,\dots,d-1\}$, where $d$ is the degree of $v$. Thus every undirected edge $\{u,v\}$ has two labels, which are called its {\em port numbers} at $u$ and at $v$. Port numbering is {\em local}, i.e., there is no relation between port numbers at $u$ and at $v$. Note that in the absence of port numbers, edges incident to a node would be undistinguishable for agents and thus gathering would be often impossible, as the adversary could prevent an agent from taking some edge incident to the current node.

			We consider a team of at least two agents that start from different nodes of the graph (hence there are at most $n$ agents) and traverse its edges in synchronous rounds.
			Agents cannot mark visited nodes or traversed edges in any way. The adversary wakes up some of the agents in possibly different rounds. A dormant agent, not woken up by the adversary, is woken up by the first agent that visits its starting node, if such an agent exists. Agents have different labels which are positive integers. Each agent knows its label but it does not know the labels of the other agents. Agents execute the same deterministic algorithm with a parameter which is the label of the agent. Note that in the absence of different labels, deterministic gathering is not always possible, as witnessed by the example of two identical agents starting simultaneously in a ring all of whose edges have port numbers 0 and 1. Every agent starts executing the algorithm in the round of its wake-up. In every round an agent may perform some local computations and {then it executes a move instruction: it either moves to an adjacent node by a chosen port $p$ by executing the instruction {\tt take port $p$}, or stays idle at the current node by executing the instruction {\tt wait}. Once the instruction {\tt take port $p$} (resp. {\tt wait}), performed by an agent at node $v$ in round $r$ is completed, the agent is in round $r+1$ at a node $u\ne v$ (resp. at node $v$).}

			When an agent enters a node, it learns its degree and the port of entry. Agents that cross each other on an edge, traversing it simultaneously in different directions, do not notice this fact. We assume that the memory of the agents is unlimited: from the computational point of view they are modeled as Turing machines. The {time complexity} of an algorithm is the worst-case number of rounds between the wake-up of the earliest agent until the completion of the algorithm.

			The above described assumptions are standard in the literature concerning mobile agents gathering and were used, e.g., in \cite{BDD,DFKP,DPP,TSZ,YY}. However, in all these papers one additional crucial assumption was made. When some agents are at the same node then they can talk, i.e., exchange currently available information. In particular, they see the labels of the co-located agents. This ability of information exchange among co-located agents was very strongly used in previous gathering algorithms. A subset of agents that met at some node could, e.g.,  choose the agent with smallest label among them, and subsequently move together as this smallest agent would move, thus successively decreasing the number of moving groups, eventually meeting at the same node.

			In this paper we ask the question of whether this ability of talking among co-located agents is needed for gathering. The answer turns out to be no. Indeed, we replace the assumption about the ability of talking among agents within a node with the following much weaker assumption:
			\begin{itemize}
				\item In any round, an agent knows how many agents are at the node that it currently occupies but agents do not see the labels of other co-located agents and cannot exchange any information with them.
			\end{itemize}

			This assumption can be implemented by equipping nodes with counters recording the current number of agents at a node. Agents need only be able to read these counters but can be devoid of any transmitting devices.

		\subsection{Our results} \label{subsec:ourresults}

			Our main contribution are two deterministic algorithms that always accomplish gathering in the above described model,  much weaker than the traditional one. Our first algorithm works under the assumption that agents know {\em a priori} some upper bound $N$ on the size of the network, and it works in time polynomial in $N$ and in the length $\ell$ of the smallest label. Our second algorithm does not assume any {\em a priori} knowledge about the network but  its complexity is exponential in the size of the network and in the labels of agents. Its purpose is to show feasibility of gathering under this harsher scenario.

 We believe that accomplishing gathering in the weak model considered in this paper significantly increases the applicability of the solution compared to gathering algorithms working in the traditional model, as it permits us to solve gathering in scenarios where agents are deprived of direct means of communication.

			As a by-product of our  techniques we obtain, in the same weak model, the solution of the fundamental problem of leader election (cf. \cite{Ly}) among agents: One agent is elected a {\em leader} and all agents learn its identity. As an application of our  result we also solve, in the same model, the well-known {\em gossiping} problem: if each agent has a message at the beginning, we show how to make all messages known to all agents, even without any {\em a priori} knowledge about the network. If agents know an upper bound $N$ on the size of the network then our gossiping algorithm works in time polynomial in  $N$, in the length of the smallest label and in the length of the largest message. This result about gossiping is perhaps our most surprising finding: agents devoid of any transmitting devices can solve the most general information exchange problem, as long as they can read counters at visited nodes.

			In the absence of direct communication, a natural idea that comes to mind to solve the gathering problem, is to emulate the unavailable mechanism  of communication using moves of agents. Indeed, this is the basic approach that we adopt. However, this idea, while natural, turns out to be very delicate to put in use. Indeed, without special care, one gets soon to a dangerous situation where communication movements of one group of agents can interfere with communication movements of another closely located group. On top of this difficulty we have another one: movements of agents must serve to accomplish two different goals, one is to communicate with other agents, and the other is to travel in order to meet. Hence we face the danger of ``travelling'' movements interfering with ``communication'' movements. Moreover, it should be stressed that an agent does not, in fact, ``see'' another agent entering or leaving its node: it can only see the cardinality of the set of agents occupying its current node, and, e.g., notice changes in it while waiting at a node. Hence, for example,  an agent will not  notice any change, if one other agent leaves its node trying to communicate something, and another agent enters its node simply navigating in the graph. Of course, all the above challenges were entirely absent in the traditional model. Overcoming these difficulties in the design of our  gathering and gossiping algorithms is the main technical contribution of this paper.


		\subsection{Related work} \label{subsec:relatwork}

			Gathering has been studied both for two mobile agents, when it is usually called rendezvous, and for larger teams. An extensive survey of  randomized rendezvous in various scenarios  can be found in \cite{alpern02b}, cf. also  \cite{alpern95a,alpern02a,baston98}. Deterministic rendezvous in networks has been surveyed in \cite{Pe}. In several papers, the geometric scenario was considered (rendezvous in an interval of the real line, see, e.g., \cite{baston98,baston01,gal99}, or in the plane, see, e.g., \cite{anderson98a,anderson98b}). Gathering more than two agents has been studied, e.g., in \cite{DP,lim96,thomas92,YY}. In~\cite{YY} the authors considered	gathering many agents with unique labels, and gathering many labeled agents in the presence of Byzantine agents was studied in \cite{BDD,DPP}.	The problem was also investigated in the context of multiple robot systems, cf. \cite{CP05,FPSW}, and fault tolerant gathering of robots in the plane was studied, e.g., in \cite{AP06,CP08}.

			For the deterministic setting a lot of effort has been dedicated to the study of the feasibility of rendezvous, and to the time required to achieve this task, when feasible. For instance, deterministic rendezvous with agents equipped with tokens used to mark nodes was considered, e.g., in~\cite{KKSS}. Deterministic rendezvous of two agents that cannot mark nodes but have unique labels was discussed in \cite{DFKP,TSZ}.	These papers are concerned with the time of rendezvous in arbitrary graphs. In \cite{DFKP} the authors show a rendezvous algorithm polynomial in the size of the graph, in the length of the shorter label and in the delay between the starting time of the agents. In \cite{TSZ} rendezvous time is polynomial in the first two of these parameters and independent of the delay.

			Memory required by two anonymous agents to achieve deterministic rendezvous has been studied in \cite{FP} for trees and in  \cite{CKP} for general graphs. Memory needed for randomized rendezvous in the ring is discussed, e.g., in~\cite{KKPM08}.

			Apart from the synchronous model used in this paper, several authors have investigated asynchronous gathering in the plane \cite{CFPS,FPSW} and in network environments \cite{BCGIL,CLP,DGKKP,DPV}. In the latter scenario the agent chooses the edge which it decides to traverse but the adversary controls the speed of the agent. Under this assumption rendezvous in a node cannot be guaranteed even in very simple graphs and hence the rendezvous requirement is relaxed to permit the agents to meet inside an edge.

			A different asynchronous model for gathering in ring networks was considered in \cite{DNN,KMP}. In this model, agents were memoryless but they could perform look operations which gave them a snapshot of the entire network with the positions of all agents in it.

			In \cite{DP1}, the authors considered the problem of network exploration by many agents that could not communicate between them. However, the information available to an agent in each round was much different than in the present paper. Indeed, in \cite{DP1}, agents were getting {\em local traffic reports} consisting of answers to three questions: ``Am I alone in the node?'', ``Did any agent enter this node in this round?'', ``Did any agent leave this node in this round?''. To see that this feedback cannot be derived from our present assumption of knowing the number of agents co-located with an agent in a given round, consider the situation when an agent $a$ stays at a node, and in a given round one other agent leaves the node and another agent enters it. In our present model, agent $a$ does not notice any change, while in the model from \cite{DP1} it gets reports about somebody leaving the node and somebody entering it.

			In \cite{DDPS}, the problem of conveying bits of information using movements of robots was considered in a context much different from ours. Mobile robots were moving in the plane and they could periodically get snapshots of the entire configuration of robots.

	\section{Preliminaries}\label{prelim}

		In this section we introduce some conventions, definitions and procedures that will be used to describe and analyze our algorithms.

		Let us start with some conventions. We say that the execution $E$ of a sequence of instructions lasts $T$ rounds iff during $E$ the agent executes exactly $T$ move instructions of type {\tt wait} or {\tt take port p}. Moreover, if $E$ starts in round $t$, we say that $E$ \emph{is completed in} (resp. \emph{is completed by}) round $t+T$ iff $E$ lasts exactly $T$ rounds (resp. at most $T$ rounds). In our pseudocodes, we often use the shortcut {\tt wait x rounds}: this instruction is equivalent to a sequence of $x$ consecutive instructions {\tt wait}. Sometimes, we also use the instruction {\tt wait until event}: this is equivalent to execute the instruction {\tt wait} in each round, until reaching a round in which the event occurs.

		As mentioned earlier, in order to remedy the lack of direct means of communication, the agents will be required to use movements as a vector for the transmission of information. Hence in our algorithms some of the instructions are dedicated to handling messages via strings. Throughout the paper, we consider only binary strings over the alphabet $\{0,1\}$. The empty string will be denoted by $\epsilon$. $\{0,1\}^+$ denotes the set of non-empty strings. The length of a string $s$,  i.e.,  its number of bits, will be denoted by $|s|$ and its i$th$ bit will be referred to as $s[i]$. We will sometimes use the notation $s[i,j]$ to indicate the substring from $s[i]$ to $s[j]$ ($s[i]$ and $s[j]$ included). If $i>j$ or $i$ (resp. $j$) does not belong to $\{1,\ldots,|s|\}$, we consider that $s[i,j]=\epsilon$. We will also use the functions ${\tt code}$ and ${\tt decode}$ that are borrowed from~\cite{DFKP}. Given a string $s$, ${\tt code}(s)=01$ if $s=\epsilon$, ${\tt code}(s)=s[1]s[1]s[2]s[2]\ldots s[|s|]s[|s|]01$ otherwise. The function ${\tt decode}$ is the inverse function of ${\tt code}$, i.e. ${\tt decode}({\tt code}(s))=s$. The next proposition gives some properties of function ${\tt code}$.

		\begin{proposition}
			\label{pro:code}
			Let $s_1$ be a string belonging to $\{0,1\}^+$. We have the following properties:
			\begin{itemize}
				\item $|{\tt code}(s_1)|$ is even.
				\item ${\tt code}(s_1)[k,k+1]=01$ and $k$ is odd iff $k+1=|{\tt code}(s_1)|$.
				\item Given a string $s_2\ne s_1$ belonging to $\{0,1\}^+$, ${\tt code}(s_1)$ cannot be the prefix of ${\tt code}(s_2)$.
			\end{itemize}
		\end{proposition}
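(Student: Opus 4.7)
The proof splits naturally into three parts, one per bullet; the first two are warm-ups, and the third will lean crucially on the second.

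For the first bullet, the plan is simply to observe that since $s_1\in\{0,1\}^+$ the definition gives $\mathtt{code}(s_1)=s_1[1]s_1[1]\cdots s_1[|s_1|]s_1[|s_1|]01$, a string of length $2|s_1|+2$. This number is even.

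For the second bullet, I would argue each direction separately. The $(\Leftarrow)$ direction is immediate: if $k+1=|\mathtt{code}(s_1)|=2|s_1|+2$, then $k=2|s_1|+1$ is odd and by construction $\mathtt{code}(s_1)[k,k+1]$ is the trailing $01$. For the $(\Rightarrow)$ direction, suppose $k$ is odd and $k+1<|\mathtt{code}(s_1)|$. Then $k\le 2|s_1|-1$, so $k=2i-1$ for some $1\le i\le |s_1|$, and by construction $\mathtt{code}(s_1)[k]=\mathtt{code}(s_1)[k+1]=s_1[i]$; hence $\mathtt{code}(s_1)[k,k+1]\in\{00,11\}$, which is not $01$. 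By contraposition we are done.

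For the third bullet, suppose for contradiction that $\mathtt{code}(s_1)$ is a prefix of $\mathtt{code}(s_2)$ with $s_1\ne s_2$ and both nonempty. I would first rule out the case $|s_1|=|s_2|$: then the two codes would have equal length by the first bullet, so $\mathtt{code}(s_1)=\mathtt{code}(s_2)$, and comparing odd-indexed positions (which hold $s_1[i]$ and $s_2[i]$ respectively for $1\le i\le|s_1|$) forces $s_1=s_2$, a contradiction. Since $|s_1|>|s_2|$ is impossible (a strictly longer string cannot be a prefix), we must have $|s_1|<|s_2|$. Now set $k:=|\mathtt{code}(s_1)|-1=2|s_1|+1$, which is odd. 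By construction, $\mathtt{code}(s_1)[k,k+1]=01$, and since $\mathtt{code}(s_1)$ is a prefix of $\mathtt{code}(s_2)$ we also have $\mathtt{code}(s_2)[k,k+1]=01$. Applying the second bullet to $s_2$ yields $k+1=|\mathtt{code}(s_2)|=2|s_2|+2$, i.e., $|s_1|=|s_2|$, contradicting $|s_1|<|s_2|$.

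The only subtle step is the third bullet; the other two are direct from the definition. The main obstacle, if any, is choosing the right reduction to the second property: the trailing $01$ acts as an ``end marker'' precisely because odd-indexed adjacent pairs inside the doubled prefix are always $00$ or $11$, and that is exactly what the second bullet formalizes. All remaining case analysis (equal lengths, longer $s_1$) is routine.
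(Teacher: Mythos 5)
Your proof is correct. The paper states Proposition~\ref{pro:code} without proof (the functions ${\tt code}$ and ${\tt decode}$ are borrowed from~\cite{DFKP}), so there is no argument of the authors' to compare against; your decomposition --- doubled bits force odd-indexed adjacent pairs to be $00$ or $11$, so the trailing $01$ is an unambiguous end marker, which then kills the prefix case via a length comparison --- is the natural one and is complete. The only cosmetic remark: in the $(\Rightarrow)$ direction of the second bullet, the contrapositive also includes the case $k+1>|{\tt code}(s_1)|$, but that is immediately dispatched by the paper's convention that out-of-range substrings equal $\epsilon\ne 01$.
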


		The concatenation of two strings $s_1$ and $s_2$ will be denoted by $s_1s_2$, and for every non negative integer $i$, the word $w^i=\epsilon$ if $i=0$, and $w^i=ww^{i-1}$ otherwise.

		 We say that a sequence of $i$ integers $(x_1,x_2,\ldots,x_i)$ is a {\em path} from a node $u$ in the graph iff (1) $i=0$ or (2) there exists an edge $e$ between node $u$ and a node $w$ such that the port number of edge $e$ at node $u$ is $x_1$ and $(x_2,\ldots,x_i)$ is a path from node $w$ in the graph. Given a path $\sigma$ from node $u$, we denote by $\mathcal{N}(\sigma,u)$ the set of nodes that are visited by following path $\sigma$ from node $u$, including the starting node and the terminal node: if this terminal node is $v$, we will say that $\sigma$ is a path from node $u$ to node $v$. As for binary strings, the length of a path $p$, corresponding to its number of elements, will be denoted by $|p|$, and its i$th$ element will be referred to as $p[i]$.

		We now recall three basic procedures, known from the literature (cf. \cite{CKP,TSZ}), that will be used to design our gathering algorithm.

		The first procedure, due to Ta-Shma and Zwick \cite{TSZ}, aims at gathering two agents in a graph of unknown size. We will call this procedure ${\tt TZ}(L)$, where $L$ is a non negative integer given as input parameter. In \cite{TSZ}, the authors give a polynomial $\mathcal{P}$ in two variables, increasing in each of the variables, such that, if two agents start executing procedure ${\tt TZ}$ in possibly different round, one with an input parameter $L_1$ and the other with an input parameter $L_2\ne L_1$, then they will meet after at most $\mathcal{P}(n,l)$ rounds since the start of the later agent, where $l$ is the length of the binary representation of the smaller integer between $L_1$ and $L_2$. 

		The second procedure, explicitly described in \cite{BDD} (and derived from a proof given in \cite{CDK}), allows an agent to perform a graph exploration (i.e, visiting all nodes of the graph) without knowing any upper bound on the size of the graph, using a fixed token placed at the starting node of the agent. (In our applications, the role of the token will be played by a group of agents co-located at the same node). The execution time of the procedure is polynomial in the size of the graph, the exact value of which is learned by the agent when the execution is completed. We call this procedure ${\tt EST}$, for {\em exploration with a stationary token}. The maximum time of execution of the procedure ${\tt EST}$ in a graph of size at most $n\geq2$ is bounded by the value ${\tt T}({\tt EST}(n))=n^5$.

		The third and final procedure borrowed from the literature is based on universal exploration sequences (UXS) and is a corollary of the result of Reingold \cite{Re}. Given any positive integer $N\geq2$, the procedure allows the agent to visit all nodes of any graph of size at most $N$, starting from any node of this graph and coming back to it, using a number of edge traversals that is polynomial in $N$. In the first half of the procedure, which will be called the \emph{effective part}, after entering a node of degree $d$ by some port $p$, the agent computes the port $q$ by which it has to exit; more precisely $q=(p+x_i)\mod d$, where $x_i$ is the corresponding term of the UXS. In the second half of the procedure, which will be called the \emph{backtrack part}, the agent backtracks to its starting node by traversing in the reverse order the entire sequence of edges it has traversed during the first half (some edges may be traversed several times). We denote by ${\tt T}({\tt EXPLO}(N))$ the execution time of procedure ${\tt EXPLO}$ with parameter $N$: this time is polynomial in $N$.

		Throughout the paper, an agent will often need to use the number of agents (including itself) that occupy its current node: this value will be denoted by ${\tt CurCard}$ (which stands for ``current cardinality'').

	\section{Known upper bound on the size of the graph}

		This section is dedicated to the presentation and the analysis of our algorithm ${\tt GatherKnownUpperBound}$ that allows the agents to solve gathering in our model, provided they initially know a common upper bound on the graph size. Throughout this section, this known upper bound is denoted by $N$.

		\subsection{Intuition} \label{intuitionnconnu}

			In order to better describe the high-level idea of our algorithm, let us assume an ideal situation in which all the agents have labels of the same length $\mu$. Let us also assume that all agents are initially woken up at the same time by the adversary.

			In such an ideal situation, we can solve the gathering problem via a strategy made up of consecutive steps $1,2,3,$ etc. The agents start each step $i$ simultaneously. At the beginning of step $i$ they are distributed over $k_i$ distinct nodes. The intended goal is either to get the agents to declare gathering at the same time once step $i$ is completed if $k_i=1$, or otherwise to get all agents to start simultaneously step $i+1$ from at most $k_{i+1}\leq\lfloor \frac{k_i}{2} \rfloor$ distinct nodes. Since the beginning of step $1$ coincides with the round when the adversary wakes up all agents, we can consider a step $i\geq 1$ initiated at the same time by all agents from $k_i$ distinct nodes, and explain how to reach the intended goal mentioned above. This is the purpose of the following paragraphs in which we will use the notion of \emph{invisibility} that can be intuitively defined as follows: two agents (or two groups of agents), executing the same sequence of instructions $X$ starting at the same round but from two distinct nodes, are said to be invisible to each other if they do not meet when executing $X$.

			At the beginning of step $i$, an agent first applies the simple procedure described in Algorithm~\ref{alg:1} that is based on the graph traversal routine ${\tt EXPLO}$ introduced in Section 2. Recall that this procedure consists of two successive parts of equal durations: the effective part, in which each node is visited at least once, and the backtrack part in which the agent executes in reverse order all edge traversals made during the effective part.

			\begin{algorithm}
				\caption{{}}
				\label{alg:1}
				\begin{footnotesize}
				\begin{algorithmic}[1]
					\State c:= ${\tt CurCard}$
					\State execute ${\tt EXPLO}(N)$ and interrupt it as soon as there are more than $c$ agents in my node\label{alg1l2}
					\State wait until the time spent executing Algorithm~\ref{alg:1} is precisely ${\tt T}({\tt EXPLO}(N))$ rounds
				\end{algorithmic}
				\end{footnotesize}
			\end{algorithm}

			It should be noted that the agents that are initially together remain so during the execution of Algorithm~\ref{alg:1}. Hence, after having applied this algorithm, an agent is either (1) with more agents than at the beginning of the step or (2) with exactly the same number of agents.

			Let us first focus on the former situation. This situation necessarily implies that two groups of agents starting step $i$ from two distinct nodes are not  invisible to each other during the execution of the procedure ${\tt EXPLO}(N)$, and thus even not invisible to each other during the effective part of ${\tt EXPLO}(N)$, due to symmetry arguments. Therefore, as soon as the first $\frac{{\tt T}({\tt EXPLO}(N))}{2}$ rounds of the execution of Algorithm 1 are elapsed, we get at most two kinds of groups: the old ones (if any) that have not met any group yet and the new ones (at least one exists) that result from the merge of at least two old groups. In view of the fact that the old groups, which have not merged yet, were invisible to each other when executing the effective part of ${\tt EXPLO}(N)$ and the fact that every new group remains idle during the last $\frac{{\tt T}({\tt EXPLO}(N))}{2}$ rounds, we have the guarantee that each remaining old group meets a new one when executing the backtrack part of ${\tt EXPLO}(N)$. Thus, the execution by every agent of Algorithm~\ref{alg:1} lasts exactly ${\tt T}({\tt EXPLO}(N))$ rounds: when it is completed the agents are all situated in at most $\lfloor \frac{k_i}{2} \rfloor$ distinct nodes. Note that all agents know this, and know that every agent knows this because if an agent ends up sharing its node with more agents than at the beginning of the step, it follows from the above explanations that this is the case for the other agents as well. Hence, we can fulfill our intended goal by just requiring an agent to start step $i+1$ if, after having applied Algorithm~\ref{alg:1}, it is with more agents than at the beginning of the step.

			Now, let us focus on the latter situation in which, after having applied Algorithm~\ref{alg:1}, an agent is exactly with the same number of agents as at the beginning of step $i$. For an agent experiencing this situation, it could be tempting to think that everyone is together as, after all, it has not met any new agent when executing Algorithm~\ref{alg:1}. However, at this stage, this would be premature and thus dangerous. In fact, it can be shown that the current situation implies that: either $k_i=1$ (i.e. all agents are indeed together), or $k_i\geq2$ but the $k_i$ groups are pairwise invisible to each other when executing Algorithm~\ref{alg:1} (the procedure ${\tt EXPLO}(N)$ does not guarantee rendezvous of two agents starting at different nodes). However, these possible invisibilities that could appear detrimental at first glance, we turn them to our advantage. Indeed, we are actually in a convenient situation to allow each agent, using movements, to communicate with the agents of its group, without being disturbed by the agents of the other groups. Those communications aim at ensuring that in each group the agents end up knowing the label of one of them.

			To achieve this, still in step $i$, the agents will act in phases $1,2,3\ldots,\mu$, each lasting $2\cdot{\tt T}({\tt EXPLO}(N))$ rounds. At the beginning of phase $k$, we have the following property $\mathcal{P}(k)$: in every group $G$, there is an agent with label $L$ such that all agents of $G$ know the prefix $p_{k-1}$ of length $k-1$ of the binary representation of $L$ (note that $\mathcal{P}(1)$ is trivially satisfied at the beginning of the first phase). Let us see how these agents proceed to have the property $\mathcal{P}(k+1)$ satisfied when phase $k$ is completed.

			During the first (resp. last) ${\tt T}({\tt EXPLO}(N))$ rounds of phase $k$, the agents having a label whose prefix is $p_{k-1}0$ execute ${\tt EXPLO}(N)$ (resp. remain idle) while the others remain idle (resp. execute ${\tt EXPLO}(N)$). The respective invisibilities of the groups come into the picture as they imply the following crucial property: two agents belonging to two distinct groups and executing ${\tt EXPLO}(N)$ in the first (resp. last) ${\tt T}({\tt EXPLO}(N))$ rounds cannot meet each other within phase $k$. This is crucial because it means that when a set of agents belonging to the same group move together by executing ${\tt EXPLO}(N)$, they visit a node that contains only agents belonging to this set. Hence, by comparing the number of agents sharing its node at the beginning of phase $k$ to the minimum number of agents with which it was at some node when executing ${\tt EXPLO}(N)$, every agent can determine the number of agents of its group for which $p_{k-1}0$ is a prefix. Once phase $k$ is completed, if an agent concludes that there is at least one agent in its group that has a label prefixed by $p_{k-1}0$, then all agents of the group conclude the same, and then $p_k$ is set to $p_{k-1}0$, otherwise $p_k$ is set to $p_{k-1}1$.

			After phase $\mu$, in each group $G$ all agents know the same label $p_\mu$ of an agent belonging to $G$: this label can now be used to break the invisibility of $G$ via procedure ${\tt TZ}$ introduced in the preliminaries section. Indeed, by requiring each agent, once its execution of phase $\mu$ is completed, to execute Algorithm~\ref{alg:2} that relies on procedure ${\tt TZ}$, we can show, using similar arguments as before, that we reach a configuration where: either (1) the number of groups is at most $\lfloor \frac{k_i}{2} \rfloor$ and the cardinality of each of them has increased, or (2) there is only one group and its cardinality has remained unchanged since the beginning of the phase. Note that every agent can detect in which of these two situations it is, just by looking at the cardinality of its group. If the agents are in the first situation, then they start step $i+1$, otherwise they declare that gathering is over: whichever is the case, they can do it at the same time in view of line~\ref{alg2lastline} of Algorithm 2. Hence, in our ideal scenario, we can prove that gathering is declared after at most $\lceil\log N\rceil$ steps, leading to a time complexity polynomial in $N$ and $\mu$.

			Of course, things get more complicated when we are in a scenario that is not necessarily ideal. However, it is through those conceptual principles, together with extra algorithmic ingredients (to circumvent the possible desynchronizations between the wake-ups of the agents as well as the possible different label lengths) that we finally obtain a gathering algorithm working in the general case with a time complexity polynomial in $N$ and in the length of the smallest label among the agents.

			\begin{algorithm}
				\caption{{}}
				\label{alg:2}
				\begin{footnotesize}
				\begin{algorithmic}[1]
					\State $c\gets {\tt CurCard}$
					\State $p_\mu\gets$ the label learned when phase $\mu$ is completed
					\State execute ${\tt TZ}(p_\mu)$ and interrupt it as soon as there are more than $c$ agents in my node or the execution has lasted $\mathcal{P}(N,\mu)$ rounds
					\If{there are $c$ agents in my node}
						\State execute ${\tt EXPLO}(N)$ and interrupt it as soon as there are more than $c$ agents in my node
					\EndIf
					\State wait until the time spent executing Algorithm~\ref{alg:2} is precisely $\mathcal{P}(N,\mu)+{\tt T}({\tt EXPLO}(N))$ rounds \label{alg2lastline}
				\end{algorithmic}
				\end{footnotesize}
			\end{algorithm}

		\subsection{Algorithm}

			Algorithm~\ref{alg:GKUB} gives the pseudocode of ${\tt GatherKnownUpperBound}$. Some of its instructions contain durations that are specified using the value $\mathcal{D}_k$: for every non negative integer $k$, $\mathcal{D}_k$ is precisely equal to $\mathcal{P}(N,k)+3(k+2){\tt T}({\tt EXPLO}(N))$. The execution by an agent of Algorithm~\ref{alg:GKUB} terminates when it declares the gathering is achieved. We show in the proof of correctness that all agents end up making such a declaration in the same round and in the same node. We also show that there is a label $L$ belonging to an agent such that at the end of the execution of Algorithm~\ref{alg:GKUB} by any agent, the variable $\lambda$ is equal to $L$. This shows the promised solution of leader election, as a by-product.

			Algorithm ${\tt GatherKnownUpperBound}$ relies on an important subroutine, which is the function ${\tt Communicate}$. It is this function that will enable breaking possible invisibilities mentioned earlier, by allowing at some point some agent to make its label known. In Algorithm~\ref{alg:GKUB}, each call to function ${\tt Communicate}$ is given three input parameters: an integer $i$ corresponding to the number of bits that will be transmitted/received during the call, a string $s={\tt code}(x)$ where $x$ is the binary representation of the agent's label and a boolean $bool$ indicating whether the agent will attempt to transmit its parameter $s$ or not during the call. Note that the third parameter can appear futile in our current context as in each call to ${\tt Communicate}$ in Algorithm~\ref{alg:GKUB} the input boolean is always true: however it will be useful in our gossiping algorithm (see Section~\ref{sec:elecgoss}) which also relies on this function.

			The output of function ${\tt Communicate}$ is a couple $(l,k)$ such that $l$ is a binary string and $k$ an integer. Similarly as for the input parameter $bool$ of function ${\tt Communicate}$, the second element $k$ will be useful only in our gossiping algorithm. Assume that a group $G$ of agents start executing function ${\tt Communicate}(i,s,bool)$ with the same parameter $i$. Provided some conditions are fulfilled (see Lemma~\ref{lem:com}), $l$ will correspond to one of the strings $s$ given as input parameter by an agent of $G$ for which $bool={\tt true}$, and $k$ will correspond to the number of agents in $G$ for which the input parameter $s$ (resp. $bool$) is equal to $l$ (resp. ${\tt true}$).

			\begin{algorithm}
				\begin{footnotesize}
				\caption{Algorithm~${\tt GatherKnownUpperBound}$ executed by an agent labeled $L$\label{alg:GKUB}}
				\begin{algorithmic}[1]
					\Begin
						\State execute ${\tt EXPLO}(N)$ \label{wakeup}
						\State wait ${\tt T}({\tt EXPLO}(N))$ rounds
						\State $i\gets 1$
						\Repeat
							\State $c\gets$ ${\tt CurCard}$ \label{FirstLineR}
							\State $\lambda\gets0$
							\State execute the following begin-end block and interrupt it before its completion as soon as ${\tt CurCard}>c$
								\Begin \label{alg:fbegin}
									\State wait $\mathcal{D}_{i}$ rounds\label{FirstWait}
									\State execute ${\tt EXPLO}(N)$\label{firstEXphase}
									\State wait ${\tt T}({\tt EXPLO}(N))$ rounds\label{anotherwait}
									\State execute ${\tt EXPLO}(N)$\label{secondEX}
								\End \label{alg:ebegin}
							\If {${\tt CurCard}>c$}\label{line:firstc}
								\State wait until having seen $\mathcal{D}_{i+1}$ consecutive rounds without any variation of ${\tt CurCard}$ since its latest change
									(the current round and the round of its latest change included).\label{waitaftermeet}
							\Else
								\State $x_L\gets$ the binary representation of $L$ \label{getbin}
								\State $(l,k)\gets {\tt Communicate}(i,{\tt code}(x_L),{\tt true})$ \label{calltocom}
								\If {there exists an odd integer $z<|l|$ such that $l[z,z+1]=01$} \label{conditionlabel}
									\State $\lambda\gets$ the integer for which ${\tt decode}(l[1,z+1])$ is the binary representation \label{getlabel}
								\EndIf
								\State execute the following begin-end block and interrupt it before its completion as soon as ${\tt CurCard}>c$ \label{debsecondb}
									\Begin \label{BC1}
										\State wait ${\tt T}({\tt EXPLO}(N))$ rounds \label{encoreWait}
										\State execute ${\tt TZ}(\lambda)$ for $\mathcal{D}_{i}$ consecutive rounds \label{alg:TZ}
										\State wait ${\tt T}({\tt EXPLO}(N))$ rounds \label{anotherwaitB}
										\State execute ${\tt EXPLO}(N)$ \label{secondEXB}
									\End \label{BC2}
								\If {${\tt CurCard}>c$}\label{line:secondc}
									\State wait until having seen $\mathcal{D}_{i+1}$ consecutive rounds without any variation of ${\tt CurCard}$ since its latest change
										(the current round and the round of its latest change included). \label{waitaftermeet2}
								\EndIf
							\EndIf
							\State wait $\mathcal{D}_{i+1}$ rounds \label{lastwaiting}
							\If {${\tt CurCard}=c$ {\bf and} $\lambda\ne0$}\label{conditiondeclare}
								\State declare the gathering is achieved\label{finsecondb}
							\EndIf
							\State $i\gets i+1$
						\Until{the gathering has been declared achieved}\label{alg:stop}
					\End
				\end{algorithmic}
				\end{footnotesize}
			\end{algorithm}

			\begin{algorithm}
				\begin{footnotesize}
				\caption{Algorithm~${\tt Communicate}(i,s,bool)$\label{alg:com}}
				\begin{algorithmic}[1]
					\Begin
						\State $c\gets {\tt CurCard}$
						\State $k\gets 1$
						\State $l\gets\epsilon$
						\If{$|s|\leq i$ {\bf and} $bool={\tt true}$}\label{comsetdeb}
							\State $participate\gets{\tt true}$
						\Else
							\State $participate\gets{\tt false}$
						\EndIf\label{comsetfin}
						\For {$j \leftarrow 1$ {\bf to} $i$}
							\If {$participate={\tt true}$ {\bf and} $j\leq|s|$ {\bf and} $s[j]=0$}\label{condition}
								\State wait ${\tt T}({\tt EXPLO}(N))$ rounds\label{firstlinealpha}
								\State execute ${\tt EXPLO}(N)$\label{COM:process0}
								\State wait $3{\tt T}({\tt EXPLO}(N))$ rounds\label{lastlinealpha}
								\State $l[j]\gets0$
								\If {$c>1$}
									\State $k\gets$ the smallest value reached by ${\tt CurCard}$ during the latest execution of line~\ref{COM:process0}
								\EndIf
							\Else \label{elseblock}
								\State /* $participate={\tt false}$ or $s[j]=1$ or $j>|s|$ */
								\State wait $3{\tt T}({\tt EXPLO}(N))$ rounds\label{firstlinebeta}
								\State execute ${\tt EXPLO}(N)$\label{COM:process1}
								\State wait ${\tt T}({\tt EXPLO}(N))$ rounds\label{lastlinebeta}
								\State $c'\gets$ the smallest value reached by ${\tt CurCard}$ during the latest execution of line~\ref{COM:process1}\label{receive}
								\If {$c=1$ {\bf or} $c'=c$}\label{comtogether}
									\State $l[j]\gets1$
								\Else
									\State $l[j]\gets0$\label{else2}
									\State $participate\gets{\tt false}$
									\State $k\gets c-c'$\label{fincomtogether}
								\EndIf
							\EndIf
						\EndFor
						\State {\bf return} $(l,k)$
					\End
				\end{algorithmic}
				\end{footnotesize}
			\end{algorithm}

		\subsection{Correctness and complexity analysis}

			We start our analysis with the following technical lemma about function ${\tt Communicate}$ (cf. Algorithm~\ref{alg:com}). Every execution by an agent of the for loop within function ${\tt Communicate}(k,*,*)$ will be viewed as a series of consecutive steps $j=1,2,\ldots,k$, where step $j$ is the part of its execution corresponding to the $j$th iteration of the for loop. The part of the execution by an agent which is before step $1$ will be called step $0$.

			\begin{lemma}
				\label{lem:com}
				Let $G$ be the set of all agents located at a given node $v$ in a given round $t$. Assume that all agents of $G$ start executing function ${\tt Communicate}(i,s,bool)$ in round $t$. Also assume that in all these executions the following three conditions are satisfied:
				\begin{itemize}
					\item The input parameter $i$ is a positive integer that is the same for all agents.
					\item The input parameter $bool$ is a boolean and the input parameter $s={\tt code}(x)$ for some binary string $x$.
					\item There is only one agent in $G$, or during each call to procedure ${\tt EXPLO}(N)$ by any agent $A$ of $G$, there is a round when agent $A$ is in a node $u\ne v$ with no agent that does not belong to $G$.
				\end{itemize}
				Let $\overline{G}$ be the set of agents of $G$ for which $bool=true$ and $|s|\leq i$ in their respective call to function ${\tt Communicate}(i,s,bool)$. The execution of function ${\tt Communicate}(i,s,bool)$ by each agent of $G$ is completed at node $v$, in round $t+5i{\tt T}({\tt EXPLO}(N))$, and its return value is a couple $(l,k)$ having the following properties.
				\begin{itemize}
					\item If $\overline{G}\ne\emptyset$, then $l=\sigma 1^{i-|\sigma|}$, where $\sigma$ is the lexicographically smallest input parameter $s$ used by an agent of $\overline{G}$, and $k$ is the number of agents belonging to $\overline{G}$ for which the input parameter $s=\sigma$.
					\item Otherwise, $l=1^i$ and $k=1$.
				\end{itemize}
			\end{lemma}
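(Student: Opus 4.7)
The plan is to analyze the algorithm one iteration of the for-loop at a time; call these iterations \emph{steps} and set $T = {\tt T}({\tt EXPLO}(N))$. First, I check the time and position claim. Whichever branch an agent takes, its step executes the pattern ``wait, ${\tt EXPLO}(N)$, wait'' with wait-durations $(T,3T)$ or $(3T,T)$, each totalling $5T$ rounds; since ${\tt EXPLO}(N)$ returns to its starting node, every agent of $G$ is back at $v$ at the end of each step, and hence at $v$ in round $t+5iT$ after the $i$-th step.

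Next, I analyze a single step $j$ in isolation. Let $A_j$ be the set of agents of $G$ that at the start of step $j$ satisfy $\mathrm{participate}=\mathrm{true}$, $j\le|s|$, and $s[j]=0$ (they take the ``alpha'' branch), and let $B_j = G\setminus A_j$ (``beta''). All agents of $A_j$ execute ${\tt EXPLO}(N)$ from $v$ in the same round and, by determinism, form a single pack; similarly for $B_j$. Within step $j$, the alpha pack is away from $v$ during a block of $T$ rounds in which the beta pack sits at $v$, and symmetrically for the beta pack. By the invisibility hypothesis, each pack visits some $u\ne v$ carrying no outside agent, and there ${\tt CurCard}$ equals the pack's size; moreover ${\tt CurCard}$ along the trip is always at least that size (packs stay together). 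Plugging these values into the code yields a clean uniform conclusion: if $A_j\ne\emptyset$, then at the end of step $j$ every agent has appended $0$ to $l$ and updated its stored $k$ to $|A_j|$, and every agent of $B_j$ has also set $\mathrm{participate}\gets\mathrm{false}$; if $A_j=\emptyset$, every agent has appended $1$ to $l$ and changed nothing else. The degenerate cases $|G|=1$ and $B_j=\emptyset$ are absorbed by the $c=1$ and $c>1$ guards.

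I then proceed by induction on $j$. Let $\sigma$ be the lexicographically smallest input $s$ used by an agent of $\overline{G}$, and put $P_j = \sigma[1..\min(j,|\sigma|)]$. Inductive claim: after step $j$, every agent holds the same $l$ with $l = P_j 1^{\max(0,j-|\sigma|)}$, and the set of agents still having $\mathrm{participate}=\mathrm{true}$ is $L_j = \{A\in\overline{G}: s_A \text{ has }P_j\text{ as a prefix}\}$. The main case analysis is as follows. If $j\le|\sigma|$ and $\sigma[j]=0$, the agent with input $\sigma$ is still in $L_{j-1}$ (IH) and belongs to $A_j$, so $A_j\ne\emptyset$, and the new survivors are those in $L_{j-1}$ with $s[j]=0$, i.e.\ those whose input has prefix $\sigma[1..j]$. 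If $j\le|\sigma|$ and $\sigma[j]=1$, no $s'\in L_{j-1}$ can have $s'[j]=0$ (such an $s'$ would be lex-strictly smaller than $\sigma$, contradicting minimality), so $A_j=\emptyset$ and $l[j]=1=\sigma[j]$. Proposition~\ref{pro:code} is pivotal at $j=|\sigma|$: since no ${\tt code}$-string is a prefix of another, the only agent of $\overline{G}$ whose input has $\sigma$ as a prefix is the one with input $\sigma$ itself, so $L_{|\sigma|}=\{A\in\overline{G}: s_A=\sigma\}$, and for all $j>|\sigma|$ every survivor has $|s_A|<j$, forcing $A_j=\emptyset$ and $l[j]=1$.

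Putting everything together gives $l=\sigma 1^{i-|\sigma|}$. Because every ${\tt code}$-string ends in $01$, the step $j^*=|\sigma|-1$ satisfies $\sigma[j^*]=0$ and is the largest step with $A_{j^*}\ne\emptyset$; at that step $A_{j^*}=L_{j^*}=\{A\in\overline{G}: s_A=\sigma\}$, and after it $k$ is never touched again, so the returned $k$ equals the number of agents of $\overline{G}$ with input $\sigma$. If instead $\overline{G}=\emptyset$, then $A_j=\emptyset$ at every step, whence $l=1^i$ and $k$ keeps its initial value $1$. The main obstacle is the step-level analysis of the second paragraph: one must extract from a single agent's sequence of ${\tt CurCard}$ readings the global information ``some agent of my pack had $s[j]=0$'' without being fooled by outside agents or by the two packs' interleaved movements, which is exactly where the invisibility hypothesis is used in full force together with the synchronization induced by the $(T,3T)$ vs.\ $(3T,T)$ wait-pattern.
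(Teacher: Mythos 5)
Your proposal is correct and follows essentially the same route as the paper's proof: a fixed-duration step-by-step timing argument, a single-step analysis showing that the alpha/beta packs' disjoint ${\tt EXPLO}$ windows plus the invisibility hypothesis make the ${\tt CurCard}$ minima equal to the pack sizes, an induction tracking the common prefix of $\sigma$ and the surviving participants (the paper's property $\Psi(p)$), and the observation that the last step with a nonempty alpha pack is step $|\sigma|-1$, which pins down $k$ via Proposition~\ref{pro:code}. No gaps of substance.
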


			\begin{proof}
				In this proof, each time we refer to a specific line, it is one of Algorithm~\ref{alg:com}, and thus we omit to mention it, in order to facilitate the reading.
				In view of Algorithm~\ref{alg:com} and the first two conditions of the lemma, we know that for each $1\leq j\leq i$ the execution by each agent of $G$ of the $j$th step of function ${\tt Communicate}$ is started (resp. completed) at node $v$ in round $t+5(j-1){\tt T}({\tt EXPLO}(N))$ (resp. in round $t+5j{\tt T}({\tt EXPLO}(N))$). Note that in the case where $\overline{G}=\emptyset$, all agents always stay together from round $t$ to $t+5i{\tt T}({\tt EXPLO}(N))$. Indeed, in each step it executes, every agent applies the instructions of lines~\ref{firstlinebeta} to~\ref{lastlinebeta} because its variable $participate$ is necessarily set to false (cf. lines~\ref{comsetdeb} to~\ref{comsetfin}) given that $\overline{G}=\emptyset$. By the third condition of the lemma, this implies that in the execution of each step $1\leq j\leq i$ by every agent, the condition of line~\ref{comtogether} evaluates to true, $l[j]$ is assigned the bit $1$, while $k$ remains unchanged. Hence, when the execution of step $i$ is completed by an agent in round $t+5i{\tt T}({\tt EXPLO}(N))$, its variable $l$ is equal to $1^i$ and its variable $k$ has still the same value as in round $t$ i.e., $1$. Therefore the lemma is true when $\overline{G}=\emptyset$.

				Now, consider the complementary case where $\overline{G}\ne\emptyset$. For any binary string $w$, denote by $\overline{G}_w$ the set of all agents of $\overline{G}$ for which the input parameter $s$ is prefixed by $w$. To analyse properly the current case, we introduce the property $\Psi(p)$ which involves $\overline{G}_w$. We say that $\Psi(p)$ holds iff for each agent the following three conditions are met when its execution of step $p$ is completed in round $t+5p{\tt T}({\tt EXPLO}(N))$: $(1)$ its variable $l$ is equal to $\sigma[1,\min\{p,|\sigma|\}]1^{\max\{0,p-|\sigma|\}}$, $(2)$ if $1\leq p\leq |\sigma|$ and $\sigma[p]=0$, then its variable $k$ is equal to the number of agents belonging to $\overline{G}_{\sigma[1,\min\{p,|\sigma|\}]}$, and $(3)$ its variable $participate$ is equal to $true$ iff the agent belongs to $\overline{G}_{\sigma[1,\min\{p,|\sigma|\}]}$.

				To proceed further, we need to prove the following claim.

				\begin{claim}
					\label{claimstrong}
					Property $\Psi(p)$ holds for every $0\leq p\leq i$.
				\end{claim}

				\begin{proofclaim}
					We prove the claim by induction on $p$. Note that $\Psi(0)$ is satisfied. Thus, let us assume that there exists a non negative integer $p<i$ such that $\Psi(p)$ is true and let us prove that $\Psi(p+1)$ is true as well.

					First assume that $|\sigma|\leq p$ or $\sigma[p+1]=1$. By the inductive hypothesis, we know that during the execution by an agent $A$ of step $p+1$, the condition of line~\ref{condition} evaluates to true iff agent $A$ belongs to $\overline{G}_{\sigma[1,\min\{p,|\sigma|\}]}$ while its input parameter $s$ has length at least $p+1$ and has bit $0$ in position $p+1$. Note that if $|\sigma|\leq p$, an agent cannot belong to $\overline{G}_{\sigma[1,\min\{p,|\sigma|\}]}=\overline{G}_{\sigma}$ while having an input parameter $s$ of length at least $p+1$ because, according to Proposition~\ref{pro:code}, no input parameter $s$ different from $\sigma$ can be prefixed by $\sigma$ (by the second condition of the lemma, each input parameter $s$ is the image of a binary string under function ${\tt code}$). Also note that if $\sigma[p+1]=1$, then the condition of line~\ref{condition} evaluates to true in an execution of step $p+1$ by an agent $A$ only if $A$ belongs to $\overline{G}_{\sigma[1,p]0}$: however this set is empty, as otherwise we would get a contradiction with the definition of $\sigma$. Hence, the condition of line~\ref{condition} evaluates to false during the execution by every agent of step $p+1$. From round $t+5p{\tt T}({\tt EXPLO}(N))$ to round $t+5(p+1){\tt T}({\tt EXPLO}(N))$, all the agents execute the else block starting at line~\ref{firstlinebeta}: in particular they are always together when executing lines~\ref{firstlinebeta} to~\ref{lastlinebeta}. By the third condition of the lemma, it follows that the condition of line~\ref{comtogether} evaluates to true during the execution by every agent of step $p+1$. Thus, in view of the first condition of property $\Psi(p)$, if $p+1>|\sigma|$ (resp. $p+1\leq|\sigma|$) then the variable $l$ of each agent is equal to $\sigma[1,|\sigma|]1^{(p-|\sigma|)+1}$ (resp. $\sigma[1,p]1=\sigma[1,p+1]$) when its execution of step $p+1$ is completed in round $t+5(p+1){\tt T}({\tt EXPLO}(N))$. This implies that the variable $l$ of each agent is, at that point, equal to $\sigma[1,\min\{p+1,|\sigma|\}]1^{\max\{0,p+1-|\sigma|\}}$, and thus the first condition of property $\Psi(p+1)$ holds. Moreover, the second condition of property $\Psi(p+1)$ directly follows from the assumption that $|\sigma|\leq p$ or $\sigma[p+1]=1$. Finally, we know that if $|\sigma|\leq p$, then $\overline{G}_{\sigma[1,\min\{p,|\sigma|\}]}=\overline{G}_{\sigma[1,\min\{p+1,|\sigma|\}]}=\overline{G}_{\sigma}$. We also know that if $\sigma[p+1]=1$, then $\overline{G}_{\sigma[1,p]}=\overline{G}_{\sigma[1,p]1}=\overline{G}_{\sigma[1,p+1]}$: indeed no agent can have a parameter $s$ that is a prefix of $\sigma$ by Proposition~\ref{pro:code}, and according to the explanations given above $\overline{G}_{\sigma[1,p]0}=\emptyset$ if $\sigma[p+1]=1$. As a result, $\overline{G}_{\sigma[1,\min\{p,|\sigma|\}]}=\overline{G}_{\sigma[1,\min\{p+1,|\sigma|\}]}$. Hence, in view of the fact that no agent changes its variable $participate$ during its execution of step $p+1$, the third condition of property $\Psi(p+1)$ holds as well. This closes the analysis when $|\sigma|\leq p$ or $\sigma[p+1]=1$.

					Now assume that we are in the complementary case in which $\sigma[p+1]=0$ and $|\sigma|\geq p+1$. As mentioned earlier, during the execution by an agent $A$ of step $p+1$, the condition of line~\ref{condition} evaluates to true iff agent $A$ belongs to $\overline{G}_{\sigma[1,\min\{p,|\sigma|\}]}$ while its input parameter $s$ has length at least $p+1$ and has bit $0$ in position $p+1$. As $\sigma[p+1]=0$ and $|\sigma|\geq p+1$, we know that $\sigma[1,\min\{p,|\sigma|\}]0=\sigma[1,p+1]$, and thus the condition of line~\ref{condition} evaluates to true during the execution by an agent $A$ of step $p+1$ iff agent $A$ belongs to $\overline{G}_{\sigma[1,p+1]}$. (Note that since  $\sigma[p+1]=0$ and $|\sigma|\geq p+1$, there is at least one agent in $\overline{G}_{\sigma[1,p+1]}$). Hence, from round $t+5p{\tt T}({\tt EXPLO}(N))$ to round $t+5(p+1){\tt T}({\tt EXPLO}(N))$, all the agents of $\overline{G}_{\sigma[1,p+1]}$ (resp. $G\setminus\overline{G}_{\sigma[1,p+1]}$ if any) execute the then block starting at line~\ref{firstlinealpha} (resp. the else block starting at line~\ref{firstlinebeta}): in particular they are always together when executing lines~\ref{firstlinealpha} to~\ref{lastlinealpha} (resp. lines~\ref{firstlinebeta} to~\ref{lastlinebeta}).
					From the waiting periods of lines~\ref{firstlinealpha},~\ref{lastlinealpha},~\ref{firstlinebeta} and~\ref{lastlinebeta}, it follows that during the entire execution of procedure ${\tt EXPLO}(N)$ at line~\ref{COM:process0} (resp. at line~\ref{COM:process1}) by an agent of $\overline{G}_{\sigma[1,p+1]}$ (resp. $G\setminus\overline{G}_{\sigma[1,p+1]}$) in step $p+1$, the agents of $G\setminus\overline{G}_{\sigma[1,p+1]}$ (resp. $\overline{G}_{\sigma[1,p+1]}$) are idle at node $v$. All of this, the inductive hypothesis and the third condition of the lemma imply that when the execution of step $p+1$ is completed by an agent of $\overline{G}_{\sigma[1,p+1]}$ in round $t+5(p+1){\tt T}({\tt EXPLO}(N))$, its variable $k$, its variable $l$, and its variable $participate$ are respectively equal to the number of agents in $\overline{G}_{\sigma[1,p+1]}$, to $l[1,p]0=\sigma[1,p+1]$ and to the boolean true (its variable $participate$ remains unchanged in step $p+1$). It also follows that the variable $c'$ is assigned the number of agents in $G\setminus\overline{G}_{\sigma[1,p+1]}$ in the execution by every agent of $G\setminus\overline{G}_{\sigma[1,p+1]}$ of line~\ref{receive} in step $p+1$. Note that since there is at least one agent in $\overline{G}_{\sigma[1,p+1]}$, we have $c\geq2$ and $c'\ne c$ for each agent of $G\setminus\overline{G}_{\sigma[1,p+1]}$ when it executes line~\ref{comtogether} in step $p+1$ (the condition of this line then evaluates to false). Hence, by lines~\ref{else2} to~\ref{fincomtogether}, we know that when the execution of step $p+1$ is completed by an agent of $G\setminus\overline{G}_{\sigma[1,p+1]}$ in round $t+5(p+1){\tt T}({\tt EXPLO}(N))$, its variable $k$, its variable $l$, and its variable $participate$ are respectively equal to the number of agents in $\overline{G}_{\sigma[1,p+1]}$, to $l[1,p]0=\sigma[1,p+1]$ and to the boolean false. As a result, when $\sigma[p+1]=0$ and $|\sigma|\geq p+1$, $\Psi(p+1)$ holds. This ends the proof of the claim by induction.
				\end{proofclaim}

				By Claim~\ref{claimstrong}, property $\Psi(i)$ holds. This means that in the execution of function ${\tt Communicate}$ by every agent we have $l=\sigma[1,\min\{i,|\sigma|\}]1^{max\{0,i-|\sigma|\}}$ when the function is completed. Since $\overline{G}\ne\emptyset$, we know that $i\geq |\sigma|$, and thus, for each agent, $l=\sigma 1^{i-|\sigma|}$ at that time. Consequently, all that remains to be shown is that when the execution of step $p+1$ is completed by an agent in round $t+5i{\tt T}({\tt EXPLO}(N))$, the variable $k$ of the agent is equal to the number of agents belonging to $\overline{G}$ for which the input parameter $s=\sigma$.

				By the third conditions of properties $\Psi(|\sigma|+1),\ldots,\Psi(i)$ and the condition of line~\ref{condition}, the agents stay together from the beginning of their execution of step $|\sigma|+1$ till the end of their execution of step $i$, and thus during this period the variable $k$ of each agent remains unchanged. Moreover, we know by Proposition~\ref{pro:code} that every input parameter $s$ is of even length, which means that there is no agent having an input parameter $s$ that is equal to $\sigma[1,|\sigma|-1]$. We also know that there is no agent of $\overline{G}$ having a parameter $s$ prefixed by $\sigma[1,|\sigma|-1]0$, as otherwise we get a contradiction with the definition of $\sigma$ (its last bit is $1$). As a result, we have $\overline{G}_{\sigma[1,|\sigma|-1]}=\overline{G}_{\sigma}$, and since $\sigma[|\sigma|]=1$, we know in view of the third condition of property $\Psi(|\sigma|)$ and the condition of line~\ref{condition}, that all the agents stay together during step $|\sigma|$. Hence, the variable $k$ of each agent remains unchanged from the beginning of its execution of step $|\sigma|$ till the end of its execution of step $i$. Note that in view of the definition of  function ${\tt code}$, we have $\sigma[|\sigma|-1]=0$. Also note that  $\Psi(|\sigma|-1)$ holds by Claim~\ref{claimstrong}. It follows that when the execution of step $|\sigma|-1$ is completed by an agent $A$ in round $t+5(|\sigma|-1){\tt T}({\tt EXPLO}(N))$ (and, by extension, when the execution of step $i$ is completed by agent $A$ in round $t+5i{\tt T}({\tt EXPLO}(N))$) the variable $k$ of agent $A$ is equal to the number of agents in $\overline{G}_{\sigma[1,|\sigma|-1]}$. However, we have shown above that $\overline{G}_{\sigma[1,|\sigma|-1]}=\overline{G}_{\sigma}$. Since, by Proposition~\ref{pro:code}, no parameter $s$ distinct from $\sigma$ can be prefixed by $\sigma$, $\overline{G}_{\sigma}$ is precisely the set of the agents for which the input parameter $s$ is equal to $\sigma$. Hence, at the end of its execution of step $i$, the variable $k$ of an agent is equal to the number of agents of $\overline{G}$ for which the input parameter $s$ is equal to $\sigma$, which concludes the proof of the lemma.
			\end{proof}

			Now we can turn attention to Algorithm ${\tt GatherKnownUpperBound}$. All further results of this section are stated assuming that the algorithm that is executed by an agent when it wakes up is ${\tt GatherKnownUpperBound}$ (cf. Algorithm~\ref{alg:GKUB}). Every execution by an agent of the repeat loop of Algorithm~\ref{alg:GKUB} will be viewed as a series of consecutive phases $i=1,2,3,\ldots$, where phase $i$ is the part of its execution corresponding to the $i$th iteration of the repeat loop. The part of the execution by an agent which is before phase $1$ will be called phase $0$. Given an agent $A$, we denote by $t_{A,i}$ the round, if any, when agent $A$ starts executing phase $i$.

			Lemma~\ref{lem:invisible} will allow us to use Lemma~\ref{lem:com} in the proof of Lemma~\ref{lem:crucial}. At a high level, Lemma~\ref{lem:invisible} exploits the possible ``invisibilities'' presented in the intuitive explanations of Section~\ref{intuitionnconnu}, and will contribute to showing that function ${\tt Communicate}$ is used properly in every execution of Algorithm ${\tt GatherKnownUpperBound}$.

			\begin{lemma}
				\label{lem:invisible}
				Consider the set $G$ of all agents located at a given node $v$ in a given round $t$ and assume that the following three conditions are satisfied:
				\begin{itemize}
					\item All agents of $G$ start executing phase $i\geq1$ in round $t$.
					\item For every couple of agents $X$ and $Y$ (not necessarily in $G$), $|t_{X,i}-t_{Y,i}|\leq\mathcal{D}_{i}$
					\item The condition of line~\ref{line:firstc} in Algorithm~\ref{alg:GKUB} evaluates to false in the execution by every agent (in $G$ or outside of $G$) of phase $i$.
				\end{itemize}
				We have the following two properties.
				\begin{itemize}
					\item Property~1: For every agent $X$, $|t-t_{X,i}|\leq\frac{{\tt T}({\tt EXPLO}(N))}{2}$.
					\item Property~2: There is only one agent in $G$, or during each call to procedure ${\tt EXPLO}(N)$ made by every agent $A$ of $G$ when executing function ${\tt Communicate}$ in phase $i$, there is a round when agent $A$ is in a node $u\ne v$ with no agent that does not belong to $G$.
				\end{itemize}
			\end{lemma}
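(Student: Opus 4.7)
The plan is to establish Property~1 by a direct contradiction argument, and then to bootstrap it into Property~2 via an invisibility-transfer analysis that reuses the third hypothesis.

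For Property~1, I would proceed by contradiction. Suppose some agent $X$ satisfies $|t-t_{X,i}|>{\tt T}({\tt EXPLO}(N))/2$; without loss of generality assume $t_{X,i}-t=\Delta>{\tt T}({\tt EXPLO}(N))/2$ (the reverse case is handled symmetrically, swapping the roles of $G$ and $X$). Let $v_X$ be the node at which $X$ begins phase~$i$. All agents of $G$ move as a block from $v$ and execute line~\ref{firstEXphase}'s ${\tt EXPLO}(N)$ over rounds $[t+\mathcal{D}_i,\,t+\mathcal{D}_i+{\tt T}({\tt EXPLO}(N))]$, whose effective half visits every node during its first ${\tt T}({\tt EXPLO}(N))/2$ rounds. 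Let $s^*\in[0,{\tt T}({\tt EXPLO}(N))/2]$ be the internal time at which $G$ first reaches $v_X$. The round $t+\mathcal{D}_i+s^*$ lies strictly inside $X$'s line~\ref{FirstWait} wait $[t_{X,i},\,t_{X,i}+\mathcal{D}_i)$: indeed $s^*\le{\tt T}({\tt EXPLO}(N))/2<\Delta\le\mathcal{D}_i$ gives both $\mathcal{D}_i+s^*\ge\Delta$ and $s^*<\Delta$. So $G$ and $X$ collide at $v_X$, pushing ${\tt CurCard}$ above $c$ for $G$, which triggers the interrupt guarding the begin-end block starting at line~\ref{alg:fbegin} and makes the condition of line~\ref{line:firstc} true for $G$, contradicting the third hypothesis.

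For Property~2, assume $|G|\ge2$ and fix a call to ${\tt EXPLO}(N)$ made by some $A\in G$ inside Communicate, say during its step~$j$ and within one of the two branches (``if'' or ``else''). By Property~1, $|\delta_Y|\le{\tt T}({\tt EXPLO}(N))/2$ for every agent $Y$, where $\delta_Y=t_{Y,i}-t$. Within a single step of Communicate the two branches' EXPLO windows sit $2{\tt T}({\tt EXPLO}(N))$ apart, so $|\delta_Y|\le{\tt T}({\tt EXPLO}(N))/2$ forces an outside $Y$'s EXPLO to overlap $A$'s if and only if $Y$ lies in the same branch as $A$ at step~$j$. For a same-branch outside $Y$, $A$'s and $Y$'s EXPLOs in Communicate use exactly the same UXS from $v$ and $v_Y$ respectively, with exactly the same relative shift $\delta_Y$ as in line~\ref{firstEXphase}; by the invisibility granted by the third hypothesis (no ${\tt CurCard}>c$ anywhere inside the line~\ref{alg:fbegin}--\ref{alg:ebegin} block), $A$'s and $Y$'s positions disagree at every round of $A$'s Communicate EXPLO. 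For an opposite-branch outside $Y$, the agent $Y$ remains idle at $v_Y$ throughout $A$'s EXPLO. So to find a round where $A$ lies at some $u\ne v$ with no outside agent present, it suffices to locate a node $u^*\notin F$, where $F=\{v\}\cup\{v_Y:Y\text{ outside }G,\ Y\text{ in the branch opposite to }A\text{ at step }j\}$. Outside groups occupy pairwise distinct starting nodes, and $|G|\ge2$ forces the number of outside groups to be at most $n-2$, whence $|F|\le n-1<n$. Such a $u^*$ therefore exists, and since the effective half of $A$'s EXPLO visits every node of the graph, $A$ reaches $u^*$ at some round of this EXPLO; at that round no outside agent is co-located with $A$, as required.

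The main obstacle is the invisibility transfer from line~\ref{firstEXphase} of the begin-end block to Communicate's EXPLO windows for same-branch agents: one must verify round-for-round that the position functions of $G$ and $Y$ in Communicate exactly match those of line~\ref{firstEXphase} (identical UXS starting nodes, identical offset $\delta_Y$, identical wait/EXPLO transitions), so that the non-meeting guaranteed by the third hypothesis persists verbatim. A secondary subtlety is that agents of $G$ itself may split between the two branches of step~$j$; those opposite to $A$ sit at $v$ throughout $A$'s EXPLO, which is harmless since they remain inside $G$ and therefore do not interfere with Property~2's requirement.
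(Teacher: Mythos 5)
Your proof is correct and follows essentially the same route as the paper's: Property~1 via the ``whoever is ahead finishes the effective part of ${\tt EXPLO}(N)$ while the other still waits'' collision argument, and Property~2 via the synchronization $s_Y-s_A=t_{Y,i}-t_{A,i}$ forcing same step and same branch, the transfer of the non-meeting guarantee from the line~\ref{firstEXphase} execution of ${\tt EXPLO}(N)$ to its exact time-translate inside ${\tt Communicate}$, and the counting argument (at most $n-2$ agents outside $G$) yielding a node $u^*\ne v$ free of opposite-branch waiters. The only difference is presentational: the paper argues Property~2 by contradiction, first fixing a node that is nobody's phase-$i$ starting node and showing any outside agent met there must be executing the same-branch ${\tt EXPLO}(N)$, whereas you argue directly by classifying outside agents by branch and then choosing the free node; the substance is identical.
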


			\begin{proof}
				Let $A$ (resp. $X$) be an agent of $G$ (resp. an agent that does not belong to $G$).
				We first prove property~1. Assume, by contradiction, that the three conditions of the statement are satisfied and $\frac{{\tt T}({\tt EXPLO}(N))}{2}<|t-t_{X,i}|\leq\mathcal{D}_{i}$. It follows that either agent $A$ has time to execute entirely the effective part of the first execution of ${\tt EXPLO}(N)$ of phase $i$  together with the other agents of $G$ (cf. line~\ref{firstEXphase} of Algorithm~\ref{alg:GKUB}) while $X$ is waiting in phase $i$ (cf. line~\ref{FirstWait} of Algorithm~\ref{alg:GKUB}), or agent $X$ has time to execute entirely the effective part of the first execution of ${\tt EXPLO}(N)$ of phase $i$ while $A$ is waiting with the other agent of $G$ in phase $i$. Hence, the condition of line~\ref{line:firstc} in Algorithm~\ref{alg:GKUB} evaluates to true in the execution by agent $A$ of phase $i$. This is a contradiction, which proves the first property of this lemma.

				Now, we prove property 2. Note that this property is necessarily true if $G$ contains at most one agent. Suppose by contradiction that the three conditions of the statement are satisfied, $G$ contains at least two agents, and there exists a call to procedure ${\tt EXPLO}(N)$ made by agent $A$ (when executing function ${\tt Communicate}$ in phase $i$), during which agent $A$ is with an agent that does not belong to $G$, each time it is at a node $u\ne v$.

				Since $G$ contains at least two agents, there exists a node $x$ in the graph from which no agent starts executing phase $i$ of Algorithm~${\tt GatherKnownUpperBound}$ because the number of agents is at most the size of the graph. Moreover, in view of the fact that the third condition of the statement is satisfied, we know that each agent starts function ${\tt Communicate}$ in phase $i$ and each of its steps at the node (different from $x$) from which it started phase $i$. This implies that there is a positive integer $k$ at most equal to $i$ such that agent $A$ meets an agent $B$, not belonging to $G$, at node $x$ in some round $t_x$ when agent $A$ is executing ${\tt EXPLO}(N)$ in the $k$th step of ${\tt Communicate}$. Denote by $s_A$ (resp. $s_B$) the round when agent $A$ (resp. agent $B$) starts the $k$th step of ${\tt Communicate}$, and denote by $v'$ the node from which agent $B$ starts phase $i$ as well as the $i$ steps of ${\tt Communicate}$ called in this phase. Note that since each step of ${\tt Communicate}$ lasts exactly $5{\tt T}({\tt EXPLO}(N))$ rounds (cf. lines~\ref{firstlinealpha} to~\ref{lastlinealpha} and lines~\ref{firstlinebeta} to~\ref{lastlinebeta} of Algorithm~\ref{alg:com}) and the condition of line~\ref{line:firstc} in Algorithm~\ref{alg:GKUB} evaluates to false in the execution of phase $i$ by every agent (in $G$ or outside of $G$), we have the following claim.

				\begin{claim}
					\label{N:claim1}
					$s_B-s_A=t_{B,i}-t_{A,i}$.
				\end{claim}

				Recall that agent $A$ is executing ${\tt EXPLO}(N)$ in the $k$th step of ${\tt Communicate}$ when it meets agent $B$ at node $x$ in round $t_x$. According to Claim~\ref{N:claim1}, by the first property and the fact that within each step of ${\tt Communicate}$, an execution of ${\tt EXPLO}(N)$ is directly preceded  and followed by a waiting period of length at least ${\tt T}({\tt EXPLO}(N))$, we know that agent $B$ is also processing the $k$th step of ${\tt Communicate}$ in round $t_x$. Note that in view of the definition of node $x$, the node $v'$ from which agent $B$ started phase $i$ of Algorithm~${\tt GatherKnownUpperBound}$ and the $k$th step of function ${\tt Communicate}$ is different from node $x$. This implies that agent $B$ is executing ${\tt EXPLO}(N)$ in round $t_x$, and thus either each of agents $A$ and $B$ is executing ${\tt EXPLO}(N)$ at line~\ref{COM:process0} of Algorithm~\ref{alg:com} in the $k$th step of function ${\tt Communicate}$ in round $t_x$, or each of agents $A$ and $B$ is executing ${\tt EXPLO}(N)$ at line~\ref{COM:process1} of Algorithm~\ref{alg:com} in the $k$th step of function ${\tt Communicate}$ in round $t_x$. Indeed, by Claim~\ref{N:claim1} and by the first property, we have $|s_B-s_A|\leq \frac{{\tt T}({\tt EXPLO}(N))}{2}$.  Since within every step a possible execution of line~\ref{COM:process0} (resp. line~\ref{COM:process1}) in Algorithm~\ref{alg:com} occurs just after a waiting period of length exactly ${\tt T}({\tt EXPLO}(N))$ (resp. $3{\tt T}({\tt EXPLO}(N))$) starting at the beginning of the step, in round $t_x$ we cannot have an agent that is executing ${\tt EXPLO}(N)$ at line~\ref{COM:process0} of Algorithm~\ref{alg:com} in the $k$th step, while the other agent is executing ${\tt EXPLO}(N)$ at line~\ref{COM:process1} of Algorithm~\ref{alg:com} in the $k$th step.

				Consider only the case where each of agents $A$ and $B$ is executing ${\tt EXPLO}(N)$ at line~\ref{COM:process0} of Algorithm~\ref{alg:com} in the $k$th step in round $t_x$, as the complementary case can be analyzed in a similar way.

				When agent $A$ (resp. $B$) starts procedure ${\tt EXPLO}(N)$ from node $v$ (resp. node $v'$) in round $s_A+{\tt T}({\tt EXPLO}(N))$ (resp. $s_B+{\tt T}({\tt EXPLO}(N))$), agents $A$ and $B$ meet at node $x$ in round $t_x=s_A+{\tt T}({\tt EXPLO}(N))+\zeta$, where $\zeta$ is some non negative integer less than ${\tt T}({\tt EXPLO}(N))$. However, by Claim~\ref{N:claim1}, $s_B-s_A=t_{B,i}-t_{A,i}$. This implies that when each agent of $G$ (resp. agent $B$) starts the first execution of procedure ${\tt EXPLO}(N)$ of phase $i$ from node $v$ in round $t_{A,i}+\mathcal{D}_{i}$ (resp. from node $v'$ in round $t_{B,i}+\mathcal{D}_{i}$), agent $B$ and all the agents of $G$ are together at node $x$ in round $t_{A,i}+\mathcal{D}_{i}+\zeta<t_{A,i}+\mathcal{D}_{i}+{\tt T}({\tt EXPLO}(N))$. Hence, the third condition of the statement of the theorem is not satisfied, which is a contradiction. This proves property 2 and concludes the proof of the lemma.
			\end{proof}

			The next lemma proves several properties that will serve directly to establish correctness and analyze the complexity of Algorithm ${\tt GatherKnownUpperBound}$. The lemma
			makes use of the notation $\ell$ which refers to the length of the binary representation of the smallest label among the agents. It also makes use of the set $\Phi_i$ which is defined as follows. Given a positive integer $i$, $\Phi_i$ is the set of couples $(v,t)$ such that $(v,t)\in \Phi_i$ iff there exists an agent that starts executing phase $i$ from node $v$ in round $t$. The cardinality of $\Phi_i$ is denoted by $|\Phi_i|$.

			\begin{lemma}
				\label{lem:crucial}
				Let $i$ be a non negative integer such that no agent declares that gathering is achieved before executing phase $i$. For every couple of agents $A$ and $B$, we have the following two properties:
				\begin{itemize}
					\item $P_1(i)$: $|t_{A,i}-t_{B,i}|\leq\mathcal{D}_{i}$.
					\item $P_2(i)$: if $t_{A,i+1}$ and $t_{B,i+1}$ exist, and agents $A$ and $B$ are at the same node in round $t_{A,i+1}$, then $t_{A,i+1}=t_{B,i+1}$.
				\end{itemize}
				Moreover, denote by $F$ the earliest agent (or one of the earliest agents) that starts executing phase $i$. At least one of the following three properties holds.
				\begin{itemize}
					\item $P_3(i)$: $|\Phi_{i+1}|\leq \lfloor\frac{|\Phi_i|}{2}\rfloor$. Furthermore, either every agent starts phase $i+1$ during the interval $\{t_{F,i}+\mathcal{D}_i+2\mathcal{D}_{i+1},\ldots,t_{F,i}+2\mathcal{D}_i+2\mathcal{D}_{i+1}+3{\tt T}({\tt EXPLO}(N))\}$, or every agent starts phase $i+1$ during the interval $\{t_{F,i}+2\mathcal{D}_{i+1}+\mathcal{D}_i+(5i+4){\tt T}({\tt EXPLO}(N)),\ldots,t_{F,i}+2\mathcal{D}_{i+1}+2\mathcal{D}_i+(5i+6){\tt T}({\tt EXPLO}(N))\}$.
					\item $P_4(i)$: All agents declare that gathering is achieved at the same node in round $t_{F,i}+\mathcal{D}_{i+1}+2\mathcal{D}_{i}+(5i+6){\tt T}({\tt EXPLO}(N))$. Furthermore, there exists a label $L$ belonging to an agent of the team such that the variable $\lambda$ of every agent in Algorithm~\ref{alg:GKUB} is equal to $L$ in round $t_{F,i}+\mathcal{D}_{i+1}+2\mathcal{D}_{i}+(5i+6){\tt T}({\tt EXPLO}(N))$.
					\item $P_5(i)$: $i<2\ell+2$ and every agent $A$ starts executing phase $i+1$ after having spent exactly $2{\tt T}({\tt EXPLO}(N))$ rounds if $i=0$ (resp. exactly $\mathcal{D}_{i+1}+2\mathcal{D}_{i}+(5i+6){\tt T}({\tt EXPLO}(N))$ rounds if $i>0$) in phase $i$.
				\end{itemize}
			\end{lemma}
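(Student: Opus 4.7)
I would prove the lemma by strong induction on $i$, as the five properties are tightly coupled across consecutive phases. For the base case $i=0$, phase $0$ is just line~\ref{wakeup} (one execution of ${\tt EXPLO}(N)$) followed by a wait of ${\tt T}({\tt EXPLO}(N))$ rounds, for a total of $2{\tt T}({\tt EXPLO}(N))$ rounds independent of the agent. This gives $P_5(0)$ directly, and $0<2\ell+2$ is immediate. $P_1(0)$ follows from the fact that the earliest-woken agent's effective exploration visits every starting node within at most ${\tt T}({\tt EXPLO}(N))/2$ rounds, waking every still-dormant agent by then, so the spread of wake-up rounds is at most $\mathcal{D}_0$. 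Finally $P_2(0)$ holds vacuously: at round $t_{X,1}$ every agent is at its own (distinct) starting node after the backtrack of ${\tt EXPLO}$ and the subsequent idle period, so two agents can only share a node at that round if their starting positions coincide, which never happens initially.

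For the inductive step I fix $i\geq 1$ and assume the lemma at $i-1$. The hypothesis that no agent declares gathering before phase $i$ eliminates $P_4(i-1)$, so either $P_3(i-1)$ or $P_5(i-1)$ holds; in both cases the declared interval for $t_{X,i}$ has length at most $\mathcal{D}_{i-1}\le\mathcal{D}_i$, which yields $P_1(i)$. With this synchronization, I can invoke Lemma~\ref{lem:invisible} on the set $G$ of agents starting phase $i$ at any given node $v$: it delivers both a tight start-time clustering and the ``no-contamination'' condition that Lemma~\ref{lem:com} requires as its third hypothesis.

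I then split on the behaviour of the first {\tt begin}--{\tt end} block of Algorithm~\ref{alg:GKUB} during phase $i$. If the test at line~\ref{line:firstc} fires for some agent, then by the symmetry of ${\tt EXPLO}(N)$ any two mutually visible groups must actually meet during the effective part of line~\ref{firstEXphase} or line~\ref{secondEX}; combined with the stabilization wait of line~\ref{waitaftermeet} and the final wait of line~\ref{lastwaiting}, every agent lands in the first interval of $P_3(i)$, with $|\Phi_{i+1}|\le\lfloor|\Phi_i|/2\rfloor$ because under the synchronization granted by $P_1(i)$ any remaining invisible group would still share the same timing skeleton and thus the same terminal round. Otherwise, Lemma~\ref{lem:invisible} legitimizes the application of Lemma~\ref{lem:com} to every group: all agents of the group exit line~\ref{calltocom} with the same pair $(l,k)$, and Proposition~\ref{pro:code} lets them decode from $l$ the label $\lambda$ of an actual agent of the group. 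The call ${\tt TZ}(\lambda)$ at line~\ref{alg:TZ} then splits into three sub-cases: (a) two groups with different $\lambda$'s meet, triggering line~\ref{line:secondc} and landing everyone in the second interval, giving $P_3(i)$; (b) there is a single group, so its cardinality does not change, $\lambda\neq0$, and all its agents declare gathering simultaneously at line~\ref{finsecondb}, giving $P_4(i)$; (c) groups stay pairwise invisible throughout ${\tt TZ}$, in which case every agent spends exactly $\mathcal{D}_{i+1}+2\mathcal{D}_i+(5i+6){\tt T}({\tt EXPLO}(N))$ rounds in phase $i$, giving $P_5(i)$. The bound $i<2\ell+2$ is forced in this last case, because for $i\ge 2\ell+2$ the encoded smallest label fits into the parameter of ${\tt Communicate}$, Lemma~\ref{lem:com} then assigns a distinct lex-smallest code to every group (since labels are globally unique across groups), and ${\tt TZ}$ would force at least one pair of groups to meet, contradicting the invisibility assumption.

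The main obstacle I anticipate is establishing $P_2(i)$ in tandem with the specific intervals in $P_3(i)$: showing that two agents sharing a node at the start of phase $i+1$ must have traversed identically timed sub-procedures during phase $i$ requires that the wait of line~\ref{waitaftermeet} (of length $\mathcal{D}_{i+1}$) be calibrated exactly to absorb the entire second {\tt begin}--{\tt end} block of any unmerged neighbour without overlapping the next phase, and that any agent triggering line~\ref{line:firstc} rules out coexistence with unmerged groups at the shared terminal round. Verifying these alignments is a careful but routine timing computation using the definition $\mathcal{D}_k=\mathcal{P}(N,k)+3(k+2){\tt T}({\tt EXPLO}(N))$, combined with property~1 of Lemma~\ref{lem:invisible}.
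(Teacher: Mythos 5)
Your overall strategy --- induction on $i$, the same base case, deriving $P_1(i)$ from $P_3(i-1)$ or $P_5(i-1)$, invoking Lemma~\ref{lem:invisible} to license Lemma~\ref{lem:com}, and splitting on whether the interruption conditions of lines~\ref{line:firstc} and~\ref{line:secondc} ever fire --- matches the paper's proof. There is, however, one concrete error in your case analysis. In your sub-case (b) you claim that having a single group implies $\lambda\neq 0$ and hence an immediate declaration, giving $P_4(i)$. This fails whenever $i<2\ell+2$: every string ${\tt code}(x_L)$ has length at least $2\ell+2$, so in the call to ${\tt Communicate}(i,\cdot,\cdot)$ the set $\overline{G}$ is empty, Lemma~\ref{lem:com} returns $l=1^i$, the test of line~\ref{conditionlabel} fails, $\lambda$ remains $0$, and the test of line~\ref{conditiondeclare} fails --- so even a fully gathered team does not declare and instead proceeds to phase $i+1$ (this is $P_5(i)$, not $P_4(i)$). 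Such a configuration genuinely occurs, since groups can merge down to a single group in an early phase while $i$ is still below $2\ell+2$. The paper avoids this by splitting first on $i<2\ell+2$ versus $i\geq 2\ell+2$, and only in the latter regime distinguishing $|\Phi_i|=1$ from $|\Phi_i|\geq 2$; your ``$i<2\ell+2$ is forced'' argument must be applied before, not after, the single-group case.

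A smaller imprecision: in sub-case (c) you assert that Lemma~\ref{lem:com} assigns a distinct lexicographically smallest code to every group. A group none of whose members has a code of length at most $i$ gets $l=1^i$ and $\lambda=0$, so two such groups run ${\tt TZ}$ with equal parameters and need not meet; moreover ${\tt TZ}$ only guarantees a timely meeting when one of the two parameters has a short enough representation. The contradiction still goes through, but only by explicitly pairing the group that contains the agent with the globally lexicographically smallest code of length at most $i$ (which decodes a nonzero $\lambda$ of length at most $i$) against an arbitrary other group, as the paper does; you should make that choice explicit.
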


			\begin{proof}
				Throughout this proof, each time we refer to a specific line, it is one of Algorithm~\ref{alg:GKUB}, and thus we omit to mention it, in order to facilitate the reading.

				The proof is by induction on $i$. Let us first consider the base case $i=0$.
				According to line~\ref{wakeup}, we know that the delay between the starting times of phase $0$ of any two agents is at most $\frac{{\tt T}({\tt EXPLO}(N))}{2}$ rounds. Indeed, as soon as the execution of the effective part of ${\tt EXPLO}(N)$ in phase $0$ is completed by the first woken up agent, all nodes have been visited at least once and there cannot remain any dormant agent in the network. Hence, property $P_1(0)$ is true.

				Concerning property $P_2(0)$, note that every agent starts phase $1$ from its initial node in view of the backtrack part of ${\tt EXPLO}(N)$. Moreover, since the delay between the starting times of phase $0$ of any two agents is at most $\frac{{\tt T}({\tt EXPLO}(N))}{2}$ rounds, we know that when an agent $A$ starts phase $1$, every other agent $B$ is waiting in its initial node ($B$ has at least started the waiting period in phase $0$ but it cannot have started any move in phase $1$ because in this phase, the first move is necessarily preceded by a waiting period of length at least $\mathcal{D}_1>\frac{{\tt T}({\tt EXPLO}(N))}{2}$). Thus, $P_2(0)$ is also true because the initial nodes of the agents are pairwise distinct. To conclude the base case, note that every agent $A$ starts phase $1$ in round $t_{A,0}+2 {\tt T}({\tt EXPLO}(N))$ in view of line~\ref{wakeup}. Hence, since $0<\ell$ we also know that $P_5(0)$ is satisfied. As a result, the theorem holds when $i=0$.

				Now, let us assume that there exists an integer $k\geq0$ such that the lemma is true when $i=k$, and let us prove that the lemma is still true when $i=k+1$ and no agent has declared the gathering achieved before executing phase $k+1$. We have the following claim.

				\begin{claim}
					\label{cP1}
					Property $P_1(k+1)$ is satisfied.
				\end{claim}

				\begin{proofclaim}
					Since, by assumption no agent declares the gathering achieved before executing phase $k+1$, we know that property $P_4(k)$ is not satisfied and thus property $P_3(k)$ or $P_5(k)$ is true. If property $P_5(k)$ is true, then agents $A$ and $B$ spend exactly the same time to execute phase $k$, and thus $|t_{A,k+1}-t_{B,k+1}|\leq\mathcal{D}_{k}\leq\mathcal{D}_{k+1}$ in view of property $P_1(k)$. If $P_5(k)$ is false and $P_3(k)$ is true, then $|t_{A,k+1}-t_{B,k+1}|\leq \mathcal{D}_k+3{\tt T}({\tt EXPLO}(N))\leq \mathcal{D}_{k+1}$. Hence $P_1(k+1)$ is true, which concludes the proof of this claim.
				\end{proofclaim}

				In the light of Claim~\ref{cP1}, it remains to show that property $P_2(k+1)$ and at least one property among $P_3(k+1)$, $P_4(k+1)$ and $P_5(k+1)$ hold. To this end, we introduce the property $\Gamma(x)$ that is satisfied iff there exists at least one execution of phase $x$ by some agent in which the condition of line~\ref{line:firstc} or line~\ref{line:secondc} evaluates to true. The rest of this proof is conducted by discussing the cases where $\Gamma(k+1)$ holds and where it does not. For each $(v,t)$ belonging to $\Phi_{k+1}$, denote by $G_{(v,t)}$ the group of agents that start executing phase $k+1$ from node $v$ in round $t$. Note that in view of property $P_2(k)$, all the agents that are at node $v$ in round $t$ belongs to $G_{(v,t)}$, and then all agents of $G_{(v,t)}$ start together phase $k+1$ by assigning to the variable $c$ the number of agents in $G_{(v,t)}$. In each round when the agents of $G_{(v,t)}$ are together in phase $k+1$, they can then detect whether there is a co-located agent that does not belong to $G_{(v,t)}$ by comparing the variable $c$ to ${\tt CurCard}$. In the rest of this proof, when we speak of a group, we always mean a group $G_{(v,t)}$ such that $(v,t)\in\Phi_{k+1}$.

				\begin{claim}
					\label{cP4}
					Assume that property $\Gamma(k+1)$ is not satisfied. Then property $P_2(k+1)$, as well as property $P_4(k+1)$ or property $P_5(k+1)$ hold.
				\end{claim}

				\begin{proofclaim}
					Consider any agent $A$ and denote by $v$ the node from which agent $A$ starts phase $k+1$. The group $G_{(v,t_{A,k+1})}$ to which agent $A$ belongs is shortly denoted by $G$. Given an agent labeled $L$, we will denote by $x_L$ the binary representation of its label, and the binary string ${\tt code}(x_L)$ will be called the code of this agent.

					According to Claim~\ref{cP1}, we have $|t_{X,k+1}-t_{Y,k+1}|\leq\mathcal{D}_{k+1}$ for every couple of agents $(X,Y)$, and, since $\Gamma(k+1)$ is not satisfied, the condition of line~\ref{line:firstc} evaluates to false in all the execution of phase $k+1$. Hence, in view of the second property of Lemma~\ref{lem:invisible}, we know that there is only agent $A$ in $G$, or during each call to procedure ${\tt EXPLO}(N)$ made by any agent $B$ of $G$ (including agent $A$) when executing function ${\tt Communicate}$ in phase $i$, there is a round when agent $B$ is at a node $u\ne v$ with no agent that does not belong to $G$. As the condition of line~\ref{line:firstc} evaluates to false in all the executions of phase $k+1$ by the agents of $G$, we also know in view of lines~\ref{FirstWait} to~\ref{calltocom} that each agent of $G$ start executing function ${\tt Communicate}(k+1,{\tt code}(x_L),true)$ (where $L$ is its label) from node $v$ in round $t_{A,k+1}+\mathcal{D}_{k+1}+3{\tt T}({\tt EXPLO}(N))$. As a result, Lemma~\ref{lem:com} implies that the execution of function ${\tt Communicate}$ by each agent of $G$ is completed at node $v$, in round $t_{A,k+1}+\mathcal{D}_{k+1}+3{\tt T}({\tt EXPLO}(N))+5(k+1){\tt T}({\tt EXPLO}(N))$, and its return value is a couple $(l,*)$,  where $l$ is as follows. Denote by $\overline{G}$ the set of all agents for which  the length of the code is at most $k+1$. If $\overline{G}\ne\emptyset$, then $l={\tt code}(x)1^{k+1-|{\tt code}(x)|}$ where ${\tt code}(x)$ is the lexicographically smallest code of an agent belonging to $\overline{G}$. Otherwise $l=1^{k+1}$.

					Let us first consider the case where $k+1<2\ell+2$. In view of the definition of function ${\tt code}$ we know that $\overline{G}=\emptyset$, and thus $l=1^{k+1}$. Therefore, we know that in the execution of phase $k+1$ by every agent of $G$, the condition of line~\ref{conditionlabel}, and thus the condition of line~\ref{conditiondeclare}, both evaluate to false. It follows that all agents of $G$ are always together from round $t_{A,k+1}+\mathcal{D}_{k+1}+3{\tt T}({\tt EXPLO}(N))+5(k+1){\tt T}({\tt EXPLO}(N))$ till the round $t_{A,k+1}+\mathcal{D}_{k+2}+2\mathcal{D}_{k+1}+(5(k+1)+6){\tt T}({\tt EXPLO}(N))$ in which they start phase $k+2$. Hence, each agent starts executing phase $k+2$ after having spent exactly $\mathcal{D}_{k+2}+2\mathcal{D}_{k+1}+(5(k+1)+6){\tt T}({\tt EXPLO}(N))$ rounds in phase $k+1$. This proves property $P_5(k+1)$.

					To prove property $P_2(k+1)$, it is just enough to show there is no agent outside of $G$ at the same node $w$ as agent $A$ in round $t_{A,k+2}$. Suppose by contradiction that such an agent exists and call it $X$. Denote by $G_X$ the group to which agent $X$ belongs at the beginning of phase $k+1$. Using the same arguments as those used above for group $G$, we know that all agents of $G_X$ are always together from round $t_{X,k+1}+\mathcal{D}_{k+1}+4{\tt T}({\tt EXPLO}(N))+5(k+1){\tt T}({\tt EXPLO}(N))$ till the round $t_{X,k+1}+\mathcal{D}_{k+2}+2\mathcal{D}_{k+1}+(5(k+1)+9){\tt T}({\tt EXPLO}(N))$ in which they start phase $k+2$. In particular, similarly as the agents of $G$, the agents of $G_X$ execute together the block made of the lines~\ref{BC1} to~\ref{BC2}. According to the first property of Lemma~\ref{lem:invisible}, and in view of property $P_5(k+1)$ and of the assumption that property $\Gamma(k+1)$ is not satisfied, we know that $|t_{X,k+2}-t_{A,k+2}|\leq\frac{{\tt T}({\tt EXPLO}(N))}{2}$. This implies that agent $X$ cannot be at node $w$ in round $t_{A,k+2}$ after a move of phase $k+2$, due to the fact that in phase $k+2$ the first move is preceded by a waiting period of length larger than $\frac{{\tt T}({\tt EXPLO}(N))}{2}$. Hence agent $X$, as well as $A$, has been at node $w$ since its execution of line~\ref{secondEXB} in phase $k+1$ is completed. If the execution of line~\ref{secondEXB} in phase $k+1$ is completed in the same round by agents $A$ and $X$, then the agents of $G_X$ and $G$ are at node $w$ when they evaluate the condition of line~\ref{line:secondc}. Hence, this condition evaluates to true for all these agents in their execution of phase $k+1$: this contradicts the assumption that property $\Gamma(k+1)$ is not satisfied. Otherwise, in view of the delay $|t_{X,k+2}-t_{A,k+2}|\leq\frac{{\tt T}({\tt EXPLO}(N))}{2}$, one of the two groups is executing the waiting period of $\mathcal{D}_{k+2}$ rounds at line~\ref{lastwaiting} at node $w$, when the execution of line~\ref{secondEXB} is completed by the other group at node $w$. This implies that $G$ and $G_X$ are together when the agents of one of these groups evaluate the condition of line~\ref{line:secondc}. Then, we again get a contradiction with the assumption that property $\Gamma(k+1)$ is not satisfied, which proves property $P_2(k+1)$.

					Let us now consider the complementary case where $k+1\geq2\ell+2$. Denote by $\gamma$ the label of the agent having the lexicographically smallest code among the agents of the team having a code of length at most $k+1$. Note that, in the considered case, $\gamma$ necessarily exists in view of the definition of function ${\tt code}$.

					If $|\Phi_{k+1}|=1$ then all the agents belong to $G$, and $\overline{G}\ne\emptyset$. This means that the first element $l$ of the return value of function ${\tt Communicate}$ in phase $k+1$ is ${\tt code}(x_{\gamma})1^{k+1-|{\tt code}(x_{\gamma})|}$ for all agents in the team. In view of Proposition~\ref{pro:code}, in every execution of phase $k+1$, the condition of line~\ref{conditionlabel} evaluates to true and the variable $\lambda$ is assigned the label $\gamma$. From this and the fact that property $\Gamma(k+1)$ is not satisfied, the condition of line~\ref{conditiondeclare} evaluates to true in each of the executions of phase $k+1$. It follows that all agents of $G$ (and thus all agents in the graph) are always together from round $t_{A,k+1}+\mathcal{D}_{k+1}+3{\tt T}({\tt EXPLO}(N))+5(k+1){\tt T}({\tt EXPLO}(N))$ till the round $t_{A,k+1}+\mathcal{D}_{k+2}+2\mathcal{D}_{k+1}+(5(k+1)+6){\tt T}({\tt EXPLO}(N))$ in which they declare that gathering is achieved (cf. lines~\ref{debsecondb} to~\ref{finsecondb}). Property $P_4(k+1)$ is then true, and so is property $P_2(k+1)$ because no agent starts phase $k+2$.

					If $|\Phi_{k+1}|\geq2$ then let us choose an agent $B\ne A$ respecting the following conditions: $(1)$ the group $G_{(v',t_{B,k+1})}$ to which agent $B$ belongs is such that $(v,t_{A,k+1})\ne(v',t_{B,k+1})$ and $(2)$ if the agent labeled $\gamma$ is not in $G$ then $B$ has label $\gamma$. The set $G_{(v',t_{B,k+1})}$ is shortly denoted by $G'$. Via similar arguments to those used before, we can show that the agents of $G$ (resp. $G'$) start executing procedure $TZ(l)$ (resp. $TZ(l')$) from node $v$ in round $t_{A,k+1}+\mathcal{D}_{k+1}+(5(k+1)+4){\tt T}({\tt EXPLO}(N))$ (resp. from node $v'$ in round $t_{B,k+1}+\mathcal{D}_{k+1}+(5(k+1)+4){\tt T}({\tt EXPLO}(N))$) for some integers $l$ and $l'$. Actually, due to the definition of $B$, Proposition~\ref{pro:code}, as well as lines~\ref{conditionlabel} and~\ref{getlabel}, we know that $l\ne l'$ and $\gamma\in\{l,l'\}$. Both these executions of procedure $TZ$ are not interrupted prematurely because property $\Gamma(k+1)$ is not satisfied, and thus each of them lasts exactly $\mathcal{D}_{k+1}\geq\mathcal{P}(N,k+1)+\frac{{\tt T}({\tt EXPLO}(N))}{2}$ rounds. By Claim~\ref{cP1}, we know that the delay between the starting times of phase $k+1$ for any two agents  is at most $\mathcal{D}_{k+1}$. Hence, in view of the definition of $G$ (resp. $G'$) and the fact that property $\Gamma(k+1)$ is not satisfied, we can apply the first property of Lemma~\ref{lem:invisible} to state that $|t_{A,k+1}-t_{B,k+1}|\leq\frac{{\tt T}({\tt EXPLO}(N))}{2}$. Therefore, there exists a time interval $\mathcal{T}$ of at least $\mathcal{P}(N,k+1)$ rounds during which the agents of $G$ and the agents of $G'$ execute line~\ref{alg:TZ}. In particular, during interval $\mathcal{T}$ the agents of one group execute together procedure $TZ(l)$, while the agents of the other group execute together procedure $TZ(l')$. Since $\gamma\in\{l,l'\}$ and the length of the binary representation of $\gamma$ is at most $k+1$, the agents of $G$ and $G'$ share the same node in some round during interval $\mathcal{T}$, due to the properties of procedure $TZ$ (cf. Section~\ref{prelim}). However this is a contradiction with the fact that property $\Gamma(k+1)$ is not satisfied.

					To sum up, we have proved that if property $\Gamma(k+1)$ is not satisfied, then property $P_2(k+1)$ is true and so is property $P_4(k+1)$ or $P_5(k+1)$. This concludes the proof of this claim.
				\end{proofclaim}

				The following claim is complementary to the previous one.

				\begin{claim}
					\label{cP5}
					Assume that property $\Gamma(k+1)$ is satisfied. Then properties $P_2(k+1)$ and $P_3(k+1)$ hold.
				\end{claim}

				\begin{proofclaim}
					We first consider the situation where there exists at least one execution of phase $k+1$ by some agent in which the condition of line~\ref{line:firstc} evaluates to true (the complementary situation is adressed thereafter). Let $(u,r)$ be the couple (or one of the couples) of $\Phi(k+1)$ such that for all $(u',r')$ of $\Phi(k+1)$, $r'\geq r$. In view of the waiting periods of $\mathcal{D}_{k+1}$ rounds at the beginning of phase $k+1$ and at the end of phase $k$, as well as the properties $P_1(k+1)$ and $P_2(k)$, no group can meet an agent not belonging to the group from round $r$ to round $r+\mathcal{D}_{k+1}$. Moreover, every group starts phase $k+1$ by round $r+\mathcal{D}_{k+1}$ and cannot start executing line~\ref{secondEX} before round $r+\mathcal{D}_{k+1}+2{\tt T}({\tt EXPLO}(N))$. We have two cases to analyse: the first case is when the first meeting between two groups $G_1$ and $G_2$ occurs in a round of $\{r+\mathcal{D}_{k+1},\ldots,r+\mathcal{D}_{k+1}+2{\tt T}({\tt EXPLO}(N))\}$ and the second case is when it does not.

					Let us start with the first case. When $G_1$ and $G_2$ meet, they then wait at least $\mathcal{D}_{k+2}>\mathcal{D}_{k+1}+3{\tt T}({\tt EXPLO}(N))$ rounds (cf. line~\ref{waitaftermeet}). This means that groups $G_1$ and $G_2$ wait in each round of the set  $\{r+\mathcal{D}_{k+1}+2{\tt T}({\tt EXPLO}(N)), \ldots, r+\mathcal{D}_{k+1}+\mathcal{D}_{k+2}\}$. Moreover, in view of Claim~\ref{cP1}, every group which has not yet encountered another group, starts executing line~\ref{secondEX} in a round of $\{r+\mathcal{D}_{k+1}+2{\tt T}({\tt EXPLO}(N)), \ldots, r+2\mathcal{D}_{k+1}+2{\tt T}({\tt EXPLO}(N))\}$. Hence, every group which has not yet encountered another group, visits the whole graph between rounds $r+\mathcal{D}_{k+1}+2{\tt T}({\tt EXPLO}(N))$ and $r+2\mathcal{D}_{k+1}+3{\tt T}({\tt EXPLO}(N))$ \ie while $G_1$ and $G_2$ wait, and thus, unless it meets another group before, it encounters $G_1$ and $G_2$. By line~\ref{waitaftermeet}, we can state that when any two groups meet, they then have to wait until they see together $\mathcal{D}_{k+2}$ consecutive rounds without any variation of ${\tt CurCard}$ since its latest change. This implies that the execution of a waiting period of $\mathcal{D}_{k+2}$ rounds without any variation of ${\tt CurCard}$ is jointly completed in a round of $\{r+\mathcal{D}_{k+1}+\mathcal{D}_{k+2},\ldots,r+\mathcal{D}_{k+2}+2\mathcal{D}_{k+1}+3{\tt T}({\tt EXPLO}(N))\}$ by any two groups that have met during the interval $\{r+\mathcal{D}_{k+1},\ldots,r+2\mathcal{D}_{k+1}+3{\tt T}({\tt EXPLO}(N))\}$ (every group meets another one in this interval). Thus, by line~\ref{lastwaiting}, any two groups that have met during the interval $\{r+\mathcal{D}_{k+1},\ldots,r+2\mathcal{D}_{k+1}+3{\tt T}({\tt EXPLO}(N))\}$ start together phase $k+2$ in a round of $\{r+\mathcal{D}_{k+1}+2\mathcal{D}_{k+2},\ldots,r+2\mathcal{D}_{k+2}+2\mathcal{D}_{k+1}+3{\tt T}({\tt EXPLO}(N))\}$. This proves property $P_3(k+1)$.

					To prove property $P_2(k+1)$, it is enough to consider any agent $X$ that is at the node $w$ occupied by an agent $A\ne X$ in round $t_{A,k+2}$ and to prove that $t_{A,k+2}=t_{X,k+2}$. If $X$ and $A$ belong to the same group $G$ at the beginning of phase $k+1$, we immediately have $t_{A,k+2}=t_{X,k+2}$. Thus, assume that the groups to which $A$ and $X$ belong when they start phase $k+1$ are different. Since every agent starts phase $k+2$ in a round of $\{r+\mathcal{D}_{k+1}+2\mathcal{D}_{k+2},\ldots,r+2\mathcal{D}_{k+2}+2\mathcal{D}_{k+1}+3{\tt T}({\tt EXPLO}(N))\}$ (cf. the above paragraph), we know that $|t_{A,k+2}-t_{X,k+2}|<\mathcal{D}_{k+2}$. This means that agent $X$ cannot be at node $w$ in round $t_{A,k+2}$ after a move of phase $k+2$, due to the fact that in phase $k+2$ the first move is preceded by a waiting period of at least $\mathcal{D}_{k+2}$ rounds. Hence agent $X$, as well as agent $A$, have been at node $w$ since it evaluated the condition of line~\ref{line:firstc} in phase $k+1$: whether for $A$ or $X$, this evaluation occurs at the latest in round $r+2\mathcal{D}_{k+1}+3{\tt T}({\tt EXPLO}(N))$. As a result, the group of agent $X$ and the group of agent $A$ meet at node $w$ in a round of the interval $\{r+\mathcal{D}_{k+1},\ldots,r+2\mathcal{D}_{k+1}+3{\tt T}({\tt EXPLO}(N))\}$. We proved in the above paragraph that such a meeting between two groups in this interval implies that the agents of the two groups start together phase $k+2$. Hence $t_{A,k+2}=t_{X,k+2}$, which proves property $P_2(k+1)$.

					Let us now consider the case where no group meets another one in a round of $\{r+\mathcal{D}_{k+1},\ldots,r+\mathcal{D}_{k+1}+2{\tt T}({\tt EXPLO}(N))\}$. This implies that when the effective part of procedure ${\tt EXPLO}(N)$ of line~\ref{firstEXphase} in phase $k+1$ is completed by $G_{(u,r)}$, every agent has at least started the effective part of procedure ${\tt EXPLO}(N)$ of line~\ref{firstEXphase} in phase $k+1$ (otherwise using Claim~\ref{cP1} we can get a contradiction with the fact that no group meets another one during the interval $\{r+\mathcal{D}_{k+1},\ldots,r+\mathcal{D}_{k+1}+2{\tt T}({\tt EXPLO}(N))\}$). This means that the delay between the starting times of phase $k+1$ by any two agents is at most $\frac{{\tt T}({\tt EXPLO}(N))}{2}$. It follows that the first round in which the condition of line~\ref{line:firstc} evaluates to true in the execution of phase $k+1$ (which also corresponds to a round in which two groups $G_1$ and $G_2$ meet) is equal to $r+x$ with $\mathcal{D}_{k+1}+2{\tt T}({\tt EXPLO}(N))+1\leq x\leq\mathcal{D}_{k+1}+\frac{7{\tt T}({\tt EXPLO}(N))}{2}$. In view of the delay between the starting times of phase $k+1$ by any two agents which is at most $\frac{{\tt T}({\tt EXPLO}(N))}{2}$ and of the definition of round $r+x$, we know that in each round of $\{r+\mathcal{D}_{k+1}+2{\tt T}({\tt EXPLO}(N))+1,\ldots,r+x\}$, each agent either executes an instruction of line~\ref{anotherwait} or~\ref{secondEX}, or has just started executing the waiting period of line~\ref{waitaftermeet}, or waits by processing the first waiting period of ${\tt T}({\tt EXPLO}(N))$ rounds in function ${\tt Communicate}$. We also know that the agents that are executing function ${\tt Communicate}$ in some round of $\{r+\mathcal{D}_{k+1}+2{\tt T}({\tt EXPLO}(N))+1,\ldots,r+x\}$ cannot have executed line~\ref{waitaftermeet} in phase $k+1$, or otherwise we get a contradiction with the definition of round $r+x$. Hence, since the last ${\tt T}({\tt EXPLO}(N))$ rounds of line~\ref{FirstWait}, line~\ref{firstEXphase}, and line~\ref{anotherwait} consist of the same instructions as line~\ref{anotherwait}, line~\ref{secondEX}, and either the first ${\tt T}({\tt EXPLO}(N))$ rounds of line~\ref{waitaftermeet} or the first waiting period of ${\tt T}({\tt EXPLO}(N))$ rounds in function ${\tt Communicate}$, respectively, the groups $G_1$ and $G_2$ that are at the same node $w$ in round $r+x$, are also at node $w$ in round $r+x-2{\tt T}({\tt EXPLO}(N))$. The condition of line~\ref{line:firstc} then evaluates to true in round $r+x-2{\tt T}({\tt EXPLO}(N))$ in the execution of phase $k+1$ by every agent of $G_1$ or $G_2$. We then get a contradiction with the definition of round $r+x$.

					As a result, in the situation where there exists at least one execution of phase $k+1$ by some agent in which the condition of line~\ref{line:firstc} evaluates to true, we know that properties $P_3(k+1)$ and $P_2(k+1)$ are satisfied. Let us now consider the complementary situation  assuming that property $\Gamma(k+1)$ is satisfied, namely: there exists at least one execution of phase $k+1$ by some agent in which the condition of line~\ref{line:secondc} evaluates to true, and the condition of line~\ref{line:firstc} evaluates to false in the execution of phase $k+1$ by every agent.

					By Lemma~\ref{lem:invisible} and Claim~\ref{cP1}, we know that the maximum delay in phase $k+1$ between the rounds in which there is an execution of function ${\tt Communicate}$ that is completed is at most $\frac{{\tt T}({\tt EXPLO}(N))}{2}$ rounds. We also know that in each group $G_{(v,t)}$ the execution of function ${\tt Communicate}$ is completed by the agents of $G_{(v,t)}$ in the same round at node $v$, and thus the agents of $G_{(v,t)}$ start together the next instruction of phase $k+1$ from node $v$. Note that in the round when the execution of ${\tt Communicate}$ is completed by the agents of $G_{(v,t)}$, call it $z$, all the agents that are at node $v$ belong to $G_{(v,t)}$. Indeed, if this were not the case, in view of the aforementioned delay of $\frac{{\tt T}({\tt EXPLO}(N))}{2}$ rounds and the fact that the end of the execution of function ${\tt Communicate}$ by each group in phase $k+1$ is directly preceded and followed by waiting periods of at least ${\tt T}({\tt EXPLO}(N))$ rounds, we would have the following: all the agents that are at node $v$ in round $z$, are also at node $v$ in round $z-5(k+1){\tt T}({\tt EXPLO}(N))-\frac{3{\tt T}({\tt EXPLO}(N))}{2}$ in which the agents of $G_{(v,t)}$ start the second half of the waiting period of length ${\tt T}({\tt EXPLO}(N))$ at line~\ref{anotherwait}. This would contradict the assumption that the condition of line~\ref{line:firstc} evaluates to false in the execution of phase $k+1$ by every agent. Let $r'=r+\mathcal{D}_{k+1}+(5(k+1)+3){\tt T}({\tt EXPLO}(N))$: round $r'$ is the one in which the execution of function ${\tt Communicate}$ in phase $k+1$ is completed by the agents of $G_{(u,r)}$ at node $u$. In view of the above explanations, we know that no group meets another one from round $r'$ to round $r'+{\tt T}({\tt EXPLO}(N))$. Moreover, every group starts executing the second begin-end block of phase $k+1$ by round $r'+{\tt T}({\tt EXPLO}(N))$ and cannot have started executing line~\ref{secondEXB} before round $r'+\mathcal{D}_{k+1}+2{\tt T}({\tt EXPLO}(N))$. From there on, the proof is similar to that for the first situation. However, for completeness, we state it in detail.

					Let us first assume that a meeting between two groups occurs in some round of $\{r'+{\tt T}({\tt EXPLO}(N)), \ldots, r'+\mathcal{D}_{k+1}+2{\tt T}({\tt EXPLO}(N))\}$, and denote by $G_1$ and $G_2$ the two groups involved in the first meeting in this interval. When $G_1$ and $G_2$ meet, they then wait at least $\mathcal{D}_{k+2}>\mathcal{D}_{k+1}+3{\tt T}({\tt EXPLO}(N))$ rounds (cf. line~\ref{waitaftermeet2}). This means that groups $G_1$ and $G_2$ wait in each round of $\{r'+\mathcal{D}_{k+1}+2{\tt T}({\tt EXPLO}(N)), \ldots, r'+\mathcal{D}_{k+2}+{\tt T}({\tt EXPLO}(N))\}$. Moreover, every group which has not yet encountered another one, starts executing line~\ref{secondEXB} in a round of $\{r'+\mathcal{D}_{k+1}+2{\tt T}({\tt EXPLO}(N)), \ldots, r'+\mathcal{D}_{k+1}+\frac{5{\tt T}({\tt EXPLO}(N))}{2}\}$, in view of the maximum delay of  $\frac{{\tt T}({\tt EXPLO}(N))}{2}$ rounds in phase $k+1$ between the rounds in which there is an execution of function ${\tt Communicate}$ that is completed. Hence, every group which has not yet encountered another one, executes the effective part of line~\ref{secondEXB} and thus visits the whole graph between rounds $r'+\mathcal{D}_{k+1}+2{\tt T}({\tt EXPLO}(N))$ and $r'+\mathcal{D}_{k+1}+3{\tt T}({\tt EXPLO}(N))$ \ie while $G_1$ and $G_2$ wait and thus, unless it meets another group before, it encounters $G_1$ and $G_2$. By line~\ref{waitaftermeet2}, we can state that when any two groups meet, they then have to wait until they see together $\mathcal{D}_{k+2}$ consecutive rounds without any variation of ${\tt CurCard}$ since its latest change. This implies that the execution of a waiting period of length $\mathcal{D}_{k+2}$ without any variation of ${\tt CurCard}$ is jointly completed in a round of $\{r'+\mathcal{D}_{k+2}+{\tt T}({\tt EXPLO}(N)), \ldots, r'+\mathcal{D}_{k+1}+\mathcal{D}_{k+2}+3{\tt T}({\tt EXPLO}(N))\}$ by any two groups that have met during the interval $\{r'+{\tt T}({\tt EXPLO}(N)), \ldots, r'+\mathcal{D}_{k+1}+3{\tt T}({\tt EXPLO}(N))\}$ (every group meets another one in this interval). Thus, by line~\ref{lastwaiting}, any two groups that have met during the interval $\{r'+{\tt T}({\tt EXPLO}(N)), \ldots, r'+\mathcal{D}_{k+1}+3{\tt T}({\tt EXPLO}(N))\}$ start together phase $k+2$ in a round of $\{r+\mathcal{D}_{k+1}+2\mathcal{D}_{k+2}+(5(k+1)+4){\tt T}({\tt EXPLO}(N)), \ldots, r+2\mathcal{D}_{k+1}+2\mathcal{D}_{k+2}+(5(k+1)+6){\tt T}({\tt EXPLO}(N))\}$. This proves property $P_3(k+1)$.

					To prove property $P_2(k+1)$, it is enough to consider any agent $X$ that is at the node $w$ occupied by an agent $A\ne X$ in round $t_{A,k+2}$ and to prove that $t_{A,k+2}=t_{X,k+2}$. If $X$ and $A$ belong to the same group $G$ at the beginning of phase $k+1$, we immediately have $t_{A,k+2}=t_{X,k+2}$. Thus, assume that the groups to which $A$ and $X$ belong when they start phase $k+1$ are different. Since every agent starts phase $k+2$ in a round of $\{r+\mathcal{D}_{k+1}+2\mathcal{D}_{k+2}+(5(k+1)+4){\tt T}({\tt EXPLO}(N)), \ldots, r+2\mathcal{D}_{k+1}+2\mathcal{D}_{k+2}+(5(k+1)+6){\tt T}({\tt EXPLO}(N))\}$ (cf. the above paragraph), we know that $|t_{A,k+2}-t_{X,k+2}|<\mathcal{D}_{k+2}$. This means that agent $X$ cannot be at node $w$ in round $t_{A,k+2}$ after a move of phase $k+2$, due to the fact that in phase $k+2$ the first move is preceded by a waiting period of at least $\mathcal{D}_{k+2}$ rounds. Hence agent $X$, as well as agent $A$, has been at node $w$ since it evaluated the condition of line~\ref{line:secondc} in phase $k+1$: whether for $A$ or $X$, this evaluation occurs at the latest in round $r'+\mathcal{D}_{k+1}+3{\tt T}({\tt EXPLO}(N))$. As a result, the group of agent $X$ and the group of agent $A$ meet at node $w$ in a round of the interval $\{r'+{\tt T}({\tt EXPLO}(N)), \ldots, r'+\mathcal{D}_{k+1}+3{\tt T}({\tt EXPLO}(N))\}$. We proved in the above paragraph that such a meeting between two groups in this interval implies that the agents of the two groups start together phase $k+2$. Hence $t_{A,k+2}=t_{X,k+2}$, which proves property $P_2(k+1)$.

					Consider the case where no group meets another one in a round of $\{r'+{\tt T}({\tt EXPLO}(N)), \ldots, r'+\mathcal{D}_{k+1}+2{\tt T}({\tt EXPLO}(N))\}$. In this case, the first round in which the condition of line~\ref{line:secondc} evaluates to true in the execution of phase $k+1$ by some agent is at least $r'+\mathcal{D}_{k+1}+2{\tt T}({\tt EXPLO}(N))+1$. The first round in which the condition of line~\ref{line:secondc} evaluates to true in the execution of phase $k+1$ by some agent is also at most $r'+\mathcal{D}_{k+1}+\frac{7{\tt T}({\tt EXPLO}(N))}{2}$ in view of the maximum delay of $\frac{{\tt T}({\tt EXPLO}(N))}{2}$ rounds in phase $k+1$ between the rounds in which there is an execution of function ${\tt Communicate}$ that is completed. Consequently, the first round in which the condition of line~\ref{line:secondc} evaluates to true in the execution of phase $k+1$ by some agent (which also corresponds to a round in which two groups $G_1$ and $G_2$ meet) is equal to  $r'+x$ with $\mathcal{D}_{k+1}+2{\tt T}({\tt EXPLO}(N))+1\leq x\leq\mathcal{D}_{k+1}+\frac{7{\tt T}({\tt EXPLO}(N))}{2}$. In view of the aforementioned delay and of  the definition of round $r'+x$, we know that in each round of $\{r'+\mathcal{D}_{k+1}+2{\tt T}({\tt EXPLO}(N))+1, \ldots, r'+x\}$, each agent either executes an instruction of line~\ref{anotherwaitB} or~\ref{secondEXB}, or has just started the waiting period period of line~\ref{waitaftermeet2}, or waits by processing the first ${\tt T}({\tt EXPLO}(N))$ rounds of the waiting period at line~\ref{lastwaiting}. We also know that the agents that are processing line~\ref{lastwaiting} in some round of $\{r'+\mathcal{D}_{k+1}+2{\tt T}({\tt EXPLO}(N))+1, \ldots, r'+x\}$ cannot have executed line~\ref{waitaftermeet2} in phase $k+1$, or otherwise we get a contradiction with the definition of round $r'+x$. Hence, since the last ${\tt T}({\tt EXPLO}(N))$ rounds of line~\ref{FirstWait}, line~\ref{firstEXphase}, and line~\ref{anotherwait} consist of the same instructions as line~\ref{anotherwaitB}, line~\ref{secondEXB}, and the first ${\tt T}({\tt EXPLO}(N))$ rounds of either line~\ref{waitaftermeet2} or line~\ref{lastwaiting}, respectively, the groups $G_1$ and $G_2$ that are at the same node $w$ in round $r'+x$, are also at node $w$ in round $r'+x-\mathcal{D}_{k+1}-5(k+2){\tt T}({\tt EXPLO}(N))$. The condition of line~\ref{line:firstc} then evaluates to true in round $r'+x-\mathcal{D}_{k+1}-5(k+2){\tt T}({\tt EXPLO}(N))$ in the execution of phase $k+1$ by every agent of $G_1$ or $G_2$, which is a contradiction.

					To summarize, we proved that properties $P_2(k+1)$ and $P_3(k+1)$ always hold, which concludes the proof of this claim.
				\end{proofclaim}

				According to Claims~\ref{cP1},~\ref{cP4} and~\ref{cP5}, we know that properties $P_1(k+1)$ and $P_2(k+1)$ hold. We also know that at least one property among $P_3(k+1)$, $P_4(k+1)$ and $P_5(k+1)$ holds. This concludes the inductive proof of the lemma.
			\end{proof}

			\begin{theorem}
				\label{theo:known}
				Algorithm ${\tt GatherKnownUpperBound}$ solves the gathering problem and the leader election problem, and has a time complexity that is polynomial in the known upper bound $N$ on the size of the network and in the length $\ell$ of the smallest label among the agents.
			\end{theorem}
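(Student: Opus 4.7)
The plan is to derive the theorem as a direct corollary of Lemma~\ref{lem:crucial}. Lemma~\ref{lem:crucial} guarantees that for every phase $i$ reached by the agents (i.e.\ before any gathering declaration), at least one of $P_3(i)$, $P_4(i)$, $P_5(i)$ must hold. Property $P_4(i)$ is exactly what we want: it states simultaneous declaration of gathering at a common node and a uniform value of $\lambda$ equal to the label of an existing agent, which yields both the gathering statement and the leader election statement of the theorem. Hence, the entire argument reduces to bounding the index of the first phase $i^\star$ for which $P_4(i^\star)$ is forced to hold, and then summing the durations of phases $1,\ldots,i^\star$.

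First I would argue that the number of phases before $P_4$ is triggered is at most $\lceil \log_2 N\rceil + 2\ell + 3$. The key observation is that $P_5(i)$ can hold only while $i < 2\ell+2$, so for every phase $i \ge 2\ell+2$ we must be in the case $P_3(i)$ or $P_4(i)$. Moreover $P_3(i)$ forces $|\Phi_{i+1}| \le \lfloor |\Phi_i|/2\rfloor$, and since there are at most $n \le N$ agents we have $|\Phi_1| \le N$, so the chain $|\Phi_1|,|\Phi_2|,\ldots$ can halve at most $\lceil \log_2 N\rceil$ times before reaching $|\Phi_i|=1$. Once $|\Phi_i|=1$, the bound $|\Phi_{i+1}|\le 0$ in $P_3(i)$ is impossible (as $|\Phi_{i+1}|\ge 1$ whenever not all agents have stopped), and combined with $i \ge 2\ell+2$ we exclude $P_5(i)$ as well, so $P_4(i)$ must hold. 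This yields a finite phase index $i^\star = O(\log N + \ell)$ at which gathering and leader election are jointly achieved, with the common label being that of some agent as prescribed by $P_4(i^\star)$.

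It remains to bound the time complexity. By construction $\mathcal{D}_k = \mathcal{P}(N,k) + 3(k+2)\,{\tt T}({\tt EXPLO}(N))$, and both $\mathcal{P}(N,k)$ and ${\tt T}({\tt EXPLO}(N))$ are polynomial in $N$ (with $\mathcal{P}$ increasing and polynomial in each of its two arguments). Hence $\mathcal{D}_k$ is polynomial in $N$ and $k$. Using $P_3(i)$ and $P_5(i)$, each phase $i \le i^\star$ contributes at most $O(\mathcal{D}_{i+1} + \mathcal{D}_i + i\,{\tt T}({\tt EXPLO}(N)))$ rounds to the wall-clock time, as can be read directly from the time intervals given in those properties. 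Summing from $i=1$ to $i^\star = O(\log N + \ell)$ and adding the initial $2\,{\tt T}({\tt EXPLO}(N))$ rounds of phase $0$ yields a total running time polynomial in $N$ and $\ell$, where we measure from the wake-up of the first agent.

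The only subtle point I anticipate is keeping the bookkeeping consistent across the transitions between phases. Specifically, one must check that when $|\Phi_i|$ reaches $1$ the \emph{next} phase truly satisfies $P_4$ (not $P_3$), which uses the observation above that $P_3$ becomes vacuously unavailable once $|\Phi_i|=1$; and one must verify that the time accumulated during any chain of phases where only $P_5$ is active is still polynomial, which is automatic because $P_5$ activates for only $O(\ell)$ phases and each such phase lasts $\mathcal{D}_{i+1}+2\mathcal{D}_i+(5i+6){\tt T}({\tt EXPLO}(N))$ rounds by the explicit formula in $P_5(i)$. Once these two sanity checks are in place, the theorem follows immediately.
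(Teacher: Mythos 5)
Your proposal is correct and follows essentially the same route as the paper: both derive the theorem directly from Lemma~\ref{lem:crucial} by observing that $P_5$ is available only for $i<2\ell+2$, that $P_3$ halves $|\Phi_i|$ (which starts at most $N$), so $P_4$ must be triggered within $O(\log N+\ell)$ phases, and then summing the per-phase durations (each polynomial in $N$ and the phase index) to get the claimed complexity. The only cosmetic difference is that the paper bounds the terminal phase index by contradiction ($1\leq|\Phi_{\lfloor\log N\rfloor+2\ell+3}|<1$) whereas you argue it directly via the impossibility of $P_3(i)$ once $|\Phi_i|=1$; the substance is identical.
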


			\begin{proof}
				Suppose that an agent, call it $A$, declares that gathering is achieved in some round $\tau$ when processing some phase $i$ and suppose, without loss of generality, $i$ is the smallest integer for which this occurs. According to Algorithm~\ref{alg:GKUB}, the execution of ${\tt GatherKnownUpperBound}$ is completed by agent $A$ in round $\tau$, and the agent will not start phase $i+1$. Hence, in view of the minimality of $i$, property $P_4(i)$ of Lemma~\ref{lem:crucial} holds. It follows that the gathering problem and the leader election problem are solved in round $\tau$ in which the execution of ${\tt GatherKnownUpperBound}$ by each agent is completed.
				It also follows that $\tau=t_{F,i}+\mathcal{D}_{i+1}+2\mathcal{D}_{i}+(5i+6){\tt T}({\tt EXPLO}(N))$ where $F$ is the earliest agent (or one of the earliest agents) to start phase $i$. By properties $P_3(k)$ and $P_5(k)$ of Lemma~\ref{lem:crucial}, agent $F$ cannot have spent more than $4\mathcal{D}_{k+1}+(5k+6){\tt T}({\tt EXPLO}(N))$ rounds executing any phase $k<i$. By property $P_1(0)$ of Lemma~\ref{lem:crucial}, $|t_{F,0}-t_{F',0}|<\mathcal{D}_0$ where $F'$ is the earliest agent (or one of the earliest agents) to start executing ${\tt GatherKnownUpperBound}$ and thus to start executing phase $0$. Therefore, if $i\leq \lfloor \log N \rfloor +2\ell+2$, then $|\tau-t_{F',0}|\leq (\lfloor \log N \rfloor +2\ell+4)(4\mathcal{D}_{\lfloor \log N \rfloor +2\ell+3}+(5(\lfloor \log N \rfloor +2\ell+2)+6){\tt T}({\tt EXPLO}(N)))$, which is polynomial in $N$ and $\ell$.

				As a result, to conclude the proof, it is enough to show that $\tau$ exists and $i\leq \lfloor \log N \rfloor +2\ell+2$. Suppose by contradiction, that it is not the case. By Lemma~\ref{lem:crucial}, every agent ends up executing phase $\lfloor \log N \rfloor +2\ell+3$, and thus $|\Phi_{\lfloor \log N \rfloor +2\ell+3}|\geq1$. Still by Lemma~\ref{lem:crucial}, we know that $|\Phi_{\lfloor \log N \rfloor +2\ell+3}|\leq\frac{|\Phi_{2\ell+2}|}{2^{\lfloor \log N \rfloor+1}}$. However, $|\Phi_{2\ell+2}|\leq N$ and $2^{\lfloor \log N \rfloor+1}>N$. Hence, $1\leq|\Phi_{\lfloor \log N \rfloor +2\ell+3}|<1$, which is a contradiction and proves the theorem.
			\end{proof}

	\section{Unknown upper bound on the size of the graph}

		This section is dedicated to the presentation and the analysis of our algorithm ${\tt GatherUnknownUpperBound}$ that allows the agents to solve gathering without direct means of communication in the case where they are not initially given any upper bound on the graph size.

		\subsection{Intuition} \label{subsec:intuition}

			A preliminary question that may come to mind when considering this harsher scenario, is how to guarantee gathering even if in each round an agent had the capacity to exchange all information available to it with the other agents sharing its node. Actually, this can be done by implementing a general mechanism explained below that stems from the following simple observation. If the agents were woken up at the same time by the adversary and were initially given the description of the initial configuration $\phi$ (i.e., the complete map of the underlying graph, with all port numbers, in which a node $v$ is labeled $L$ iff $v$ is the starting node of the agent labeled $L$), then the problem could be solved by applying a simple rule: upon its wake-up, each agent moves, using the map, towards the node containing the smallest label, and then declares that gathering is over when all agents are in that node.

			In an attempt to recreate similar favorable conditions to those given in the above observation, a general mechanism to solve gathering with communication has been described in \cite{BDD}: as it will be useful for our purpose, let us briefly explain how it works at a high level. Let $\Omega=(\phi_1,\phi_2,\phi_3,\cdots)$ be a fixed enumeration of the (recursively enumerable) set of all initial configurations. An agent proceeds in consecutive phases $i=1,2,3,\cdots$, where in each phase $i$, an agent tries to achieve gathering by making a hypothesis that the initial configuration $\phi$ is $\phi_i$.
			This assumption will be called hypothesis $i$. When making hypothesis $i$, the agent first executes ${\tt EXPLO}(n_i)$ where $n_i$ is the supposed graph size (in the hope of waking up the remaining dormant agents, if any), and then, if its label belongs to $\phi_i$, tries to go to the node that supposedly corresponds to the starting node $v_i$ of the agent having the smallest label in $\rho_i$. Once an agent reaches a node that supposedly corresponds to $v_i$, it waits some amount of time sufficient to allow the other agents to join it if the hypothesis $i$ is good (i.e., correct). Note that in the case where the hypothesis $i$ is not good, some agents may notice it more or less rapidly, for instance if $\phi_i$ does not contain their label or if the path they have to follow to reach $v_i$ simply does not exist in the real network. Hence, the process of a phase $i$ we have described so far can lead to one of the following three situations: (1) the hypothesis $i$ is good and all agents think rightly that gathering is over, (2) the hypothesis $i$ is not good and all agents know it, or (3) the hypothesis $i$ is not good but some agents do not know it because for some of them everything appears to have gone very well (in particular they are at the node supposedly corresponding to $v_i$ with agents having the labels they should have w.r.t $\phi_i$). The third situation may especially occur if the number of supposed agents or the supposed size of the graph in $\phi_i$ is lower than it actually is. At this point, an ``optimistic'' agent may think that it is in the first situation, while really it is in the third. This is why each time a phase is completed, the optimistic agents, if any, execute a checking protocol based on a simulation of the exploration protocol ${\tt EST}$, in which the role of the token is played by some of them. If after having executed the checking protocol, the optimistic agents notice they have constructed a map of the graph corresponding to that of $\phi_i$ and they have not encountered agents thinking that hypothesis $i$ is wrong (before switching to the next phase, these agents stay idle enough time in order to be detected by possible optimistic agents), they can be sure to have accomplished gathering. Otherwise, the agents that were optimistic join the camp of the agents knowing that hypothesis $i$ is wrong, and after a certain time, each agent goes back to its initial node in order to start phase $i+1$.

			The high level idea of our algorithm consists in emulating this general mechanism in a context in which the agents are devoid of direct means of communication. However, this turns out to be much easier said than done. Indeed, to settle properly such an emulation, we have to face numerous challenges. For instance, how can an agent become optimistic about a supposed configuration $\phi_i$, if it cannot see the labels of the agents sharing its node? Or, how can an agent recognize its token played by some agents during an execution of the checking protocol? In the next subsection, we bring algorithmic solutions to solve these problems. However, one particular problem was really more important than the others, in the sense that its resolution appeared as a \emph{sine qua non} condition to solve most of the other problems raised by our emulation. The problem is this: in order to emulate the mechanism outlined above in a correct way, an agent has to avoid confusing another agent acting under the same hypothesis as itself, with an agent acting under a different hypothesis. While this is not at all an issue when an agent can exchange arbitrary messages with the other agents sharing its current node (indeed, it is then enough to ignore the agents that indicate processing a different hypothesis), this becomes a real difficulty in our context. To tackle it, we put in place two fundamental schemes.

			The first scheme consists in slowing down every agent more and more as it progresses through successive hypotheses, so that an agent processing a hypothesis $h$ avoids being mislead by an agent processing a hypothesis $h'>h$. Actually, using various strategies of moves such as, for example, some ``dancing'' protocols (that consist in leaving and entering the same node in carefully chosen rounds), and by interleaving, between each step of every hypothesis $h>1$, waiting periods lasting the time of processing the hypotheses $1$ to $h-1$ in the worst case, every agent can detect an agent that processes a hypothesis larger than its own current hypothesis. However, this turns out to be ineffective to protect an agent from being confused by agents processing smaller hypotheses. This is why we introduce a second scheme that divides the processing of each hypothesis into two parts: the main part corresponding, strictly speaking, to the emulation of the general mechanism with, in particular, the previously mentioned slowdowns and dances, that is preceded by a preprocessing part, the heart of the second scheme, and which relies on the notions of \emph{kernel} and \emph{ball} defined below.

			A kernel $\mathcal{K}(u,h)$ is the set of all nodes that may be visited by an agent processing the main part of hypothesis $h$, starting from a node $u$. In our solution, the set $\mathcal{K}(u,h)$ does not depend on the label of the executing agent: in fact, the label can influence the way the nodes of $\mathcal{K}(u,h)$ are visited, but not the set of visited nodes. Given a kernel $\mathcal{K}$, its ball is the set of all its critical nodes. A node $v$ is said to be critical for a kernel $\mathcal{K}$ if $v$ belongs to $\mathcal{K}$, or if an agent, when initially located at node $v$, may visit a node of $\mathcal{K}$ before processing hypothesis $h$, either during a preprocessing part or during a main part of some hypothesis $h'<h$. These notions are used within the second scheme, in the following way. An agent executing the preprocessing part of a hypothesis $h$ performs the following actions, without paying attention to other agents: first, it visits all nodes of the ball of $\mathcal{K}(u,h)$, where $u$ is the node it occupies at the beginning of the preprocessing and of the main part of hypothesis $h$, then it goes back to $u$, and finally it waits the maximum time required for an agent, which would have just been woken up, to begin the execution of hypothesis $h$ (in our solution the time to reach hypothesis $h$ is bounded by some function depending only on the supposed graph sizes of the previous hypotheses). By doing so, we have the guarantee that an agent $a$ executing the main part of a hypothesis $h$, the kernel of which is $\mathcal{K}$, cannot encounter any agent executing the preprocessing part or the main part of a previous hypothesis. Indeed, otherwise this would imply that an agent, initially located at a node belonging to the ball of $\mathcal{K}$, has been woken up after the traversal by agent $a$ of the ball of $\mathcal{K}$, which would be a contradiction. Note that, visiting all nodes of the ball of $\mathcal{K}(u,h)$ can be made by following all the (not necessary simple) paths of length $d_1+d_2$ from node $u$, where $d_1$ (resp. $d_2$) is the maximum distance between two nodes visited during the main part of hypothesis $h$ (resp. during the processing of the hypotheses $1$ to $h-1$). Also note that the executing agent will abort the traversal of the ball of $\mathcal{K}(u,h)$ as soon as it occupies a node having a degree at least equal to the graph size of $\phi_h$. This precaution does not cause any problem because if such an event occurs the agent has the guarantee that $\phi_h\ne\phi$, and before starting phase $h+1$, it no longer has to worry about agents processing hypotheses different from its own.  This is nonetheless crucial as we need the agents to know in advance the worst-case time to process any given hypothesis (to settle consistently some of the aforementioned waiting periods).

			To get all of this to work, we must not forget to link the schemes to each other: in particular, the slowdowns of the first scheme have to be also added between the steps of the preprocessing parts, and every ball traversal of the second scheme has to take into account the dancing protocols of each main part resulting from the first scheme.

			In order to illustrate the importance of the joint application of the two presented schemes, let us close our intuitive explanations by considering one of the problems, mentioned earlier, encountered in putting in place our emulation: that of an exploration, during which an agent plays the role of an explorer while the others play the role of a token, performed to check whether a supposed hypothesis is good or not. Actually, our solution is designed so that for each hypothesis $h$, such explorations can be triggered by at most one group $G_h$ of agents located at the same node $u_h$, those thinking that hypothesis $h$ is good: the role of explorer is assigned in turn to all agents of $G_h$ using the order over the labels that are in $\phi_h$. When an agent becomes explorer, it executes protocol ${\tt EST}$ using as a token the other agents of $G_h$, which then stay idle at node $u_h$.
As soon as the execution of ${\tt EST}$ is completed or as soon as the explorer notices that the map under construction of the network does not match that of $\phi_h$, it goes back to its token to let the other agents, which have not yet done so, execute ${\tt EST}$ and reach the same conclusion about the validity of hypothesis $h$. The key thing for an explorer is to perform a ``clean'' exploration i.e., not to confuse the group of agents representing its token with another group, as otherwise it might reach a wrong conclusion. Such a confusion can be avoided by requiring the agents of $G_h$ to perform twice together a traversal of all the nodes located at distance at most $d+1$ from node $u_h$ before starting the simulations of ${\tt EST}$, where $d$ is the maximal distance that may separate an explorer from its token during a simulation of ${\tt EST}$ from $u_h$. If some agents are encountered during one of these two collective traversals, which can be easily detected through a rise of cardinality, then all the agents of $G_h$ have the guarantee that $\phi_h$ is not good and they do not even need to proceed further with the simulations of ${\tt EST}$. On the other hand, if no other agent is encountered during any of these traversals, then in view of the two schemes, we have the guarantee that each simulation of ${\tt EST}$ by an agent of $G_h$ will be clean. Indeed, the second scheme ensures that an explorer cannot meet an agent processing a hypothesis $h'<h$. Moreover, an explorer cannot meet an agent processing a hypothesis $h'>h$ because the first scheme ensures that such an agent is too slow to reach or to be already in the zone of the nodes at distance at most $d$ from $u_h$ (in which the successive simulations of ${\tt EST}$ are done) without having been detected during one of the previous two traversals. Of course, one might still argue that an explorer could be bothered by agents that also test hypothesis $h$ but that do not belong to $G_h$. However, this kind of situation will never occur. In fact, the agents that do not belong to $G_h$, will quickly notice that the hypothesis $h$ is not good, when processing the main part of hypothesis $h$, and thus by adding judicious waiting periods as we did in our algorithm, we will be able to prove they will be detected by all agents of $G_h$ during one of their two collective traversals.

		\subsection{Algorithm} \label{sec:ninconnu}

			In order to give a clear presentation of our solution, we first need to introduce some notation and definitions.

			In the sequel, we consider an arbitrarily fixed enumeration $\Omega=(\phi_1,\phi_2,\phi_3,\ldots)$ of all initial configurations $\phi_i$ represented as a graph of size at least $2$ with all port numbers, in which there are at least 2 labeled nodes: a node $v$ is labeled $L$ iff we assume that $v$ is the initial node of an agent labeled $L$. For every positive integer $h$, the values $n_h$ and $k_h$ respectively denote the number of nodes and the number of labeled nodes in configuration $\phi_h$. We also denote by $m_h$ the smallest integer such that for all $i \in \{1, \dots, h\}, n_i\leq m_h$.

			Gathering in our model, in the case where the agents are not initially given any upper bound on the graph size, can be done using Algorithm ${\tt GatherUnknownUpperBound}$ described in Algorithm~\ref{alg:mainnoupper}. This protocol mainly consists of successive executions of the function ${\tt Hypothesis}(h)$ whose pseudocode is given in Algorithm~\ref{alg:hyp}. Roughly speaking, when executing ${\tt Hypothesis}(h)$, an agent tries to solve gathering by assuming that configuration $\phi_h$ was the real initial configuration $\phi$. We will show that if there is a round $r$ when this function returns true during the execution by an agent of ${\tt GatherUnknownUpperBound}$, then the gathering is achieved and declared as such by all agents in round $r$: the variable $leader$ (resp. $size$) of each agent is, by then, equal to the smallest agent's label (resp. the size of the graph). We will also show that function ${\tt Hypothesis}$ returns true when the integer $h$ given as input is such that $\phi_h=\phi$.

			\begin{algorithm}
				\begin{footnotesize}
				\caption{Algorithm ${\tt GatherUnknownUpperBound}$\label{alg:mainnoupper}}
				\begin{algorithmic}[1]
					\Begin
						\State $h\gets 0$
						\Repeat
							\State $h\gets h+1$
							\State $b\gets {\tt Hypothesis}(h)$
						\Until{$b={\tt true}$}
						\State $\mathit{leader}\gets$ the smallest node's label in $\phi_h$ \label{line:gath:lead}
						\State $\mathit{size}\gets$ the graph size in $\phi_h$ \label{line:gath:size}
						\State declare the gathering is achieved
					\End
				\end{algorithmic}
				\end{footnotesize}
			\end{algorithm}

			To draw a parallel with the intuitive explanations given in Section~\ref{subsec:intuition}, lines~\ref{alg:prepro1} and~\ref{alg:prepro2} of Algorithm~\ref{alg:hyp} correspond to the preprocessing part of a hypothesis (i.e., the second scheme to protect an agent from confusion by agents processing previous hypotheses) while the lines~\ref{alg:main1} to~\ref{alg:main2} correspond to the main part of a hypothesis. For every positive integer $h$, the values $\mathcal{S}_h$ and $\mathcal{T}_h$ used in Algorithm~\ref{alg:hyp} are defined mutually recursively as follows:

			$\left\{
			\begin{array}{l}
				\mathcal{S}_h= {\tt T}({\tt BallTraversal}(h))+\sum_{i=1}^{h-1}\mathcal{T}_i  \\
				\mathcal{T}_h=8m_h^{2m_h^5}(3\mathcal{S}_h+2{\tt T}({\tt Ball\-Traversal}(h)))
			\end{array}
			\right.$

			The value ${\tt T}({\tt BallTraversal}(h))$ used in the above formulas is defined below. We will show that $\mathcal{S}_h$ (resp. $\mathcal{T}_h$) is an upper bound on the time to execute ${\tt Hypothesis}(1)$ to ${\tt Hypothesis}(h-1)$ plus the function ${\tt BallTraversal}(h)$ at the beginning of ${\tt Hypothesis}(h)$ (resp. an upper bound on the time to execute ${\tt Hypothesis}(h)$).

			\begin{algorithm}
				\begin{footnotesize}
				\caption{Algorithm ${\tt Hypothesis}(h)$\label{alg:hyp}}
				\begin{algorithmic}[1]
					\Begin \label{alg:debut}
						\State /* Beginning of the first part of the function */
						\If{${\tt BallTraversal}(h)$ }\label{alg:prepro1}
							\State wait $\mathcal{S}_h$ rounds \label{alg:prepro2}
							\If{${\tt MoveToCentralNode}(h)$} \label{alg:main1}
								\If{${\tt StarCheck}(h)$} \label{line:hyp:star}
									\If{${\tt EnsureCleanExploration}(h)$}\label{alg:check1}
										\If{${\tt GraphSizeCheck}(h)$}\label{alg:check2}
											\State {\bf return} ${\tt true}$
										\EndIf
									\EndIf
								\EndIf
							\EndIf
						\EndIf \label{alg:fin}
						\State /* Beginning of the second part of the function */
						\State let $e$ be the sequence of ports by which the agent entered during the first part (in the order of entrance); $i\gets |e|$ \label{alg:hyp:lineback1}
						\While{$i\geq 1$}
							\State wait $7m_h^{2m_h^5}$ rounds \label{slowdowns0}
							\State take port $e[i]$ \label{line:hyp:back}
							\State $i\gets i-1$
						\EndWhile \label{alg:hyp:lineback2}
						\State wait until having spent at least $\mathcal{T}_h$ rounds in the current execution of the function \label{alg:hyp:linewait2}
						\State {\bf return} ${\tt false}$\label{alg:main2}
					\End
				\end{algorithmic}
				\end{footnotesize}
			\end{algorithm}

			As the name suggests, the ball traversal of the preprocessing part is based on the execution of function ${\tt BallTraversal}(h)$ detailed in Algorithm~\ref{alg:btraversal}: this function returns false iff during its traversal the agent visits a node having a degree at least equal to the graph size of $\phi_h$. The lines~\ref{slowdown1} and~\ref{slowdown2} of Algorithm~\ref{alg:btraversal} correspond to the slowdowns of our first scheme. For ease of presentation, we explained in the intuition section that we could insert, between all steps of every hypothesis, waiting periods lasting enough time in order to help detect agents processing higher hypotheses. However, as it is the case here, we do not need so many extra waiting periods. Actually, the only other function where such slowdows are used for the same purpose is ${\tt Hypothesis}(h)$ (cf. line~\ref{slowdowns0} Algorithm~\ref{alg:hyp}). The value ${\tt T}({\tt BallTraversal}(h))=64hm_h^{7hm_h^5}$ is an upper bound on the execution time of ${\tt BallTraversal}(h)$.

			\begin{algorithm}
				\begin{footnotesize}
					\caption{Algorithm ${\tt BallTraversal}(h)$\label{alg:btraversal}}
					\begin{algorithmic}[1]
						\Begin
							\ForEach{path $x$ of length $4h m^5_h$ from the set $\{0,\dots,n_h-2\}$}\label{wordprocess}
								\State $b\gets{\tt true}$
								\State $e\gets\epsilon$
								\State $i\gets 1$
								\While{$b={\tt true}$ {\bf and} $i\leq 4hm^5_h$}
									\If{the degree of the current node is at least $n_h$}
										\State {\bf return} ${\tt false}$ \label{alg:ball:retfal}
									\EndIf
									\If{there is no port $x[i]$ at the current node}
										\State $b\gets{\tt false}$
									\Else
										\State wait $7m_h^{2m_h^5}$ rounds\label{slowdown1}
										\State take port $x[i]$ \label{line:btrav:take1}
										\State $e[i]\gets$ the port through which the agent has just entered the current node
										\State $i\gets i+1$
									\EndIf
								\EndWhile
								\While{$i>1$} \label{tr:debwhile}
									\State $i\gets i-1$
									\State wait $7m_h^{2m_h^5}$ rounds\label{slowdown2}
									\State take port $e[i]$ \label{line:btrav:take2}
								\EndWhile \label{tr:endwhile}
							\EndFor
							\State {\bf return} ${\tt true}$
						\End
					\end{algorithmic}
				\end{footnotesize}
			\end{algorithm}

			We also explained in the intuition section that the main part of hypothesis $h$ begins with a move to the node that supposedly corresponds to the node $v_h$ of $\phi_h$ having the smallest label. This move is executed via the function ${\tt MoveToCentralNode}(h)$ of Algorithm~\ref{alg:movecentral}: in the sequel, given any positive integer $h$, node $v_h$ will be called the central node of configuration $\phi_h$. In Algorithm~\ref{alg:movecentral}, given a node's label $L$ of configuration $\phi_h$, ${\tt path}_h(L)$ is a function that returns a sequence $p$ corresponding to the lexicographically smallest shortest path to follow in $\phi_h$ in order to reach the central node of $\phi_h$ from the node containing label $L$. If an agent labeled $L$ succeeds to follow entirely ${\tt path}_h(L)$ and notices it is with $k_h-1$ other agents after some prescribed period of time, its call to function ${\tt MoveToCentralNode}(h)$ will return true, otherwise it will return false.

			\begin{algorithm}
				\begin{footnotesize}
				\caption{Algorithm ${\tt MoveToCentralNode}(h)$ \label{alg:movecentral}}
				\begin{algorithmic}[1]
					\Begin
						\State $L\gets$ the label of the executing agent
						\If{there is a node labeled $L$ in $\phi_h$} \label{line:move:labeltest}
							\State $p\gets{\tt path}_h(L)$
							\For {$i \leftarrow 1$ {\bf to} $|p|$}
								\If{there is no port $p[i]$ at the current node}
									\State {\bf return} ${\tt false}$
								\EndIf
								\State exit by port $p[i]$ \label{line:move:move}
							\EndFor
							\State $j\gets 0$
							\While{${\tt CurCard}\ne k_h$ and $j<\mathcal{S}_h+n_h$} \label{line:move:whileb}
								\State wait \label{line:move:wait1}
								\State $j\gets j+1$
							\EndWhile \label{line:move:whilee}
							\If{${\tt CurCard}=k_h$}
								\State wait $\mathcal{S}_h+n_h$ rounds \label{line:move:wait}
								\If{${\tt CurCard}=k_h$} \label{line:move:card}
									\State {\bf return} ${\tt true}$
								\EndIf
							\EndIf
						\EndIf
						\State {\bf return} ${\tt false}$
					\End
				\end{algorithmic}
				\end{footnotesize}
			\end{algorithm}

			When ${\tt MoveToCentralNode}(h)$ returns true, the next step for the agent is to check whether it is exclusively with agents having succeeded their move to the central node of $\phi_h$ (indeed, the agent could for example be with agents processing higher hypotheses). This is the main purpose of function ${\tt StarCheck}(h)$ that is one of the components of the first scheme described earlier. In particular, when function ${\tt StarCheck}(h)$ returns true, the agent has the guarantee that it is in a group $G_h$ in which each agent has a label that is equal to a node's label in $\phi_h$. The function ${\tt rank}_h(L)$ used in the pseudocode of ${\tt StarCheck}(h)$, and also in another one below, returns the number of nodes having a label smaller than $L$ in configuration $\phi_h$.

			\begin{algorithm}
				\begin{footnotesize}
				\caption{Algorithm ${\tt StarCheck}(h)$} \label{alg:unk:star}
				\begin{algorithmic}[1]
					\Begin
						\State $L\gets$ the label of the executing agent
						\State $d\gets$ the degree of the current node
						\State $b\gets{\tt true}$
						\For {$t \leftarrow 1$ {\bf to} $2$}
							\For {$i \leftarrow 0$ {\bf to} $k_h-1$}
								\If{($i={\tt rank}_h(L)$ {\bf and} ($t=1$ {\bf or} ($t=2$ {\bf and} $b={\tt true}$)))} \label{line:star:test}
									\For {$j \leftarrow 0$ {\bf to} $d-1$} \label{line:star:starb}
										\State take port $j$ \label{line:star:move}
										\State $e\gets$ the port through which the agent has just entered the current node
										\If{ $t=1$ {\bf and} ${\tt CurCard}\ne 1$} \label{line:star:one}
											\State$b\gets{\tt false}$
										\EndIf
										\State take port $e$ \label{line:star:back}
										\If{${\tt CurCard}\ne k_h$} \label{line:star:card}
											\State$b\gets{\tt false}$
										\EndIf
									\EndFor \label{line:star:stare}
								\Else
									\For {$j \leftarrow 1$ {\bf to} $2d$} \label{line:star:tokb}
										\State wait \label{line:star:tokw}
										\If {($j\bmod 2=1$ {\bf and} ${\tt CurCard}\ne k_h-1$) {\bf or} ($j\bmod 2=0$ {\bf and} ${\tt CurCard}\ne k_h$)} \label{line:star:toktb}
												\State$b\gets{\tt false}$
										\EndIf \label{line:star:tokte}
									\EndFor \label{line:star:toke}
								\EndIf
							\EndFor
						\EndFor
						\State {\bf return} $b$
					\End
				\end{algorithmic}
				\end{footnotesize}
			\end{algorithm}

			\begin{algorithm}
				\begin{footnotesize}
				\caption{Algorithm ${\tt EnsureCleanExploration}(h)$\label{alg:ensure}}
				\begin{algorithmic}[1]
					\Begin
						\For {$t \leftarrow 1$ {\bf to} $2$} \label{line:clean:two}
							\ForEach{path $x$ of length $n^5_h+1$ from the set $\{0,\dots,n_h-2\}$} \label{line:clean:enum}
								\State $b\gets{\tt true}$; $i\gets 1$
								\While{$(b={\tt true }$ {\bf and} $i\leq n^5_h+1)$}
									\If{there is no port $x[i]$ at the current node}
										\State $b\gets{\tt false}$
									\Else
										\State take port $x[i]$
										\If{${\tt CurCard}\ne k_h$} \label{line:clean:card}
											\State {\bf return} ${\tt false}$
										\EndIf
										\State $i\gets i+1$
									\EndIf
								\EndWhile
								\State traverse in the reverse order the $i-1$ edges that have been traversed during the previous execution of the while loop \label{line:clean:back}
							\EndFor
						\EndFor
						\State {\bf return} ${\tt true}$
					\End
				\end{algorithmic}
				\end{footnotesize}
			\end{algorithm}

			At the beginning of the execution of line~\ref{alg:check1} in Algorithm~\ref{alg:hyp}, the gathering may be done. We will show it is the case if $\phi_h=\phi$. By chance, it could be also the case if $\phi_h\ne\phi$. However, at this point an agent cannot decide if the gathering is done or not. To enable such a decision, we make use of the functions ${\tt EnsureCleanExploration}(h)$ and ${\tt GraphSizeCheck}(h)$ described respectively in Algorithms~\ref{alg:ensure} and~\ref{alg:SizeCheck}. Function ${\tt GraphSizeCheck}(h)$ mainly relies on function ${\tt EST}^+(n_h)$ that derives from procedure ${\tt EST}$ introduced in the preliminaries section. Note that the execution of ${\tt EST}$ requires the existence of a genuine token in the network, which is not the case in our context: in ${\tt EST}^+(n_h)$ this issue is simply overcome by using some agents to play the role of a token. More precisely, function ${\tt EST}^+(n_h)$ consists of two consecutive parts. The first part is a simulation of ${\tt EST}$ in which the executing agent exploring the graph considers it is with its token in the rounds and only in the rounds in which ${\tt CurCard}>1$. The first part is completed as soon as the simulation is completed or as soon as the execution of the first part by the agent has lasted ${\tt T}({\tt EST}(n_h))$ rounds. In the second part, the agent backtracks by traversing in the reverse order the edges that have been traversed during the first part. At the end of the backtrack, the function returns true if the simulation of ${\tt EST}$ has been completed in time during the first part and if the graph size learned by the agent is exactly $n_h$, otherwise the function returns false.

			In the proof of correctness of Algorithm ${\tt GatherUnknownUpperBound}$, we will show that during each execution of function ${\tt EST^+}$ from a node $u_h$, the executing agent encounters another agent if and only if it is at node $u_h$: as mentioned in the intuition section, we then speak of clean exploration. This property comes from the fact that the call to ${\tt GraphSizeCheck}(h)$ in line~\ref{alg:check2} of Algorithm~\ref{alg:hyp} is made only if the call to function ${\tt EnsureCleanExploration}(h)$ in line~\ref{alg:check1} has returned true. If function ${\tt EnsureCleanExploration}(h)$ returns false, all agents of $G_h$ know that there are one or more agents that do not belong to $G_h$. On the other hand, if function ${\tt EnsureCleanExploration}(h)$ returns true, the agents of $G_h$ do not know whether there are one or more agents that do not belong to $G_h$, but all their simulations of ${\tt EST}$ in ${\tt GraphSizeCheck}(h)$ will be clean. Hence, if ${\tt EnsureCleanExploration}(h)$ returns true, all that an agent needs to conclude is to check whether the graph size of $\phi_h$ is equal to the size of the real network or not (checking whether the map of $\phi_h$ is the map of the real network is not necessary). When function ${\tt GraphSizeCheck}(h)$ is executed by the agents of $G_h$ in Algorithm~${\tt GatherUnknownUpperBound}$, the function returns true to all of them if $n=n_h$, and false to all of them otherwise. When the latter case occurs, each agent of $G_h$ knows that $\phi_h\neq\phi$ and will start ${\tt Hypothesis}(h+1)$. If the former case occurs, $\phi_h$ may or may not be $\phi$. However, whether $\phi_h=\phi$ or not, we will show that each agent of $G_h$ is guaranteed that during the execution of ${\tt EnsureCleanExploration}(h)$ it visited the whole graph and, since ${\tt EnsureCleanExploration}(h)$ returned true, there are no other agents than those of $G_h$, and thus the gathering is achieved.

			\begin{algorithm}
				\begin{footnotesize}
				\caption{Algorithm ${\tt GraphSizeCheck}(h)$\label{alg:SizeCheck}}
				\begin{algorithmic}[1]
					\Begin
						\State $L\gets$ the label of the executing agent
						\For {$i \leftarrow 1$ {\bf to} $k_h$} \label{line:size:for}
							\If{$i={\tt rank}_h(L)+1$}
								\State  $b\gets {\tt EST}^+(n_h)$
							\EndIf
							\State wait until having spent exactly $2i{\tt T}({\tt EST}(n_h))$ rounds in the current execution of this algorithm \label{line:size:wait}
						\EndFor
						\State {\bf return} $b$
					\End
				\end{algorithmic}
				\end{footnotesize}
			\end{algorithm}

		\subsection{Correctness}

			All the results given in this subsection are stated assuming that the algorithm that is executed by an agent when it wakes up is ${\tt GatherUnknownUpperBound}$ (cf. Algorithm~\ref{alg:mainnoupper}).

			We start with two basic propositions that follow directly from the formulation of Algorithms~\ref{alg:btraversal} to~\ref{alg:SizeCheck}. In the first proposition, we use the notation $\mathcal{F}(x)$. Given a positive integer $x$, $\mathcal{F}(x)$ is the set of the four routines that may be executed with input parameter $x$ during a call to ${\tt Hypothesis}(x)$ within Algorithm~${\tt GatherUnknownUpperBound}$ except ${\tt BallTraversal}(x)$.

			\begin{proposition}
				\label{proof:unk:distances}
				Let $x$ be a positive integer and let $F$ be a routine in $\mathcal{F}(x)\cup\{{\tt BallTraversal}(x)\}$ that is executed by an agent $A$ from a node $v$. We have the following properties.
				\begin{enumerate}
					\item If $F$ is ${\tt BallTraversal}(x)$, ${\tt StarCheck}(x)$ or ${\tt EnsureCleanExploration}(x)$, then during the execution of $F$, agent $A$ always remains at distance at most $4xm^5_x$, at most $1$, or at most $n^5_x+1$ from node $v$, respectively. If, in addition to being one of those three routines, $F$ returns {\tt true}, then agent $A$ is back at node $v$ when $F$ is completed.
					\item If $F$ is ${\tt MoveToCentralNode}(x)$, then during the execution of $F$, agent $A$ always remains at distance at most $n_x-1$ from node $v$.
					\item If $F$ is ${\tt GraphSizeCheck}(x)$, then during the execution of $F$, agent $A$ always remains at distance at most $n_x^5$ from node $v$. Moreover, agent $A$ is back at node $v$ when $F$ is completed.
				\end{enumerate}
			\end{proposition}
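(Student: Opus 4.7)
The proof is essentially a direct inspection of each of the five routines, so my plan is to go through them one by one and, in each case, bound the displacement by reading off how many consecutive \texttt{take port} instructions may occur without an intervening compensating move.

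\textbf{Plan for the three routines of part~1.} For ${\tt BallTraversal}(x)$, I would fix $h=x$ and observe that the agent only moves inside the two while loops of Algorithm~\ref{alg:btraversal}: the first loop runs at most $4hm_h^5$ iterations, each performing a single \texttt{take port}, so the distance from $v$ grows by at most one per iteration and is at most $4xm_x^5$ throughout; the second loop reverses exactly those moves using the recorded sequence $e$, and hence maintains the same distance bound. If the routine returns ${\tt true}$, the entire outer foreach has completed, so in particular every reverse-while has been executed to completion, and the agent is back at $v$. For ${\tt StarCheck}(x)$, the only moves are the pair \texttt{take port $j$} followed by \texttt{take port $e$} at lines~\ref{line:star:move} and~\ref{line:star:back}, and these always return the agent to $v$; thus the displacement never exceeds $1$ and the agent is at $v$ at the end (regardless of the returned boolean). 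For ${\tt EnsureCleanExploration}(x)$, the same argument as for ${\tt BallTraversal}$ applies with path length $n_h^5+1$: the agent advances at most $n_h^5+1$ steps along the enumerated path, then line~\ref{line:clean:back} reverses them; if the function returns ${\tt true}$ no early exit has occurred, so all reversals have been completed and the agent is back at $v$.

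\textbf{Plan for part~2.} For ${\tt MoveToCentralNode}(x)$, the only moves are those of the for loop executing $p={\tt path}_h(L)$. By definition ${\tt path}_h(L)$ is the lexicographically smallest shortest path from the node labeled $L$ to the central node of $\phi_h$, and $\phi_h$ has $n_h$ nodes, so $|p|\leq n_h-1$. Since each iteration advances one port, the distance from $v$ is at most $n_h-1$ throughout, and the subsequent \texttt{wait} statements do not alter it.

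\textbf{Plan for part~3.} For ${\tt GraphSizeCheck}(x)$, the only moves occur during the single call to ${\tt EST}^+(n_h)$ made when $i={\tt rank}_h(L)+1$. By the definition of ${\tt EST}^+(n_h)$ recalled in the algorithm description, its first part consists of at most ${\tt T}({\tt EST}(n_h))=n_h^5$ edge traversals, so the distance from $v$ never exceeds $n_h^5$ in the first part; the second part reverses, in reverse order, exactly the edges traversed in the first part, so the distance bound is preserved and the agent is back at $v$ when ${\tt EST}^+(n_h)$ terminates. The remaining instructions of ${\tt GraphSizeCheck}$ are \texttt{wait} statements that do not move the agent, so the agent is still at $v$ when the routine is completed.

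\textbf{Main obstacle.} None of these arguments is deep; the only subtlety is ensuring that the ``returns ${\tt true}$ $\Rightarrow$ back at $v$'' clauses truly require that every inner backtracking loop has run to completion, i.e., that no early \texttt{return false} (as at line~\ref{alg:ball:retfal} of Algorithm~\ref{alg:btraversal}, or at line~\ref{line:clean:card} of Algorithm~\ref{alg:ensure}) was executed. This is an immediate consequence of the control flow: in each of the three routines, the only way to exit with value ${\tt true}$ is to fall through the outer foreach, which in turn requires every iteration of the inner backtracking while-loop to have completed.
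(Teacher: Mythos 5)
Your proof is correct and matches the paper's approach: the paper states this proposition without proof, asserting that it "follows directly from the formulation" of the algorithms, and your routine-by-routine inspection is exactly the straightforward verification being alluded to. The one subtlety you flag — that a {\tt true} return value forces every inner backtracking loop to have run to completion — is indeed the only point worth making explicit, and you handle it correctly.
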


			\begin{proposition}
				\label{proof:unk:sizecheck}
				Let $x$ be a positive integer, and let  $A$ be an agent which calls function ${\tt Graph\-Size\-Check}(x)$. The execution by agent $A$ of ${\tt Graph\-Size\-Check}(x)$ lasts exactly $2k_x{\tt T}({\tt EST}(n_x))$ rounds.
			\end{proposition}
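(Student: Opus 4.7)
The proof proposal is essentially a direct reading of the pseudocode of Algorithm~\ref{alg:SizeCheck}, so I would keep it short and structured around a simple invariant on the \texttt{for} loop.

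First I would show, by induction on $i \in \{1,\ldots,k_x\}$, that at the round in which agent $A$ completes the $i$th iteration of the \texttt{for} loop of Algorithm~\ref{alg:SizeCheck}, it has spent exactly $2i\,{\tt T}({\tt EST}(n_x))$ rounds in the current execution of ${\tt GraphSizeCheck}(x)$. The key observation is that each iteration ends with the instruction of line~\ref{line:size:wait}, which forces the elapsed count to be exactly $2i\,{\tt T}({\tt EST}(n_x))$, provided the agent has not already exceeded this value before reaching that line in iteration~$i$. Once this invariant is established for $i=k_x$, the proposition follows, because the final \textbf{return} statement is not a move instruction and, per the conventions of Section~\ref{prelim}, contributes no round.

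The inductive step reduces to verifying that line~\ref{line:size:wait} is feasible, i.e., that prior to executing it in iteration $i$, the elapsed time has not exceeded $2i\,{\tt T}({\tt EST}(n_x))$. By the induction hypothesis, the elapsed time at the start of iteration $i$ equals $2(i-1)\,{\tt T}({\tt EST}(n_x))$ (and $0$ when $i=1$). Inside iteration $i$, the only instruction besides the wait is a possible call to ${\tt EST}^+(n_x)$, triggered exactly once across the whole loop (when $i-1={\tt rank}_h(L)$). It therefore suffices to show that the execution of ${\tt EST}^+(n_x)$ lasts at most $2\,{\tt T}({\tt EST}(n_x))$ rounds. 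This is immediate from the definition of ${\tt EST}^+(n_x)$ recalled just before Algorithm~\ref{alg:SizeCheck}: its first part is a simulation of ${\tt EST}$ interrupted after at most ${\tt T}({\tt EST}(n_x))$ rounds, and its second part is a backtrack along the edges traversed in the first part, hence of length at most ${\tt T}({\tt EST}(n_x))$ as well.

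I do not anticipate any genuine obstacle; the only place where a subtlety might lurk is the bookkeeping of rounds across the iteration that contains the ${\tt EST}^+$ call, and this is handled precisely by the two-part bound above. The proof is therefore a one-line induction together with the explicit duration bound on ${\tt EST}^+(n_x)$, yielding exactly $2k_x\,{\tt T}({\tt EST}(n_x))$ rounds when the loop terminates.
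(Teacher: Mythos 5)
Your proposal is correct and matches the paper's treatment: the paper states this proposition without proof as "following directly from the formulation" of Algorithm~\ref{alg:SizeCheck}, and your argument simply makes explicit the two facts that justify this, namely that line~\ref{line:size:wait} pins each iteration to end at exactly $2i\,{\tt T}({\tt EST}(n_x))$ elapsed rounds, and that the single call to ${\tt EST}^+(n_x)$ (first part capped at ${\tt T}({\tt EST}(n_x))$ rounds, backtrack no longer than the first part) cannot overshoot that bound. Nothing is missing.
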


			Note that in view of Algorithms~\ref{alg:mainnoupper} and~\ref{alg:hyp}, for a given positive integer $x$, agent $A$ cannot call ${\tt Hypothesis}(x)$ or start executing line~\ref{alg:prepro2} of Algorithm~\ref{alg:hyp} during its execution of ${\tt Hypothesis}(x)$ more than once. In the rest of this subsection, given an agent $A$ we will denote by $v_A$ its initial node and by $t_A$ the round at which it wakes up. Moreover, we will denote by $s_{A,x}$ (resp. $w_{A,x}$) the round $t_A+\sum_{i=1}^{x-1}\mathcal{T}_i$ (resp. $s_{A,x}+{\tt T}({\tt Ball\-Traversal}(x))$).

			The following lemmas are related to the rounds at which the agents start their different calls to function ${\tt Hypothesis}$, and provide some properties about the beginning of any execution of this function. In particular, for every positive integer $x$, they aim at proving an upper bound on the time required to execute ${\tt Hypothesis}(x)$, and some synchronization property in the form of an upper bound on the delay between the rounds at which every pair of agents start executing ${\tt Hypothesis}(x)$. These two goals are achieved by Lemmas~\ref{proof:unk:timebound2} and~\ref{proof:unk:delay1}, respectively.

			\begin{lemma}
				\label{proof:unk:initnode}
				Let $A$ be an agent and let $x$ be a positive integer. If agent $A$ calls function ${\tt Hypothesis}(x)$, then it does so at node $v_A$.
			\end{lemma}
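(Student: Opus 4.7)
The plan is to proceed by induction on $x$, using the structure of Algorithm~\ref{alg:mainnoupper} and the backtracking mechanism at the end of Algorithm~\ref{alg:hyp}.

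For the base case $x=1$, observe that in Algorithm~\ref{alg:mainnoupper} the variable $h$ is initialized to $0$ and incremented to $1$ in the first iteration of the repeat loop, before any call to ${\tt Hypothesis}$ is made. Since by definition agent $A$ occupies $v_A$ in round $t_A$ (the round of its wake-up) and performs no move instructions between $t_A$ and the invocation of ${\tt Hypothesis}(1)$, agent $A$ is located at $v_A$ when it calls ${\tt Hypothesis}(1)$.

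For the inductive step, suppose $x\geq 2$ and that agent $A$ does call ${\tt Hypothesis}(x)$. Then, by Algorithm~\ref{alg:mainnoupper}, agent $A$ must have previously called ${\tt Hypothesis}(x-1)$ and this call must have returned ${\tt false}$ (otherwise the repeat loop would terminate). By the inductive hypothesis, the call to ${\tt Hypothesis}(x-1)$ was made at $v_A$. Between this call and the call to ${\tt Hypothesis}(x)$, agent $A$ only executes the body of ${\tt Hypothesis}(x-1)$; it thus suffices to show that when ${\tt Hypothesis}(x-1)$ returns ${\tt false}$, agent $A$ is located at the node from which it started executing ${\tt Hypothesis}(x-1)$.

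To prove this, I would look at what happens during the execution of ${\tt Hypothesis}(h)$ with $h=x-1$. The first part (the block between lines~\ref{alg:debut} and~\ref{alg:fin}) is the only place where move instructions of type {\tt take port} may occur, because the subsequent ``second part'' either performs the backtrack loop (lines~\ref{alg:hyp:lineback1}--\ref{alg:hyp:lineback2}) or waits. A key observation is that, since ${\tt Hypothesis}(h)$ returned ${\tt false}$, the variable $e$ built at line~\ref{alg:hyp:lineback1} indeed records, in chronological order, the entry ports of every edge traversal performed during the first part (with $|e|$ equal to the total number of such traversals). Let $u_0,u_1,\ldots,u_k$ denote the sequence of nodes occupied by $A$ just after each move of the first part, with $u_0$ being the node from which ${\tt Hypothesis}(h)$ was called; by definition of $e$, for each $1\leq i\leq k$ the port $e[i]$ is the port at $u_i$ leading to $u_{i-1}$. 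A straightforward induction on the iterations of the while loop of the second part then shows that, after the iteration treating index $i$, agent $A$ is located at $u_{i-1}$; hence after the last iteration, agent $A$ is back at $u_0$. Since by the inductive hypothesis $u_0=v_A$, the waiting instruction at line~\ref{alg:hyp:linewait2} leaves the agent at $v_A$, and ${\tt Hypothesis}(x)$ is therefore called at $v_A$, which completes the induction.

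The main (mild) obstacle is making sure the sequence $e$ really captures every entry port of the first part, including those performed deep inside the nested calls to ${\tt BallTraversal}$, ${\tt MoveToCentralNode}$, ${\tt StarCheck}$, ${\tt EnsureCleanExploration}$ and ${\tt GraphSizeCheck}$; this is however a direct consequence of the definition of $e$ given at line~\ref{alg:hyp:lineback1} together with the fact that these subroutines only interact with the graph through {\tt take port} and {\tt wait} instructions, so nothing can occur between them that is not faithfully recorded in $e$.
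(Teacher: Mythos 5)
Your proposal is correct and follows essentially the same route as the paper's proof: induction on $x$, with the base case handled by the absence of moves before the first call, and the inductive step reduced to the observation that the backtrack loop at lines~\ref{alg:hyp:lineback1}--\ref{alg:hyp:lineback2} of Algorithm~\ref{alg:hyp} returns the agent to its starting node whenever ${\tt Hypothesis}(x-1)$ returns {\tt false}. The paper states this last fact without elaboration, whereas you unfold it into an explicit inner induction on the reversed port sequence $e$; this is just a more detailed rendering of the same argument.
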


			\begin{proof}
				This proof is made by induction on $x$.

				First consider the base case $x=1$. Function ${\tt Hypothesis}(1)$ is called exactly once, at the beginning of the execution of ${\tt GatherUnknownUpperBound}$, before any move instruction. Hence, agent $A$ starts executing ${\tt Hypothesis}(1)$ at node $v_A$.

				Assume that there exists a positive integer $y$ such that agent $A$ starts executing ${\tt Hypothesis}(y)$ at node $v_{A}$. We aim at showing that, if agent $A$ starts executing ${\tt Hypothesis}(y+1)$, then it does so at node $v_{A}$. Assume that agent $A$ indeed starts executing ${\tt Hypothesis}(y+1)$. It does so only once and only if ${\tt Hypothesis}(y)$ returns {\tt false} (refer to Algorithm~\ref{alg:mainnoupper}). In view of lines~\ref{alg:hyp:lineback1} to~\ref{alg:hyp:lineback2} of Algorithm~\ref{alg:hyp} and the induction hypothesis, when ${\tt Hypothesis}(y)$ returns {\tt false}, agent $A$ is back at node $v_A$. Since there is no wait or move instruction between the end of ${\tt Hypothesis}(y)$ and the beginning of ${\tt Hypothesis}(y+1)$, node $v_A$ is the node from which agent $A$ starts ${\tt Hypothesis}(y+1)$. This concludes the inductive proof of this lemma.
			\end{proof}

			The following lemma is a rather technical one exhibiting the properties of function ${\tt BallTraversal}$. As explained in Section~\ref{subsec:intuition}, this function is used in particular to upper bound the durations of other routines such as ${\tt StarCheck}$, in the proof of Lemma~\ref{proof:unk:timebound1}. It is also crucial to establish the synchronization property of Lemma~\ref{proof:unk:delay1}, and the statement of the properties provided by the second scheme in Lemma~\ref{proof:unk:scheme}.

			\begin{lemma}
				\label{proof:unk:degree}
				Let $x$ be a positive integer. Let $A$ be an agent that calls ${\tt BallTraversal}(x)$. The execution by agent $A$ of ${\tt BallTraversal}(x)$ returns {\tt true} if and only if the degree of every node at distance at most $4xm_x^5$ from node $v_A$ is at most $n_x-1$. Moreover, if the execution of ${\tt BallTraversal}(x)$ by agent $A$ returns {\tt true}, then during this execution, agent $A$ visits every node at distance at most $4xm_x^5$ from node $v_A$.
			\end{lemma}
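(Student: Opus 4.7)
The plan is to prove the equivalence and the additional visit property by relating the enumeration of paths in \texttt{BallTraversal} to actual shortest paths in the graph, exploiting the invariant that a node of degree at most $n_x-1$ only has port numbers in $\{0,\ldots,n_x-2\}$ (which is precisely the port alphabet from which the enumeration at line~\ref{wordprocess} is drawn).

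First I would establish the easy direction ($\Leftarrow$). Assume every node at distance at most $4xm_x^5$ from $v_A$ has degree at most $n_x-1$. Observe that during any iteration of the for-each loop, the agent processes a path of length at most $4xm_x^5$, so every node it visits (both while following $x$ and while backtracking along $e$) is at distance at most $4xm_x^5$ from $v_A$. Hence the check at line~\ref{alg:ball:retfal} never succeeds, so the execution completes all iterations and returns \texttt{true}.

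For the ($\Rightarrow$) direction I would argue by contrapositive. Suppose some node at distance at most $4xm_x^5$ from $v_A$ has degree at least $n_x$; pick such a node $u$ at minimum distance $d\le 4xm_x^5$. Then every node strictly closer than $u$ has degree at most $n_x-1$. Take a shortest $v_A\!\to\!u$ path, whose successive out-ports $p_1,\ldots,p_d$ all lie in $\{0,\ldots,n_x-2\}$ because each originates at an internal node of degree at most $n_x-1$. Pad this word with any port from $\{0,\ldots,n_x-2\}$ up to length $4xm_x^5$; the resulting word is one of the paths enumerated at line~\ref{wordprocess}. During the iteration for this word, the agent successfully takes the first $d$ ports (all ports exist and no degree check has triggered), arrives at $u$, and at the next iteration of the while loop the degree test fires, so \texttt{BallTraversal} returns \texttt{false}. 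This proves the equivalence.

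For the ``moreover'' clause, assume the execution returns \texttt{true}. By the equivalence just proved, every node at distance at most $4xm_x^5$ from $v_A$ has degree at most $n_x-1$. Let $w$ be an arbitrary such node at distance $d\le 4xm_x^5$. Exactly as above, a shortest $v_A\!\to\!w$ path uses only ports in $\{0,\ldots,n_x-2\}$; pad it to length $4xm_x^5$ to obtain a word $x$ enumerated at line~\ref{wordprocess}. During the corresponding iteration, each of the first $d$ port requests succeeds (the ports exist because the degrees allow it) and no degree check fires (we already know all visited nodes lie in the ball and have degree at most $n_x-1$), so after $d$ executions of line~\ref{line:btrav:take1} the agent is at $w$. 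Hence $w$ is visited, and since $w$ was arbitrary, every node at distance at most $4xm_x^5$ from $v_A$ is visited during the call.

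The main subtlety—and the step I would be most careful about—is justifying that the padded word used in both the $(\Rightarrow)$ direction and the ``moreover'' part is actually processed far enough to reach the target node $u$ (resp.\ $w$): this requires simultaneously ruling out a premature \texttt{false} return (controlled by the choice of $u$ as closest bad node, or by the hypothesis in the second part) and ruling out a missing port along the prefix (controlled by the fact that internal nodes on a shortest path have degree at most $n_x-1$, so the port indices in the shortest path lie in the enumerated alphabet $\{0,\ldots,n_x-2\}$).
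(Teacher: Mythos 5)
Your proof takes essentially the same route as the paper's: both rest on the observation that every port on a path staying inside the ball of radius $4xm_x^5$ around $v_A$ lies in $\{0,\dots,n_x-2\}$ (because the node it leaves has degree at most $n_x-1$), so the exhaustive enumeration at line~\ref{wordprocess} realizes every shortest path to every node of the ball, and conversely no iteration ever leaves the ball. In fact your write-up is more explicit than the paper's, which compresses the whole argument into a few sentences and never spells out the padding of shortest paths.

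One boundary remark, which applies equally to the paper's own proof: in your $(\Rightarrow)$ step you assert that after the agent takes the $d$-th port and arrives at $u$, ``at the next iteration of the while loop the degree test fires.'' If $d=4xm_x^5$ exactly, the counter $i$ has just been incremented to $4xm_x^5+1$, the while condition $i\leq 4xm_x^5$ fails, and no further iteration occurs, so the degree of $u$ is never examined in that (or any) iteration of the for loop: the degree check at line~\ref{alg:ball:retfal} only ever inspects nodes reached by prefixes of length at most $4xm_x^5-1$, i.e.\ nodes at distance at most $4xm_x^5-1$. So the ``only if'' direction, as stated for the full closed ball of radius $4xm_x^5$, is not actually delivered by the algorithm for nodes at distance exactly $4xm_x^5$. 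The paper's proof glosses over the same point (it asserts without justification that every visited node, including the terminal node of each enumerated path, has its degree verified), so this is a shared off-by-one rather than a defect specific to your argument; the ``moreover'' (visiting) clause and the $(\Leftarrow)$ direction are unaffected.
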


			\begin{proof}
				In view of Lemma~\ref{proof:unk:initnode}, the execution by agent $A$ of ${\tt Ball\-Traversal}(x)$ starts at node $v_A$. In view of lines~\ref{tr:debwhile} to~\ref{tr:endwhile} of Algorithm~\ref{alg:btraversal}, at the beginning of each iteration of the for loop of Algorithm~\ref{alg:btraversal}, agent $A$ is at node $v_A$. Each iteration of this loop consists of considering a path of length $4xm_x^5$ whose elements belong to $\{0, \dots, n_x-2\}$ and following it. Finding a node with degree at least $n_x$ leads function ${\tt Ball\-Traversal}(x)$ to return {\tt false}, and this is the single event which makes this function return {\tt false}. Since the execution by agent $A$ returns {\tt true}, for each path $\pi$ followed during one of the for loop iterations, agent $A$ visits all nodes of $\mathcal{N}(\pi, v_A)$ and each of them has degree at most $n_x-1$. Finally, in view of line~\ref{wordprocess} of Algorithm~\ref{alg:btraversal}, for each path of length $4xm_x^5$ whose elements belong to $\{0, \dots, n_x-2\}$, there is one iteration of the for loop which processes it. All paths of length $4xm_x^5$ from node $v_A$ are followed, which completes the proof.
			\end{proof}

			The three following lemmas state upper bounds on the number of rounds required to execute some of our routines. The two first ones are related to functions called by ${\tt Hypothesis}(x)$ (${\tt BallTraversal}(x)$, and some routines of $\mathcal{F}(x)$) while the third one concerns ${\tt Hypothesis}(x)$ itself. The second lemma is particularly important as, at a high level, it provides the duration of the waiting periods necessary to set up the first scheme presented in Section~\ref{subsec:intuition}. The three functions whose duration this lemma upper bounds (${\tt StarCheck}$, ${\tt EnsureCleanExploration}$, and ${\tt GraphSizeCheck}$) are the ones during the execution of which an agent must not be misled by agents executing other hypotheses: they are the sensitive part of function ${\tt Hypothesis}$.

			\begin{lemma}
				\label{proof:unk:timebound0}
				Let $x$ be a positive integer. Executing ${\tt BallTraversal}(x)$ requires at most ${\tt T}({\tt BallTraversal}(x))=64xm_x^{7xm_x^5}$ rounds.
			\end{lemma}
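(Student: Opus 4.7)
The plan is to count rounds by looking directly at the structure of Algorithm~\ref{alg:btraversal}. The outer \textbf{for} loop iterates once for each path of length $4xm_x^5$ with entries in $\{0,\dots,n_x-2\}$; there are exactly $(n_x-1)^{4xm_x^5}$ such paths, and since $n_x \le m_x$, the number of outer iterations is at most $m_x^{4xm_x^5}$.

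Next I would bound the number of rounds consumed by a single outer iteration. Each such iteration contains two \textbf{while} loops, one to follow the path and one to backtrack, each executing at most $4xm_x^5$ iterations. Inside a single while-iteration, the only instructions that actually consume rounds are the waiting period of $7m_x^{2m_x^5}$ rounds (line~\ref{slowdown1} or line~\ref{slowdown2}) and the single \textbf{take port} instruction (line~\ref{line:btrav:take1} or line~\ref{line:btrav:take2}); the conditional tests and assignments are purely local computations. Hence one while-iteration costs at most $7m_x^{2m_x^5}+1 \le 8m_x^{2m_x^5}$ rounds, and a full outer iteration costs at most $2 \cdot 4xm_x^5 \cdot 8m_x^{2m_x^5} = 64x\cdot m_x^{5+2m_x^5}$ rounds.

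Multiplying this per-iteration bound by the number of outer iterations, the total cost is at most $64x \cdot m_x^{4xm_x^5 + 2m_x^5 + 5}$. To match the claimed bound $64xm_x^{7xm_x^5}$, the last step is to verify that the exponent satisfies $4xm_x^5 + 2m_x^5 + 5 \le 7xm_x^5$, i.e., $2m_x^5 + 5 \le 3xm_x^5$. Since every configuration in the enumeration $\Omega$ has size at least $2$, we have $m_x \ge 2$ and thus $m_x^5 \ge 32$; the inequality then holds for every $x \ge 1$ (already for $x=1$, $3m_x^5 - 2m_x^5 = m_x^5 \ge 32 \ge 5$).

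The argument is almost entirely arithmetic, so I do not expect a genuine obstacle. The two points that need care are (i) remembering that both the forward and backward \textbf{while} loops contribute, producing the factor of $2$, and (ii) invoking $m_x \ge 2$ to absorb the additive constant $5$ and the leftover $2m_x^5$ term into $3xm_x^5$ so that the clean bound $64x m_x^{7xm_x^5}$ emerges.
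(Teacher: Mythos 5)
Your proof is correct and follows essentially the same route as the paper's: count the $(n_x-1)^{4xm_x^5}\le m_x^{4xm_x^5}$ paths, bound each path's processing by $8xm_x^5(1+7m_x^{2m_x^5})$ rounds (forward and backward while loops), and absorb the leftover exponent terms using $m_x\ge 2$. The only difference is that you spell out the final exponent comparison $4xm_x^5+2m_x^5+5\le 7xm_x^5$ explicitly, which the paper leaves implicit.
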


			\begin{proof}
				Function ${\tt Ball\-Traversal}(x)$ consists of processing $(n_x-1)^{4xm_x^5}$ distinct paths. The process of each of these paths requires at most $8xm_x^5(1+7m_x^{2m_x^5})$ rounds. Thus, the number of rounds needed by the execution of ${\tt Ball\-Traversal}(x)$ is at most $8xm_x^5n_x^{4xm_x^5}(1+7m_x^{2m_x^5})$. This is at most $64xm_x^{(4x+2)m_x^5+5}\leq 64xm_x^{7xm_x^5}$ rounds (which is referred to as ${\tt T}({\tt Ball\-Traversal}(x))$ in Section~\ref{sec:ninconnu}). The lemma is proved.
			\end{proof}

			\begin{lemma}
				\label{proof:unk:timebound1}
				Let $x$ be a positive integer. During every execution of ${\tt Hypothesis}(x)$, the total duration of the execution of lines~\ref{line:hyp:star} to~\ref{alg:check2} of Algorithm~\ref{alg:hyp} (\ie routines ${\tt StarCheck}(x)$, ${\tt EnsureCleanExploration}(x)$, and ${\tt GraphSizeCheck}(x)$) is at most $7n_x^{2n_x^5}$ rounds.
			\end{lemma}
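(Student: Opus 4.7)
The plan is to upper bound the duration of each of the three routines ${\tt StarCheck}(x)$, ${\tt EnsureCleanExploration}(x)$ and ${\tt GraphSizeCheck}(x)$ separately and then sum, exploiting the fact that the exponent $2n_x^5$ in the target bound is very generous.

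First I would bound ${\tt StarCheck}(x)$. The key observation is that these three routines are executed only after ${\tt BallTraversal}(x)$ returned {\tt true} (guaranteeing $b={\tt true}$ to let the flow of Algorithm~\ref{alg:hyp} reach line~\ref{line:hyp:star}) and after ${\tt MoveToCentralNode}(x)$ returned {\tt true}. By Proposition~\ref{proof:unk:distances}, during ${\tt MoveToCentralNode}(x)$ agent $A$ remains at distance at most $n_x-1$ from $v_A$, and by Lemma~\ref{proof:unk:degree} every node at distance at most $4xm_x^5\geq n_x-1$ from $v_A$ has degree at most $n_x-1$. Hence the value $d$ read at the beginning of ${\tt StarCheck}(x)$ satisfies $d\leq n_x-1$. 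Inspecting Algorithm~\ref{alg:unk:star}, each iteration of the inner for loop consists either of a block of $2d$ moves (lines~\ref{line:star:starb}--\ref{line:star:stare}) or of a block of $2d$ wait instructions (lines~\ref{line:star:tokb}--\ref{line:star:toke}), giving $2\cdot k_x\cdot 2d\leq 4n_x^2$ rounds in total.

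Next I would bound ${\tt EnsureCleanExploration}(x)$. Inspecting Algorithm~\ref{alg:ensure}, the outer loop has $2$ iterations and the for-each loop enumerates at most $(n_x-1)^{n_x^5+1}$ paths; for each such path at most $n_x^5+1$ moves are performed in the while loop and at most $n_x^5+1$ moves in the backtrack of line~\ref{line:clean:back}. This gives at most $2\cdot(n_x-1)^{n_x^5+1}\cdot 2(n_x^5+1)\leq 4(n_x^5+1)n_x^{n_x^5+1}$ rounds. For ${\tt GraphSizeCheck}(x)$, Proposition~\ref{proof:unk:sizecheck} immediately gives $2k_x{\tt T}({\tt EST}(n_x))=2k_x n_x^5\leq 2n_x^6$ rounds.

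Summing the three bounds yields at most
\[
4n_x^2+4(n_x^5+1)n_x^{n_x^5+1}+2n_x^6\;\leq\;14\, n_x^{n_x^5+6}.
\]
It then remains to check that $14\, n_x^{n_x^5+6}\leq 7\, n_x^{2n_x^5}$, i.e. that $2\leq n_x^{n_x^5-6}$, which holds for every $n_x\geq 2$ since then $n_x^5-6\geq 26$. The main obstacle is really the first step: one must remember to invoke both Proposition~\ref{proof:unk:distances} and Lemma~\ref{proof:unk:degree} to cap the degree of the node where ${\tt StarCheck}(x)$ runs; once that is done, the remaining two bounds are direct counting and the final comparison is elementary.
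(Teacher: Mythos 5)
Your proposal is correct and follows essentially the same route as the paper's proof: it bounds each of the three routines separately (capping the degree $d$ of the node where ${\tt StarCheck}(x)$ runs via Proposition~\ref{proof:unk:distances} and Lemma~\ref{proof:unk:degree}, counting the enumerated paths in ${\tt EnsureCleanExploration}(x)$, and invoking Proposition~\ref{proof:unk:sizecheck} for ${\tt GraphSizeCheck}(x)$) and then sums, using $n_x\geq 2$ for the final comparison with $7n_x^{2n_x^5}$. The only differences are slightly looser constants, which do not affect the conclusion.
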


			\begin{proof}
				Executing function ${\tt Star\-Check}(x)$ (refer to Algorithm~\ref{alg:unk:star}) takes exactly $4dk_x$ rounds, where $d$ is the degree of the node $v$ in which the execution starts. In view of Algorithm~\ref{alg:hyp}, every execution of function ${\tt StarCheck}(x)$ is preceded by executions of routines ${\tt Ball\-Traversal}(x)$ and ${\tt MoveToCentralNode}(x)$. In view of Proposition~\ref{proof:unk:distances} and Lemma~\ref{proof:unk:initnode}, node $v$ is at distance at most $n_x-1$ from node $v_A$. In view of Lemma~\ref{proof:unk:degree}, this means that its degree is at most $n_x-1$. Hence, the number of rounds taken by the execution of ${\tt StarCheck}(x)$ is at most $4n_x(n_x-1)$.

				Function ${\tt Ensure\-Clean\-Exploration}(x)$ (refer to Algorithm~\ref{alg:ensure}) consists of processing $2(n_x-1)^{n_x^5+1}$ paths. The process of each of these paths requires at most $2(n_x^5+1)$ edge traversals. Since $n_x\geq 2$, the number of rounds required by the execution of this function is thus at most $4n_x^{n_x^5+6}$.

				In view of Proposition~\ref{proof:unk:sizecheck}, every execution of function ${\tt Graph\-Size\-Check}(x)$ requires at most $2n_x^6$ rounds.

				Hence, the total duration of the executions of the three functions is at most $4n_x^{n_x^5+6}+2n_x^6+4n_x(n_x-1)$. Since $n_x\geq 2$, $4n_x(n_x-1)\leq n_x^6$. This implies that $4n_x^{n_x^5+6}+2n_x^6+4n_x(n_x-1)\leq 4n_x^{n_x^5+6}+3n_x^6$ which is at most $7n_x^{n_x^5+6}$ and thus at most $7n_x^{2n_x^5}$, since $n_x\geq2$.
			\end{proof}

			\begin{lemma}
				\label{proof:unk:timebound2}
				Let $x$ be a positive integer. Let $A$ be an agent. If $A$ calls ${\tt Hypothesis}(x)$, then it does so (resp. starts executing line~\ref{alg:prepro2} of Algorithm~\ref{alg:hyp} during its execution of ${\tt Hypothesis}(x)$) in round $s_{A,x}$ (resp. in round $w_{A,x}$ at the latest). If agent $A$ has not declared that the gathering is achieved by round $s_{A,x}+1$ (resp. $w_{A,x}+1$), then in round $s_{A,x}$ (resp. $w_{A,x}$), it starts executing ${\tt Hypothesis}(x)$ (resp. has started executing line~\ref{alg:prepro2} of Algorithm~\ref{alg:hyp} during its execution of ${\tt Hypothesis}(x)$).
			\end{lemma}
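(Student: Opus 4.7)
The plan is to proceed by induction on $x$.

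For the base case $x=1$, note that $s_{A,1}=t_A$ is the wake-up round of $A$, and by Algorithm~\ref{alg:mainnoupper} the very first effective action of $A$ is the call ${\tt Hypothesis}(1)$, so this call happens in round $s_{A,1}$. For the ``resp.'' part, the only instruction preceding line~\ref{alg:prepro2} inside ${\tt Hypothesis}(1)$ is the evaluation of ${\tt BallTraversal}(1)$ at line~\ref{alg:prepro1}, which by Lemma~\ref{proof:unk:timebound0} lasts at most ${\tt T}({\tt BallTraversal}(1))$ rounds; so if line~\ref{alg:prepro2} is reached, it is reached at the latest in round $s_{A,1}+{\tt T}({\tt BallTraversal}(1))=w_{A,1}$. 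The conditional clauses of the lemma are vacuous for $x=1$ since $A$ cannot have declared the gathering before executing any instruction.

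For the inductive step, the key observation will be that if the execution of ${\tt Hypothesis}(x)$ by $A$ does not return ${\tt true}$, then it lasts exactly $\mathcal{T}_x$ rounds. Granted this, the rest is straightforward: Algorithm~\ref{alg:mainnoupper} invokes ${\tt Hypothesis}(x+1)$ in the round immediately after ${\tt Hypothesis}(x)$ returns; by the inductive hypothesis the latter started in round $s_{A,x}$, so ${\tt Hypothesis}(x+1)$ starts in round $s_{A,x}+\mathcal{T}_x=s_{A,x+1}$. Applying Lemma~\ref{proof:unk:timebound0} to ${\tt BallTraversal}(x+1)$ then yields that line~\ref{alg:prepro2} of ${\tt Hypothesis}(x+1)$ is reached (if at all) at the latest in round $s_{A,x+1}+{\tt T}({\tt BallTraversal}(x+1))=w_{A,x+1}$. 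The two conditional clauses will follow because, per Algorithm~\ref{alg:mainnoupper}, the only way for $A$ to stop executing before the nominal round $s_{A,x}$ or $w_{A,x}$ is to have declared the gathering, which itself only occurs immediately after some ${\tt Hypothesis}$ call returned ${\tt true}$.

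The hard part will be establishing the duration bound $\mathcal{T}_x$ on ${\tt Hypothesis}(x)$. The wait of line~\ref{alg:hyp:linewait2} pads the total to exactly $\mathcal{T}_x$ provided the preceding instructions do not already exceed this budget, so the task reduces to bounding the first part plus the backtrack. For the first part, I would combine Lemmas~\ref{proof:unk:timebound0} and~\ref{proof:unk:timebound1} with a direct inspection of Algorithm~\ref{alg:movecentral} (which contributes at most $(n_x-1)+2(\mathcal{S}_x+n_x)$ rounds) to obtain the upper bound ${\tt T}({\tt BallTraversal}(x))+3\mathcal{S}_x+3n_x-1+7n_x^{2n_x^5}$. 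The backtrack of lines~\ref{alg:hyp:lineback1}--\ref{alg:hyp:lineback2} costs $|e|(7m_x^{2m_x^5}+1)$ rounds, where $|e|$ is the number of port-entering moves made during the first part and is therefore bounded by the first-part duration itself. A short computation, using the definition $\mathcal{T}_x=8m_x^{2m_x^5}(3\mathcal{S}_x+2{\tt T}({\tt BallTraversal}(x)))$ and the trivial inequality $7n_x^{2n_x^5}\le{\tt T}({\tt BallTraversal}(x))$ (comparing exponents with Lemma~\ref{proof:unk:timebound0}), confirms that the total fits within $\mathcal{T}_x$; the factor $8m_x^{2m_x^5}$ in the definition of $\mathcal{T}_x$ is precisely tailored to absorb the multiplicative blow-up $7m_x^{2m_x^5}+1$ introduced by the slowdowns of the backtrack.
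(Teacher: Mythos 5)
Your proposal is correct and follows essentially the same route as the paper's proof: induction on $x$, the key step being that a non-{\tt true} execution of ${\tt Hypothesis}(x)$ lasts exactly $\mathcal{T}_x$ rounds, obtained by bounding the first part by $3\mathcal{S}_x+2{\tt T}({\tt BallTraversal}(x))$ via Lemmas~\ref{proof:unk:timebound0} and~\ref{proof:unk:timebound1} plus inspection of ${\tt MoveToCentralNode}$, bounding the backtrack by the first-part duration times $1+7m_x^{2m_x^5}$, and letting the wait at line~\ref{alg:hyp:linewait2} pad the total. The only cosmetic difference is that the paper dispatches the $w_{A,x}$ clauses once at the outset (reducing everything to the $s_{A,x}$ claims) rather than inside the induction, which changes nothing of substance.
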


			\begin{proof}
				First note that if agent $A$ starts executing ${\tt Hypothesis}(x)$ at a round $r$, then in view of Lemma~\ref{proof:unk:timebound0}, it starts executing line~\ref{alg:prepro2} of Algorithm~\ref{alg:hyp} during its execution of ${\tt Hypothesis}(x)$, at the latest in round $r+{\tt T}({\tt Ball\-Traversal}(x))$. Thus, to prove the lemma, it is enough to show that if agent $A$ calls ${\tt Hypothesis}(x)$, then it does so in round $s_{A,x}$, and that if agent $A$ has not declared that the gathering is achieved by round $s_{A,x}+1$, then in round $s_{A,x}$ it starts executing ${\tt Hypothesis}(x)$.

				This proof is by induction on $x$. First consider the base case $x=1$. In view of Algorithm~\ref{alg:mainnoupper}, agent $A$ starts executing ${\tt Hypothesis}(1)$ in round $t_A=s_{A,1}$ when it wakes up, and in which round it cannot declare that the gathering is achieved. This completes the analysis of the case when $x=1$.

				We have two parts of the induction step. Assuming that there exists a positive integer $y$ such that if agent $A$ calls ${\tt Hypothesis}(y)$, then it does so in round $s_{A,y}$, we aim at showing that if agent $A$ calls ${\tt Hypothesis}(y+1)$, then it does so in round $s_{A,y+1}$. Moreover, assuming that there exists a positive integer $y$ such that if agent $A$ has not declared that the gathering is achieved by round $s_{A,y}+1$, then in round $s_{A,y}$ it starts executing ${\tt Hypothesis}(y)$, we aim at showing that if agent $A$ has not declared that the gathering is achieved by round $s_{A,y+1}+1$, then in round $s_{A,y+1}$ it starts executing ${\tt Hypothesis}(y+1)$.

				Let us first show that if agent $A$ starts executing ${\tt Hypothesis}(y)$, then it uses a time of at most $3\mathcal{S}_y+2{\tt T}({\tt Ball\-Traversal}(y))$ rounds executing the first part of ${\tt Hypothesis}(y)$ (lines~\ref{alg:prepro1} to~\ref{alg:fin} of Algorithm~\ref{alg:hyp}). Every execution of function ${\tt Move\-To\-Central\-Node}(y)$ consists, in the worst case, of performing $n_y-1$ edge traversals and waiting twice $\mathcal{S}_y+n_y$ rounds. Hence, in view of Lemmas~\ref{proof:unk:timebound0} and~\ref{proof:unk:timebound1}, and taking into account line~\ref{alg:prepro2} of Algorithm~\ref{alg:hyp}, agent $A$ spends at most $3\mathcal{S}_y+3n_y+7n_y^{2n_y^5}+{\tt T}({\tt Ball\-Traversal}(y))$ rounds executing the first part of ${\tt Hypothesis}(y)$. This is at most $3\mathcal{S}_y+2{\tt T}({\tt Ball\-Traversal}(y))$ rounds.

				In view of the induction hypotheses, this implies that if the execution of ${\tt Hypothesis}(y)$  by agent $A$ returns {\tt true}, then it does so (and thus agent $A$ declares that the gathering is achieved) in round $s_{A,y}+3\mathcal{S}_y+2{\tt T}({\tt Ball\-Traversal}(y))$ at the latest. If this happens, then agent $A$ does not start executing ${\tt Hypothesis}(y+1)$, and has declared that the gathering is achieved by round $s_{A,y+1}+1$.

				We now assume that agent $A$ calls ${\tt Hypothesis}(y+1)$ or that it has not declared that the gathering is achieved by round $s_{A,y+1}+1$. In view of the previous paragraph, the execution by agent $A$ of ${\tt Hypothesis}(y)$ cannot return {\tt true}. Note that since $s_{A,y+1}-s_{A,y}=\mathcal{T}_y$, showing that if agent $A$ starts executing ${\tt Hypothesis}(y)$, then it spends exactly $\mathcal{T}_y$ rounds executing this routine is enough to complete the proof of the lemma.

				At the end of the second part of ${\tt Hypothesis}(y)$, before returning {\tt false}, agent $A$ waits until having spent at least $\mathcal{T}_y$ rounds executing ${\tt Hypothesis}(y)$ (refer to line~\ref{alg:hyp:linewait2} of Algorithm~\ref{alg:hyp}). Unless the round in which the execution by agent $A$ of lines~\ref{alg:debut} to~\ref{alg:hyp:lineback2} of ${\tt Hypothesis}(y)$ is completed is strictly later than round $s_{A,y} + \mathcal{T}_y$, the execution by agent $A$ of ${\tt Hypothesis}(y)$ is completed in round $s_{A,y} + \mathcal{T}_y$.

				Since agent $A$ spends at most $3\mathcal{S}_y+2{\tt T}({\tt Ball\-Traversal}(y))$ rounds in the execution of the first part of ${\tt Hypothesis}(y)$, the execution of lines~\ref{alg:hyp:lineback1} to~\ref{alg:hyp:lineback2} during the second part of ${\tt Hypothesis}(y)$ requires at most $(1+7m_y^{2m_y^5})(3\mathcal{S}_y+2{\tt T}({\tt Ball\-Traversal}(y)))$. Hence the execution of ${\tt Hypothesis}(y)$ by agent $A$ takes exactly $\mathcal{T}_y$ rounds, which concludes the proof of the lemma.
			\end{proof}

			The next lemma states a synchronization property in the form of an upper bound on the delay between the rounds in which every pair of agents start executing ${\tt Hypothesis}(x)$, for every positive integer $x$.

			\begin{lemma}
				\label{proof:unk:delay1}
				Let $x$ be a positive integer. Let $A$ and $B$ be two agents. Assume that agent $A$ calls ${\tt Hypothesis}(x)$. If there exists a path $\pi$ of length at most $4xm_x^5$ from node $v_A$ to node $v_B$ and each node at distance at most $4xm_x^5$ from node $v_A$ has degree at most $n_x-1$, then round $s_{B,x}$ is at most $s_{A,x} + \mathcal{S}_x$ and round $w_{B,x}$ is at most $w_{A,x}+\mathcal{S}_x$.
			\end{lemma}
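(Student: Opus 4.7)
The plan is to reduce the claim about $s_{B,x}$ and $w_{B,x}$ to a bound on the delay $t_B - t_A$ between the wake-up rounds, since by definition $s_{Y,x} - s_{X,x} = w_{Y,x} - w_{X,x} = t_Y - t_X$ for any two agents $X,Y$. Consequently, it is enough to show $t_B \le t_A + \mathcal{S}_x$, as this immediately gives both required inequalities after adding $\sum_{i=1}^{x-1}\mathcal{T}_i$ (resp.\ $\sum_{i=1}^{x-1}\mathcal{T}_i + {\tt T}({\tt BallTraversal}(x))$) to both sides.

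I would split on whether agent $B$ wakes up no later than $A$ or strictly after. If $t_B \le t_A$ there is nothing to do, since $\mathcal{S}_x \geq 0$. Otherwise, $B$ is still dormant when $A$ wakes up, and the proof proceeds by tracking what $A$ has done by round $w_{A,x}$. By Lemma~\ref{proof:unk:timebound2}, agent $A$ calls ${\tt Hypothesis}(x)$ in round $s_{A,x} = t_A + \sum_{i=1}^{x-1}\mathcal{T}_i$ and begins the execution of ${\tt BallTraversal}(x)$ at that same round (it is the first instruction of the function, see line~\ref{alg:prepro1} of Algorithm~\ref{alg:hyp}); this execution is completed in at most ${\tt T}({\tt BallTraversal}(x))$ rounds by Lemma~\ref{proof:unk:timebound0}, hence no later than $w_{A,x} = s_{A,x} + {\tt T}({\tt BallTraversal}(x))$.

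Now I invoke the degree assumption together with Lemma~\ref{proof:unk:degree}: every node at distance at most $4xm_x^5$ from $v_A$ has degree at most $n_x-1$, so the execution of ${\tt BallTraversal}(x)$ by $A$ returns {\tt true} and during this execution agent $A$ visits every node at distance at most $4xm_x^5$ from $v_A$. By the hypothesis on the existence of the path $\pi$, node $v_B$ is among the nodes visited. Therefore there is a round $r \le w_{A,x}$ in which $A$ is at $v_B$. Since $B$ is still dormant at round $t_A$ and would be woken up no later than $r$ by the visit of $A$ (by the wake-up rule recalled in Section~\ref{sec:model}), we obtain $t_B \le r \le w_{A,x} = t_A + \sum_{i=1}^{x-1}\mathcal{T}_i + {\tt T}({\tt BallTraversal}(x)) = t_A + \mathcal{S}_x$, which closes the case.

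Nothing in the argument is delicate; the only subtle step is making sure that the degree hypothesis of the lemma is exactly what is required to apply Lemma~\ref{proof:unk:degree} and conclude that ${\tt BallTraversal}(x)$ actually reaches $v_B$ (rather than aborting early via line~\ref{alg:ball:retfal}). Once this is observed, the rest is just bookkeeping on the definitions of $s_{\cdot,x}$, $w_{\cdot,x}$ and $\mathcal{S}_x$.
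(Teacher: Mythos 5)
Your proof is correct and follows essentially the same route as the paper's: both apply Lemma~\ref{proof:unk:timebound2} to place the start of ${\tt BallTraversal}(x)$ by $A$ in round $s_{A,x}$, use Lemma~\ref{proof:unk:degree} (via the degree hypothesis) to conclude that $A$ visits $v_B$ during that traversal and hence wakes $B$ by round $w_{A,x}=t_A+\mathcal{S}_x$, and then translate this bound on $t_B$ into the two claimed inequalities by the definitions of $s_{\cdot,x}$ and $w_{\cdot,x}$. The extra case split on $t_B\leq t_A$ is harmless but unnecessary, since the wake-up bound $t_B\leq w_{A,x}$ holds in either case.
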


			\begin{proof}
				In view of Lemma~\ref{proof:unk:timebound2} and Algorithm~\ref{alg:hyp}, in round $s_{A,x}$, agent $A$ starts executing ${\tt Ball\-Traversal}(x)$. In view of Lemma~\ref{proof:unk:degree}, during this execution $A$ visits every node at distance at most $4xm_x^5$ from node $v_A$, which includes node $v_B$. For this reason, agent $B$ is woken up at the latest when the execution by agent $A$ of ${\tt BallTraversal}(x)$ is completed. In other words, the round $t_B$ in which agent $B$ is woken up, is at most $w_{A,x}$.

				Round $s_{B,x}$ is equal to $t_B+\sum_{y=1}^{x-1}(\mathcal{T}_{y})\leq w_{A,x}+\sum_{y=1}^{x-1}(\mathcal{T}_{y})$, which is equal to $s_{A,x}+\mathcal{S}_{x}$. Moreover, round $w_{B,x}$ is equal to $t_B+\sum_{y=1}^{x-1}(\mathcal{T}_{y})+{\tt T}({\tt Ball\-Traversal}(x))\leq w_{A,x}+\sum_{y=1}^{x-1}(\mathcal{T}_{y})+{\tt T}({\tt Ball\-Traversal}(x))$, which is equal to $w_{A,x}+\mathcal{S}_{x}$. This concludes the proof.
			\end{proof}

			The next lemma states the properties concerning the second scheme presented in Section~\ref{subsec:intuition}. This lemma is subsequently used in the proofs of Lemmas~\ref{proof:unk:star} and~\ref{proof:unk:ensure} to show that the agents executing other hypotheses can be identified by dancing protocols such as ${\tt StarCheck}$ and ${\tt EnsureCleanExploration}$ and thus do not mislead the agents executing ${\tt GraphSizeCheck}$.

			\begin{lemma}
				\label{proof:unk:scheme}
				Let $x$ be a positive integer. Let $A$ and $B$ be two agents. Let $r_1\leq r_2$ be two rounds int which agent $A$ executes move instructions belonging to $\mathcal{F}(x)\setminus\{{\tt MoveToCentralNode}(x)\}$. Let $v$ be a node and let $r_A$ be a round belonging to $\{r_1, \dots, r_2+1\}$ such that agent $A$ is at node $v$ in round $r_A$. For every round $r_B$ of $\{r_1, \dots, r_2\}$, if agent $B$ executes a move instruction $i_B$ in round $r_B$, and is at node $v$ in round $r_B$ or in round $r_B+1$, then instruction $i_B$ belongs:
				\begin{itemize}
					\item either to ${\tt Hypothesis}(y)$ with $y\geq x$ but to none of the routines of $\mathcal{F}(y)$.
					\item or to a waiting period of ${\tt MoveToCentralNode}(x)$ (lines~\ref{line:move:wait1} or~\ref{line:move:wait} of Algorithm~\ref{alg:movecentral}).
					\item or to a routine of $\mathcal{F}(x)\setminus\{{\tt MoveToCentralNode}(x)\}$.
				\end{itemize}
			\end{lemma}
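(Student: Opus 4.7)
The plan is to proceed by contradiction: assume $i_B$ does not fall in any of the three allowed types, so $i_B$ belongs to one of the three forbidden categories---(I) a take-port of ${\tt MoveToCentralNode}(x)$; (II) any instruction of ${\tt Hypothesis}(y)$ with $y<x$; or (III) any instruction of a routine of $\mathcal{F}(y)$ with $y>x$---and I will derive a contradiction in each case, using primarily Lemma~\ref{proof:unk:timebound2} (rigid schedule and durations) and Lemma~\ref{proof:unk:delay1} (synchronization). As common preparation I would locate $v$ and $v_B$: by Proposition~\ref{proof:unk:distances} applied to the routine $A$ is currently executing and to the preceding ${\tt MoveToCentralNode}(x)$, $v$ lies within distance $n_x+n_x^5$ of $v_A$; tracing $B$'s possible location throughout its putative current routine places $v_B$ within distance at most $4xm_x^5$ of $v_A$. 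Since $A$'s call to ${\tt BallTraversal}(x)$ returned ${\tt true}$, Lemma~\ref{proof:unk:degree} gives that every node within $4xm_x^5$ of $v_A$ has degree at most $n_x-1$; the same holds around $v_B$, so applying Lemma~\ref{proof:unk:delay1} to the ordered pairs $(A,B)$ and $(B,A)$ yields the symmetric bound $|t_A-t_B|\leq\mathcal{S}_x$.

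Case~(II) is handled by the second scheme: since $A$'s ${\tt BallTraversal}(x)$ visits $v_B$, $B$ is awake by round $w_{A,x}$, so $s_{B,x}\leq s_{A,x}+\mathcal{S}_x$; iterating Lemma~\ref{proof:unk:timebound2}, being in ${\tt Hypothesis}(y)$ with $y<x$ at round $r_B$ forces $r_B<s_{B,y+1}\leq s_{B,x}\leq s_{A,x}+\mathcal{S}_x$, whereas $r_B\geq r_1\geq s_{A,x}+2\mathcal{S}_x+n_x$ since $A$ has performed the mandatory final $\mathcal{S}_x+n_x$-round wait of ${\tt MoveToCentralNode}(x)$---a contradiction. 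Case~(III) is handled by the blow-up of the $\mathcal{T}_y$: if $B$ is in a routine of $\mathcal{F}(y)$ with $y>x$, it has completed ${\tt BallTraversal}(y)$ and the subsequent $\mathcal{S}_y$-round wait, so $r_B\geq s_{B,y}+\mathcal{S}_y\geq s_{A,x}+\mathcal{T}_x-\mathcal{S}_x$, while Lemma~\ref{proof:unk:timebound2} bounds $r_B\leq r_2\leq s_{A,x}+3\mathcal{S}_x+2{\tt T}({\tt BallTraversal}(x))$. The definition $\mathcal{T}_x=8m_x^{2m_x^5}(3\mathcal{S}_x+2{\tt T}({\tt BallTraversal}(x)))$ makes $\mathcal{T}_x$ vastly larger than $4\mathcal{S}_x+2{\tt T}({\tt BallTraversal}(x))$, producing the required contradiction.

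Case~(I) is the delicate one, and I expect it to be the main technical obstacle. Here the window of $B$'s take-ports inside ${\tt MoveToCentralNode}(x)$, which has length $|p_B|\leq n_x-1$ and starts at the round $s'_B$ when $B$ enters ${\tt MoveToCentralNode}(x)$, must be shown disjoint from $[r_1,r_2]$ by comparing $s'_B-s'_A=(t_B-t_A)+({\tt BT}_B-{\tt BT}_A)$ (where ${\tt BT}_A,{\tt BT}_B$ are the respective ${\tt BallTraversal}(x)$ durations) against the mandatory final $\mathcal{S}_x+n_x$-round wait inside $A$'s own ${\tt MoveToCentralNode}(x)$, using the two-sided synchronization and a finer control on ${\tt BT}_A$ and ${\tt BT}_B$. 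Unlike~(II) and~(III), the time scales here are only polynomially separated, so the contradiction has to be extracted from the exact additive structure of the waits rather than from the multiplicative gaps between the $\mathcal{T}_y$'s.
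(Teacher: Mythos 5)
Your case decomposition is exhaustive and your overall strategy is the same as the paper's (the paper proves the same three exclusions via a distance/degree claim combined with Lemmas~\ref{proof:unk:timebound2} and~\ref{proof:unk:delay1}), and your case~(II) is correct. But there are two genuine gaps. In case~(III), the synchronization is set up at the wrong level: the ``common preparation'' bound $d(v_A,v_B)\leq 4xm_x^5$, and hence the symmetric bound $|t_A-t_B|\leq\mathcal{S}_x$, fail when $i_B$ belongs to a routine of $\mathcal{F}(y)$ with $y>x$, because agent $B$ may then be at distance up to roughly $n_y+n_y^5$ from $v_B$, and $n_y$ is not bounded by any function of $x$. (Nor can you invoke Lemma~\ref{proof:unk:delay1} at level $x$ for the pair $(B,A)$, since $B$'s ${\tt BallTraversal}(x)$ need not have returned {\tt true}.) Consequently your inequality $s_{B,y}+\mathcal{S}_y\geq s_{A,x}+\mathcal{T}_x-\mathcal{S}_x$ is unsupported. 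The repair — which is what the paper's Claim~\ref{proof:unk:schemec} does — is to center the ball at $v_B$ with radius $4ym_y^5$ (legitimate because $B$'s own ${\tt BallTraversal}(y)$ must have returned {\tt true} for $i_B$ to lie in $\mathcal{F}(y)$) and apply Lemma~\ref{proof:unk:delay1} to $(B,A)$ at level $y$; this gives $r_2<s_{A,y}\leq s_{B,y}+\mathcal{S}_y\leq r_B\leq r_2$ directly, and no comparison with $\mathcal{T}_x$ is needed.

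Case~(I) is not actually proved: you correctly isolate the quantity $s'_B-s'_A=(t_B-t_A)+({\tt BT}_B-{\tt BT}_A)$ to be compared with the $(\mathcal{S}_x+n_x)$-round wait of line~\ref{line:move:wait}, but the ``finer control'' you defer is the whole content of the case, and the black-box statement of Lemma~\ref{proof:unk:delay1} does not suffice for it: it only yields $t_B-t_A\leq\mathcal{S}_x$, and adding ${\tt BT}_B-{\tt BT}_A$, which can be as large as ${\tt T}({\tt BallTraversal}(x))$, overshoots $\mathcal{S}_x$. The missing observation is the refined fact (visible in the proof of Lemma~\ref{proof:unk:delay1}) that $B$ is woken up no later than the round $t_A+\sum_{i=1}^{x-1}\mathcal{T}_i+{\tt BT}_A$ in which $A$'s ball traversal is \emph{actually} completed, since that traversal visits $v_B$. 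This gives $s'_B-s'_A\leq\sum_{i=1}^{x-1}\mathcal{T}_i+{\tt BT}_B\leq{\tt T}({\tt BallTraversal}(x))+\sum_{i=1}^{x-1}\mathcal{T}_i=\mathcal{S}_x$, so every take-port of $B$ inside ${\tt MoveToCentralNode}(x)$ occurs by round $s'_A+\mathcal{S}_x+n_x-1$, which is strictly before $r_1$ because $A$ must complete the $(\mathcal{S}_x+n_x)$-round wait of line~\ref{line:move:wait}, started no earlier than $s'_A$, before executing any instruction of $\mathcal{F}(x)\setminus\{{\tt MoveToCentralNode}(x)\}$. Without this step the ``exact additive structure'' you appeal to does not close.
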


			\begin{proof}
				Let $y$ be the positive integer such that instruction $i_B$ belongs to ${\tt Hypothesis}(y)$. Denote by $i_1$ and $i_2$ the move instructions executed by agent $A$ in rounds $r_1$ and $r_2$ respectively. Also denote by $i_A$ the move instruction executed by agent $A$ either in round $r_A$ if $r_A\leq r_2$, or in round $r_2$ otherwise. Since instruction $i_1$ belongs to a routine of $\mathcal{F}(x)\setminus\{{\tt MoveToCentralNode}(x)\}$, in round $r_1$ agent $A$ has completed the execution of line~\ref{line:move:wait} of Algorithm~\ref{alg:movecentral}, and thus has spent more than $\mathcal{S}_x$ rounds in its execution of ${\tt Hypothesis}(x)$. In view of Lemma~\ref{proof:unk:timebound2}, this means that round $r_1$ is larger than $s_{A,x}+\mathcal{S}_x$.

				This proof relies on the following technical claim.

				\begin{claim}
					\label{proof:unk:schemec}
					{Let $v$ be a node and $C$ (resp. $D$) be an agent which executes a move instruction $i_C$ (resp. $i_D$) of ${\tt Hypothesis}(p)$ (resp. ${\tt Hypothesis}(q)$), with $p\geq q$, in a round $r_C$ (resp. $r_D$). Assume that agent $C$ (resp. $D$) is at node $v$ in at least one round of $\{r_C, r_C+1\}$ (resp. $\{r_D, r_D+1\}$).} Also assume that instruction $i_C$ belongs to a routine $F$ of $\mathcal{F}(p)$, and either $p>q$ or instruction $i_D$ also belongs to a routine of $\mathcal{F}(p)$. The distance between nodes $v_C$ and $v_D$ is at most $4pm_p^5$ and every node at distance at most $4pm_p^5$ from node $v_C$ has degree at most $n_p-1$.
				\end{claim}

				\begin{proofclaim}
					In view of Algorithm~\ref{alg:hyp}, function ${\tt BallTraversal}(p)$ and every routine of $\mathcal{F}(p)$ whose execution has been completed by agent $C$ before it starts function $F$, has returned {\tt true}. In view of Lemma~\ref{proof:unk:degree}, the degree of every node at distance at most $4pm_p^5$ from node $v_C$ is at most $n_p-1$. As a result, it remains to show that node $v_D$ is at distance at most $4pm_p^5$ from node $v_C$.

					In view of Proposition~\ref{proof:unk:distances}, once its execution of ${\tt BallTraversal}(p)$ is completed, agent $C$ is back at node $v_C$, and in rounds $r_C$ and $r_C+1$, it is at distance at most $n_p+n_p^5$ from node $v_C$ (this corresponds to the case in which ${\tt MoveToCentralNode}(p)$ brings it to some node $w$ at distance $n_p-1$ from node $v_C$, and ${\tt EnsureCleanExploration}(x)$ leads it at distance $n_p^5+1$ from node $w$).

					Still in view of Proposition~\ref{proof:unk:distances}, if $p\neq q$ then in rounds $r_D$ and $r_D+1$, agent $D$ is at distance at most $4qm_q^5$ from node $v_D$, and if instruction $i_D$ belongs to a routine of $\mathcal{F}(p)$, then in rounds $r_D$ and $r_D+1$, it is at distance at most $n_p+n_p^5$ from $v_D$. In the first case, the distance between nodes $v_C$ and $v_D$ is at most $n_p+n_p^5 + 4qm_q^5$ with $p>q$ \ie at most $4pm_p^5$, while in the second case, it is at most $2(n_p+n_p^5)$ which is also at most $4pm_p^5$. Hence, node $v_D$ is at distance at most $4pm_p^5$ from node $v_C$, which concludes the proof of this claim.
				\end{proofclaim}

				First assume for the sake of contradiction that $y<x$. In view of Algorithm~\ref{alg:hyp}, instruction $i_A$ belongs to a routine of $\mathcal{F}(x)\setminus\{{\tt MoveToCentralNode}(x)\}$. In view of Claim~\ref{proof:unk:schemec}, the distance between nodes $v_A$ and $v_B$ is at most $4xm_x^5$ and every node at most at this distance from node $v_A$ has degree at most $n_x-1$. In view of Lemma~\ref{proof:unk:delay1}, this means that round $s_{B,x}$ is at most $s_{A,x} + \mathcal{S}_x$. Hence, we know that $s_{B,x}\leq s_{A,x}+\mathcal{S}_x<r_1$. Since agent $B$ executes instruction $i_B$ in round $r_B\geq r_1$, it has not declared that the gathering is achieved by round $s_{B,x}+1$ and thus, in view of Lemma~\ref{proof:unk:timebound2}, starts executing ${\tt Hypothesis}(x)$ in round $s_{B,x}$. This contradicts the assumption that in round $r_B$ which is strictly larger than $s_{B,x}$, agent $B$ executes instruction $i_B$ which belongs to ${\tt Hypothesis}(y)$ with $x>y$. This contradiction proves $y \geq x$.

				If instruction $i_B$ belongs to $\mathcal{F}(y)$, then since $y\geq x$ and instruction $i_A$ belongs to $\mathcal{F}(x)$, in view of Claim~\ref{proof:unk:schemec}, the distance between nodes $v_A$ and $v_B$ is at most $4ym_y^5$ and every node at most at this distance from node $v_B$ has degree at most $n_y-1$. Otherwise, $y\geq x$ and instruction $i_B$ belongs to none of the routines of $\mathcal{F}(y)$, which corresponds to the first bullet point in the statement of the lemma.

				Hence, we may assume that the distance between nodes $v_A$ and $v_B$ is at most $4ym_y^5$ and every node at most at this distance from node $v_B$ has degree at most $n_y-1$. In view of Lemma~\ref{proof:unk:delay1}, round $s_{A,y}$ is at most $s_{B,y}+\mathcal{S}_y$. Either $y>x$ or $y=x$. If $y>x$, then since agent $A$ executes a move instruction of ${\tt Hypothesis}(x)$ in round $r_2$, in view of Lemma~\ref{proof:unk:timebound2}, we know that $r_2<s_{A,y}\leq s_{B,y}+\mathcal{S}_y$. Still in view of Lemma~\ref{proof:unk:timebound2}, this means that in round $r_B\leq r_2$, agent $B$ has performed at most $\mathcal{S}_y$ move instructions during its execution of ${\tt Hypothesis}(y)$. In view of the waiting period at line~\ref{alg:prepro2} of Algorithm~\ref{alg:hyp}, instruction $i_B$ belongs either to the latter line or to ${\tt BallTraversal}(y)$. This case corresponds to the first bullet point in the statement of the lemma.

				Attention can be thus restricted to the case when $y=x$. In this case, it is enough to show that if instruction $i_B$ belongs to ${\tt MoveToCentralNode}(x)$, then it does not consist of taking some port. Since agent $A$ calls ${\tt MoveToCentralNode}(x)$, Lemma~\ref{proof:unk:degree} implies that the degree of every node at distance at most $4xm_x^5$ from node $v_A$ is at most $n_x-1$. In view of Lemma~\ref{proof:unk:delay1}, since the distance between nodes $v_A$ and $v_B$ is at most $4xm_x^5$, round $w_{B,x}$ is at most $w_{A,x}+\mathcal{S}_x$. In view of Algorithm~\ref{alg:hyp}, agents $A$ and $B$ spend the same number of rounds executing line~\ref{alg:prepro2} of ${\tt Hypothesis}(x)$. Then, agent $A$ spends at least $\mathcal{S}_x+n_x$ rounds executing ${\tt MoveToCentralNode}(x)$ (line~\ref{line:move:wait} of Algorithm~\ref{alg:movecentral}), and agent $B$ spends at most $n_x-1$ rounds moving in ${\tt MoveToCentralNode}(x)$ before reaching the waiting periods of lines~\ref{line:move:wait1} and~\ref{line:move:wait}. Hence, in view of Lemma~\ref{proof:unk:timebound2}, the last round in which agent $B$ executes a move instruction that consists of taking some port and belongs to ${\tt MoveToCentralNode}(x)$ is at most $w_{B,x}+\mathcal{S}_x+n_x-1$ which is at most $w_{A,x}+2\mathcal{S}_x+n_x-1$ and thus at most $r_1-1$. This proves that if the move instruction $i_B$ executed in round $r_B\geq r_1$ belongs to ${\tt MoveToCentralNode}(x)$, then it does not consist of taking some port, which concludes the proof.
			\end{proof}

Let $L_x$ be the set of labels of agents in configuration $\phi_x$.
			In view of Algorithm~\ref{alg:movecentral} and in particular of its line~\ref{line:move:labeltest}, we have the following proposition used in the proofs of the next lemmas. It reflects the fact that the agents whose label does not belong to $L_x$ ``quickly notice'' that hypothesis $x$ is not good, which allows us to restrict attention to the agents whose labels belong to $L_x$ which may not have noticed it yet when they start executing ${\tt StarCheck}(x)$.

			\begin{proposition}
				\label{proof:unk:movelabel}
				Let $x$ be a positive integer and let $A$ be an agent. Assume that $A$ executes line~\ref{line:move:wait1} or line~\ref{line:move:wait} of Algorithm~\ref{alg:movecentral} during its execution of ${\tt MoveToCentralNode}(x)$. Then its label belongs to $L_x$.
			\end{proposition}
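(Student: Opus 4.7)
The plan is a direct inspection of the control flow of Algorithm~\ref{alg:movecentral}. I would first note that both line~\ref{line:move:wait1} and line~\ref{line:move:wait} are syntactically nested inside the \texttt{if} block whose condition is tested at line~\ref{line:move:labeltest}, namely ``there is a node labeled $L$ in $\phi_h$,'' where $L$ denotes the label of the executing agent and $h$ denotes the input parameter to \texttt{MoveToCentralNode}; in our situation $h = x$ and the executing agent is $A$.

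Next I would appeal to the definition of $L_x$ stated just before the proposition: $L_x$ is precisely the set of labels appearing at the labeled nodes of the configuration $\phi_x$. Hence the condition tested at line~\ref{line:move:labeltest} during the execution of \texttt{MoveToCentralNode}($x$) by $A$ is equivalent to ``$L \in L_x$,'' where $L$ is the label of $A$.

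Finally, executing either line~\ref{line:move:wait1} or line~\ref{line:move:wait} requires that control has entered the body of the \texttt{if} block guarded at line~\ref{line:move:labeltest}, which means that the test at that line evaluated to \texttt{true} in the execution of ${\tt MoveToCentralNode}(x)$ by $A$. Consequently the label of $A$ lies in $L_x$, which is the desired conclusion. The only thing to check is that no alternative branch of Algorithm~\ref{alg:movecentral} can reach either of the two targeted lines, but since these lines are syntactically nested inside the unique \texttt{if} block governed by line~\ref{line:move:labeltest}, this is immediate from the pseudocode. There is thus no genuine technical obstacle here; the proposition is essentially a formal statement of a syntactic observation about Algorithm~\ref{alg:movecentral}, which is why the authors present it as a proposition rather than a lemma.
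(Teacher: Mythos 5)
Your proposal is correct and matches the paper's own justification: the authors state this proposition without a separate proof, remarking only that it follows ``in view of Algorithm~\ref{alg:movecentral} and in particular of its line~\ref{line:move:labeltest},'' which is exactly the syntactic observation you make. Nothing further is needed.
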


			The next lemma is the most involved one. Assuming that an agent $A$ calls ${\tt EnsureCleanExploration}(x)$, its proof consists in showing that the dancing protocol performed in routine ${\tt StarCheck}(x)$ is not misled by the agents executing other hypotheses. Roughly speaking, Lemma~\ref{proof:unk:star} ensures that when an agent $A$ starts executing ${\tt EnsureCleanExploration}(x)$, it belongs to a group of synchronized agents that ``think'' configuration $\phi_x$ could be good. The proof of this lemma is particularly involved because of the subtlety of the dances used by function ${\tt StarCheck}(x)$, compared, e.g.,  to those of ${\tt EnsureCleanExploration}(x)$.

			\begin{lemma}
				\label{proof:unk:star}
				Let $x$ be a positive integer. Let $A$ be an agent which calls ${\tt EnsureCleanExploration}(x)$ in a round $t$ at a node $v$. There is an agent with label $l$ which calls ${\tt EnsureCleanExploration}(x)$ in round $t$ at node $v$ if and only if $l\in L_x$.
			\end{lemma}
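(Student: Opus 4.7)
For necessity, suppose an agent with label $l$ calls ${\tt EnsureCleanExploration}(x)$ in round $t$ at node $v$. In Algorithm~\ref{alg:hyp} this call is guarded by ${\tt MoveToCentralNode}(x)$ returning ${\tt true}$, which by Proposition~\ref{proof:unk:movelabel} forces $l \in L_x$.

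For sufficiency, let $l \in L_x$ and let $t_0$ be the round in which $A$ begins its execution of ${\tt StarCheck}(x)$. The plan is to show that the set $G$ of agents located at $v$ during $[t_0, t{-}1]$ has cardinality exactly $k_x$, is synchronized with $A$, and contains one agent per rank in $\{0, 1, \ldots, k_x{-}1\}$. Every agent of $G$ will then complete ${\tt StarCheck}(x)$ with value ${\tt true}$ in the same round as $A$ and therefore call ${\tt EnsureCleanExploration}(x)$ in round $t$ at $v$, which in particular provides an agent with label $l$.

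The key tool is Lemma~\ref{proof:unk:scheme}, applied to every move instruction of $A$ during ${\tt StarCheck}(x)$ (a routine in $\mathcal{F}(x)\setminus\{{\tt MoveToCentralNode}(x)\}$). It severely restricts what an agent at $v$ may be doing: either a routine of $\mathcal{F}(x) \setminus \{{\tt MoveToCentralNode}(x)\}$, a wait of ${\tt MoveToCentralNode}(x)$, or instructions of ${\tt Hypothesis}(y)$ with $y \ge x$ lying outside $\mathcal{F}(y)$. I would combine this with the cardinality sequence observed by $A$ during its dance: in each token iteration the $2d$ consecutive readings alternate between $k_x{-}1$ and $k_x$, and whenever $A$ returns to $v$ as explorer it reads $k_x$. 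These constraints rule out the presence at $v$ of any agent not dancing in lockstep with $A$ inside ${\tt StarCheck}(x)$: an extra static outsider would uniformly shift all counts, while an isolated move by an outside agent (e.g.\ a move of ${\tt BallTraversal}(y)$ for some $y > x$) would produce a localised mismatch between expected and observed cardinality.

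Having established that $G$ consists exactly of agents running ${\tt StarCheck}(x)$ in step with $A$, I would conclude by analysing iteration $i$ of the second pass: the observed alternation $k_x{-}1, k_x$ forces exactly one agent of $G$ to leave $v$ by each port in turn and come back at the tempo dictated by the explorer role, and by the guard of line~\ref{line:star:test} of Algorithm~\ref{alg:unk:star} only an agent of rank $i$ with $b = {\tt true}$ at the end of pass~1 may do so. Letting $i$ range over $\{0, \ldots, k_x{-}1\}$ yields $|G| = k_x$ with one agent per rank, and symmetry between the roles implies that every agent of $G$ finishes pass~2 with $b = {\tt true}$, hence calls ${\tt EnsureCleanExploration}(x)$ together with $A$. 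The main technical obstacle is ruling out subtle misalignments, in particular (a)~an agent of type~(iii) making a single move of ${\tt BallTraversal}(y)$ into or out of $v$ during $A$'s dance, and (b)~an agent in $G$ whose label lies outside $L_x$ setting $b = {\tt false}$ in pass~1 but still perturbing $A$'s cardinality readings in pass~2; both are resolved by combining the slowdowns of lines~\ref{slowdowns0}, \ref{slowdown1} and~\ref{slowdown2} with the symmetric structure of the two passes.
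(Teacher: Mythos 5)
Your necessity direction is correct and matches the paper. For sufficiency, your plan has the right raw ingredients (Lemma~\ref{proof:unk:scheme} plus the alternating cardinality readings of the dance), but as it stands it has three genuine gaps. First, you never reduce to the case where $t$ is the \emph{earliest} round in which any agent calls ${\tt EnsureCleanExploration}(x)$. This reduction is essential: the third bullet of Lemma~\ref{proof:unk:scheme} still permits an agent at $v$ to be executing ${\tt EnsureCleanExploration}(x)$ or ${\tt GraphSizeCheck}(x)$ (both are routines of $\mathcal{F}(x)\setminus\{{\tt MoveToCentralNode}(x)\}$), and such an agent makes frequent edge traversals that your ``localised mismatch'' heuristic cannot distinguish from dance moves. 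The paper first proves the statement assuming minimality of $t$ (which excludes those two routines as sources of moves at $v$) and then derives the general case by contradiction with the earliest such round.

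Second, your central claim --- that any outsider move ``would produce a localised mismatch between expected and observed cardinality'' --- is asserted rather than proved, and it is not true without further work: a slow move (from ${\tt BallTraversal}(y)$, $y>x$) leaving $v$ in a round where $A$ expects a decrease, compensated by another agent entering in a round where an increase is expected, leaves every reading of ${\tt CurCard}$ consistent. The paper closes this hole with an exact counting argument (its Claim on slices and fast-only rounds): the number of ``fast-only'' rounds that must each be witnessed by a departure or arrival at $v$ equals $d(2k_x-1)$, which is precisely the maximum number of such rounds that the at most $k_x$ agents with labels in $L_x$ can satisfy, each contributing $d$ per explorer pass and two passes each; hence there is no slack, no round is satisfied twice, and every label of $L_x$ must participate fully. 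Before that, it must also show (your case (a)) that slow checking instructions are confined to the slice in which $A$ itself is the explorer, using the fact that $A$ visits every neighbor of $v$ and tests ${\tt CurCard}\ne 1$ there. Third, you assume the lockstep dancers start at $v$; the paper has to rule out that a dancer starts at a neighbor $u$ of $v$, including a delicate special case when $\deg(v)=1$ and the graph might have only two nodes. Likewise, ``symmetry between the roles implies every agent of $G$ finishes with $b={\tt true}$'' is not a one-line symmetry: the paper devotes a full claim to excluding a false assignment in the slice where $A$ explores while another agent plays token. Without the minimality reduction and the exact counting, the proof does not go through.
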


			\begin{proof}
				In view of Proposition~\ref{proof:unk:distances} and Algorithms~\ref{alg:hyp} and~\ref{alg:unk:star}, agent $A$ calls ${\tt StarCheck}(x)$ at node $v$ in round $t-4dk_x$, where $d$ is the degree of $v$. For every agent which executes ${\tt StarCheck}(x)$, in view of Proposition~\ref{proof:unk:movelabel} and Algorithm~\ref{alg:hyp}, its label  belongs to $L_x$. To complete the proof of the lemma, it is thus enough to show that for every $l\in L_x$, the agent with label $l$ starts executing ${\tt EnsureCleanExploration}(x)$ in round $t$ at node $v$. This is verified if $k_x=1$. Thus, we assume that $k_x\geq 2$.

				First assume that $t$ is the earliest round in which an agent calls ${\tt EnsureCleanExploration}(x)$. In this case, we need the following notions of checking instruction and rank. Given an agent with label $l\in L_x$, its rank is the number of nodes in configuration $\phi_x$ with a label strictly smaller than $l$. A checking instruction is a move instruction executed in a round at least $t-4dk_x$ and at most $t-1$ consisting either in leaving or in entering node $v$. Denote by $z$ the rank of agent $A$. In view of Algorithm~\ref{alg:unk:star}, in round $t-4dk_x+2dz$, agent $A$ starts executing lines~\ref{line:star:starb} to~\ref{line:star:stare} of Algorithm~\ref{alg:unk:star}. Moreover, the execution of these lines is completed in round $t-4dk_x+2d(z+1)$. This fact is used by the following claims to show increasingly stronger properties related to the checking instructions, until proving that the lemma is verified when $t$ is the earliest round in which an agent calls ${\tt EnsureCleanExploration}(x)$.

				\begin{claim}
					\label{proof:unk:starc0}
					For every $o\in\{0, \dots, 4dk_x-1\}$, there is a checking instruction executed in round $t-4dk_x+o$ and if $o\bmod 2=0$ (resp. $o\bmod 2=1$), then it consists in leaving (resp. entering) node $v$.
				\end{claim}

				\begin{proofclaim}
					Since agent $A$ starts executing lines~\ref{line:star:starb} to~\ref{line:star:stare} of Algorithm~\ref{alg:unk:star} at node $v$ in round $t-4dk_x+2dz$ and this execution is completed in round $t-4dk_x+2d(z+1)$, the claim is verified for every $o\in\{2dz, \dots, 2d(z+1)-1\}$. Moreover, agent $A$ never assigns {\tt false} to its variable $b$ during its execution of ${\tt StarCheck}(x)$.

					For every $p\in\{0, \dots, 2k_x-1\}\setminus\{z\}$, agent $A$ starts executing lines~\ref{line:star:tokb} to~\ref{line:star:toke} of Algorithm~\ref{alg:unk:star}  in round $t-4dk_x+2dp$. For every $0<q\leq d$, in round $t-4dk_x+2dp+2q$ (resp. $t-4dk_x+2dp+2q-1$), there are $k_x$ (resp. $k_x-1$) agents at node $v$. Moreover, in view of line~\ref{line:move:card} of Algorithm~\ref{alg:movecentral}, in round $t-4dk_x$, there are $k_x$ agents at node $v$. Furthermore, in view of line~\ref{line:star:card} of Algorithm~\ref{alg:unk:star}, there are $k_x$ agents at node $v$ in round $t-4dk_x+2d(z+1)$.

					Hence, for every $o\in\{0, \dots, 2dz\}\cup\{2d(z+1), \dots, 4dk_x\}$, if $o\bmod 2=0$ (resp. $o\bmod 2=1$), then in round $t-4dk_x+o$, there are $k_x$ (resp. $k_x-1$) agents at node $v$. This means that, for every $o\in\{0, \dots, 2dz-1\}\cup\{2d(z+1), \dots, 4dk_x-1\}$, if $o\bmod 2=0$ (resp. $o\bmod 2=1$), then in round $t-4dk_x+o$, at least one checking instruction consisting in leaving (resp. entering) node $v$ is executed. This proves the claim.
				\end{proofclaim}

				 In view of Claim~\ref{proof:unk:starc0}, there are at least $4dk_x$ checking instructions. This is a key property in this proof. We are going to show that most move instructions of our algorithm cannot be checking instructions, and thus that unless the lemma is verified, there cannot be $4dk_x$ of them.

				Since agent $A$ executes move instructions belonging to ${\tt StarCheck}(x)$ both in round $t-4dk_x$ and in round $t-1$, and is at node $v$ in round $t-4dk_x$, we can use Lemma~\ref{proof:unk:scheme} to restrict the set of move instructions in which each checking instruction can be. Note that checking instructions consist in taking some port and thus do not belong to waiting periods. Moreover, since the earliest round in which an agent calls ${\tt EnsureCleanExploration}(x)$ is $t$, the checking instructions can belong neither to function ${\tt EnsureCleanExploration}(x)$ nor to function  ${\tt GraphSizeCheck}(x)$. Hence, for every checking instruction $i$, either there exists $y\geq x$ such that instruction $i$ belongs to ${\tt Hypothesis}(y)$ but to none of the routines of $\mathcal{F}(y)$, or instruction $i$ belongs to ${\tt StarCheck}(x)$.

				In the first case, instruction $i$ is executed either in line~\ref{line:hyp:back} of Algorithm~\ref{alg:hyp}, or in line~\ref{line:btrav:take1} of Algorithm~\ref{alg:btraversal}, or in line~\ref{line:btrav:take2} of Algorithm~\ref{alg:btraversal}. Whichever the case, before and after this edge traversal, the agent executing instruction $i$ waits at least $7m_x^{2m_x^5}$ rounds.

				For this reason, a checking instruction $i$ for which there exists a positive integer $y\geq x$ such that instruction $i$ belongs to ${\tt Hypothesis}(y)$ but to none of the routines of $\mathcal{F}(y)$ is called a slow checking instruction, while a checking instruction belonging to ${\tt StarCheck}(x)$ is called a fast checking instruction. Moreover, a round $r$ is called a leaving round (resp. an entering round) if and only if there exists an even (resp. odd) integer $o$ belonging to $\{0, \dots, 4dk_x-1\}$ such that $r=t-4dk_x+o$. Lastly, an agent $B$ is said to satisfy round $r$ if and only if either $r$ is a leaving round and in round $r$ agent $B$ executes a checking instruction consisting in leaving node $v$, or $r$ is an entering round and in round $r$ agent $B$ executes a checking instruction consisting in entering node $v$.

				\begin{claim}
					\label{proof:unk:starc1}
					No slow checking instruction consisting in leaving (resp. entering) node $v$ can be executed in a round strictly smaller than round $t-4dk_x+2dz$ (resp. at least round $t-4dk_x+2d(z+1)$).
				\end{claim}

				\begin{proofclaim}
					Assume for the sake of contradiction that the claim does not hold. This means that there exists an agent $B$, and either a round $r_1<t-4dk_x+2dz$ in which agent $B$ executes a slow checking instruction $i_1$ consisting in leaving node $v$, or a round $r_2\geq t-4dk_x+2d(z+1)$ in which agent $B$ executes a slow checking instruction $i_2$ consisting in entering node $v$. The proof is similar for both cases.

					There is a neighbor $u$ of node $v$, such that either agent $B$ is at node $v$ from round $t-4dk_x$ to round $r_1$ (both included) and at node $u$ from round $r_1+1$ to round $t$ (both included), or agent $B$ is at node $u$ from round $t-4dk_x$ to round $r_2$ (both included) and at node $v$ from round $r_2+1$ to round $t$ (both included). Whichever the case, agent $B$ is at node $u$ from round $r_1+1$ to round $r_2$ \ie at least from round $t-4dk_x+2dz$ to round $t-4dk_x+2d(z+1)$.

					During its execution of lines~\ref{line:star:starb} to~\ref{line:star:stare} of Algorithm~\ref{alg:unk:star} from round $t-4dk_x+2dz$ to round $t-4dk_x+2d(z+1)$, agent $A$ visits every neighbor of node $v$ including $u$, notices a cardinality different from 1 at node $u$ and assigns {\tt false} to its variable $b$, which contradicts the fact that it starts executing ${\tt EnsureCleanExploration}(x)$ in round $t$ and thus completes the proof of the claim.
				\end{proofclaim}

				In view of Claims~\ref{proof:unk:starc0} and ~\ref{proof:unk:starc1}, for every $o\in\{0, \dots, 4dk_x-1\}$, if $o<2dz$ and $o\bmod 2=0$ (resp. $o\geq 2d(z+1)$ and $o\bmod 2=1$), there is at least one checking instruction consisting in leaving (resp. entering) node $v$ but no slow checking instruction which is executed in round $t-4dk_x+o$. For every $o\in\{0, \dots, 4dk_x-1\}$, round $t-4dk_x+o$ is called a fast-only leaving (resp. fast-only entering) round if and only if $o<2dz$ and $o\bmod 2=0$ (resp. $o\geq 2d(z+1)$ and $o\bmod 2=1$).

				Thanks to the existence of fast-only rounds (which can only be satisfied by agents executing ${\tt StarCheck}(x)$), the next claim proves some properties concerning the agents executing ${\tt StarCheck}(x)$. Its statement uses the notion of {\em slice}. For every positive integer $p$ at most $2k_x$, the $p$-th slice is the set of rounds $\{t-4dk_x+2d(p-1), \dots, t-4dk_x+2dp-1\}$.

				\begin{claim}
					\label{proof:unk:starc2}
					For every positive integer $p\leq 2k_x$, there is one agent $C$, such that all fast-only rounds of the $p$-th slice are satisfied by agent $C$ and only by it. Moreover, for every agent $B\neq A$ whose label belongs to $L_x$, there exist two positive integers $p_1$ and $p_2$ both at most $2k_x$, such that agent $B$ is the only agent to satisfy the fast-only rounds of the $p_1$-th slice and the $p_2$-th slice.
				\end{claim}

				\begin{proofclaim}
					For every positive integer $p\leq 2k_x$, if $p\neq z+1$, then there are $d$ fast-only rounds in the $p$-th slice. If $p<z+1$ (resp. $p>z+1$), these are fast-only leaving (resp. entering) rounds. Thus, there are $2k_x-1$ slices with $d$ fast-only rounds, and one slice, the $(z+1)$-th, during which agent $A$ executes lines~\ref{line:star:starb} to~\ref{line:star:stare} of Algorithm~\ref{alg:unk:star}, and no round of which is fast-only.

					There are only $k_x$ agents whose labels belong to $L_x$. Each of them executes at most twice lines~\ref{line:star:starb} to~\ref{line:star:stare} of Algorithm~\ref{alg:unk:star}. Moreover, executing these lines once permits to satisfy at most $d$ fast-only rounds. Hence, the maximum number of fast-only rounds which can be satisfied is $d+2d(k-1)=d(2k_x-1)$ \ie the number of fast-only rounds. This implies that satisfying all fast-only rounds requires the three following facts. (1) Each agent whose label belongs to $L_x$ executes lines~\ref{line:star:starb} to~\ref{line:star:stare} of Algorithm~\ref{alg:unk:star} twice during its execution of ${\tt StarCheck}(x)$. (2) Each execution of lines~\ref{line:star:starb} to~\ref{line:star:stare} of Algorithm~\ref{alg:unk:star} by an agent whose label belongs to $L_x$ during its execution of ${\tt StarCheck}(x)$ {(except the first one by agent $A$)} satisfies $d$ fast-only rounds. (3) No fast-only round is satisfied by two agents.

					Consider the earliest fast-only round. There is an agent whose label belongs to $L_x$ which satisfies it by executing lines~\ref{line:star:starb} to~\ref{line:star:stare} of Algorithm~\ref{alg:unk:star} during its execution of ${\tt StarCheck}(x)$. The execution of these lines satisfies all $d$ earliest fast-only rounds \ie {all $d$ fast-only rounds of one slice which is the first one if $z\neq 0$ and the second one otherwise}. Then, similarly, for every positive integer $p\leq 2k_x$, if $p\neq z+1$, there is an agent whose label belongs to $L_x$ which satisfies the $d$ fast-only rounds of the $p$-th slice by executing lines~\ref{line:star:starb} to~\ref{line:star:stare} of Algorithm~\ref{alg:unk:star} during its execution of ${\tt StarCheck}(x)$. {Moreover, as explained in the previous paragraph, every agent whose label belongs to $L_x$ executes these lines twice during its execution of ${\tt StarCheck}(x)$ and each fast-only round is satisfied by only one agent.} This concludes the proof of the claim.
				\end{proofclaim}

				We need two other claims. The second one makes use of the properties related to fast-only rounds (especially those from the previous claim) to prove that all agents whose label belongs to $L_x$ start executing ${\tt StarCheck}(x)$ at node $v$ in round $t-4dk_x$. To prove this second claim, and build on it to show that the lemma is verified when $t$ is the earliest round in which an agent calls ${\tt EnsureCleanExploration}(x)$, we require the following claim.

				\begin{claim}
					\label{proof:unk:starc3}
					Let $B$ be an agent which starts executing ${\tt StarCheck}(x)$ at node $v$ in a round $r$. If round $r$ is at most $t-4dk_x$, then it is equal to $t-4dk_x$ and the execution by agent $B$ of ${\tt StarCheck}(x)$ returns {\tt true}.
				\end{claim}

				\begin{proofclaim}
					In view of Claim~\ref{proof:unk:starc2}, agent $B$ executes lines~\ref{line:star:starb} to~\ref{line:star:stare} of Algorithm~\ref{alg:unk:star} twice and for each execution of these lines by agent $B$, there is one slice whose $d$ fast-only rounds it satisfies. The earliest fast-only round is round $t-4dk_x$. In view of Lemma~\ref{proof:unk:star}, this implies that agent $B$ cannot start executing ${\tt StarCheck}(x)$ before round $t-4dk_x$ and thus that it does so in this round.

					Assume for the sake of contradiction that agent $B$ assigns {\tt false} to its variable $b$ in round $r_2$. In view of Algorithm~\ref{alg:unk:star}, in round $r_1$ when the second execution of these lines starts, the value of the variable $b$ of agent $B$ is still {\tt true} which implies $r_1<r_2$.

					Since agents $A$ and $B$ both start executing ${\tt StarCheck}(x)$ in round $t-4dk_x$ at node $v$, in view of Algorithm~\ref{alg:unk:star}, for every non-negative even integer $o \leq 4dk_x-1$, both agents $A$ and $B$ are at node $v$ and assign {\tt false} to their variable $b$ in round $t-4dk_x+o$ if and only if there are not $k_x$ agents at node $v$ in this round. If there exists a non-negative even integer $o$ at most $4dk_x-1$ such that $r_2=t-4dk_x+o$, then agent $A$ also assigns {\tt false} to its variable $b$ in round $t$, which contradicts the assumption that it never does. Hence, there is no non-negative even integer $o \leq 4dk_x-1$ such that $r_2=t-4dk_x+o$.

					Let $w$ be the rank of agent $B$ (the rank of agent $A$ is $z$). For every positive integer $p \leq 2k_x$ different from $(w+1)$, $(z+1)$, $(w+k_x+1)$, and $(z+k_x+1)$, and for every positive even integer $q \leq 2d$, round $r_2$ is different from the $q$-th round of the $p$-th slice \ie  from round $t-4dk_x+2d(p-1)+q-1$. Indeed, in this round, agents $A$ and $B$ are at node $v$ and assign {\tt false} to their variable $b$ if and only if there are not $k_x-1$ agents at node $v$ in this round.

					Round $r_2$ cannot belong to the $(w+1)$-th or $(z+1)$-th slices since their rounds are smaller than $t-2dk_x$ which is at most $r_2$. Moreover, for every positive even integer $q \leq 2d$, agent $B$ does not assign {\tt false} to its variable $b$ in the $q$-th round of the $(w+k_x+1)$-th slice \ie in round $t-2dk_x+2dw+q-1$, when it executes lines~\ref{line:star:one} to~\ref{line:star:back} of Algorithm~\ref{alg:unk:star} while the value of its variable $t$ is 2.

					Hence, there exists a positive even integer $q \leq 2d$ such that round $r_2$ is the $q$-th round of the $(z+k_x+1)$-th slice \ie round $t-2dk_x+2dz+q-1$. In this round, agent $A$ executes lines~\ref{line:star:one} to~\ref{line:star:back} of Algorithm~\ref{alg:unk:star} while agent $B$ executes lines~\ref{line:star:toktb} to~\ref{line:star:tokte} and line~\ref{line:star:tokw}. While agent $A$ is at a neighbor of node $v$, agent $B$ is at node $v$ and there are not $k_x-1$ agents at node $v$. Since in round $r_2-1$, there are $k_x$ agents at node $v$ and agent $A$ leaves this node in this round, other agents than $A$ and $B$ either leave or enter node $v$ in round $r_2-1$.

					Note that round $r_2-1$ (resp. $r_2$) is a leaving (resp. entering) round. In view of Claim~\ref{proof:unk:starc1}, no slow checking instruction consisting in entering node $v$ can be executed in round $r_2-1$. In view of Claim~\ref{proof:unk:starc2}, no other agent than $A$ can execute a fast checking instruction consisting in leaving (resp. entering) node $v$ in round $r_2-1$ (resp. $r_2$). Moreover, if an agent executed a fast checking instruction consisting in entering node $v$ in round $r_2-1$, then all its checking instructions consisting in leaving (resp. entering) node $v$ would be executed in an entering (resp. a leaving) round and it would not satisfy any round, which would contradict Claim~\ref{proof:unk:starc2}.

					Hence, in round $r_2-1$, an agent $C$ executes a slow checking instruction consisting in leaving node $v$. From round $r_2$, it waits at least $7m_x^{2m_x^5}$ rounds and thus it is not at node $v$ in round $r_2+1$. However, in this round, agent $A$ is back at node $v$ and notices $k_x$ agents at this node. This means that there is an agent $D\neq A$ which executes a checking instruction $i_D$ consisting in entering node $v$ in round $r_2$. In view of Claim~\ref{proof:unk:starc1}, instruction $i_D$ cannot be slow. But, if this instruction is fast, then agent $D$ executes a fast checking instruction consisting in leaving node $v$ either in round $r_2-1$ or in round $r_2+1$, like agent $A$, which contradicts Claim~\ref{proof:unk:starc2}. This shows that round $r_2$ cannot exist. The claim is proved.
				\end{proofclaim}

				\begin{claim}
					\label{proof:unk:starc4}
					For every non-negative integer $p<k_x$, the agent with rank $p$ starts executing ${\tt StarCheck}(x)$ in round $t-4dk_x$ at node $v$.
				\end{claim}

				\begin{proofclaim}
					Let $p$ be any positive integer at most $2k_x$ and different from $z+1$. Consider the $p$-th slice. Let $B$ be the agent which satisfies the fast-only rounds of the $p$-th slice, in view of Claim~\ref{proof:unk:starc2}.

					Let us first show that agent $B$ starts executing ${\tt StarCheck}(x)$ at node $v$. To this end, we assume for the sake of contradiction that agent $B$ starts executing ${\tt StarCheck}(x)$ at a neighbor $u$ of node $v$. First suppose that $d\geq 2$. In view of Algorithm~\ref{alg:unk:star}, agent $B$ performs two edge traversals between node $u$ and node $v$ and thus two fast checking instructions per execution of lines~\ref{line:star:starb} to~\ref{line:star:stare}, in two consecutive rounds. This contradicts Claim~\ref{proof:unk:starc2}, in view of which with only one execution of these lines, agent $B$ has to satisfy the $d$ fast-only rounds of the $p$-th slice, which are not consecutive (between every pair of fast-only rounds, there is at least one round which is not fast-only).

					Hence, suppose that $d=1$. Thus, node $u$ is the only neighbor of node $v$. The agents whose label belongs to $L_x$ start executing ${\tt StarCheck}(x)$ either at node $v$ or at node $u$. In view of Claim~\ref{proof:unk:starc2}, for every positive integer $q \leq 2k_x$, there is one agent such that all fast-only rounds of the $q$-th slice are satisfied by this agent and only by it.

					In view of Algorithm~\ref{alg:unk:star}, every agent $C$ which starts executing ${\tt StarCheck}(x)$ at node $v$ spends $2d(k_x-1)$ rounds \ie $k_x-1$ slices waiting between the two slices whose fast-only rounds it satisfies. Hence, if agent $C$ satisfies the fast-only rounds of the $q$-th slice, where $q$ is a positive integer at most $k_x$ (resp. at least $k_x+1$ and at most $2k_x$), then it also satisfies the fast-only rounds of the $(q+k_x)$-th slice (resp. the $(q-k_x)$-th slice). This is for instance the case of agent $A$.

					Let $e$ be the degree of the neighbor $u$ of node $v$. Similarly as for the agents which start executing ${\tt StarCheck}(x)$ at node $v$, every agent $D$ that starts executing ${\tt StarCheck}(x)$ at node $u$ spends $2e(k_x-1)$ rounds \ie $e(k_x-1)$ slices, waiting between the two slices whose fast-only rounds it satisfies. This implies that if agent $D$ satisfies the fast-only rounds of the $q$-th slice, where $q$ is a positive integer at most $k_x(2-e)$ (resp. at least $ek_x+1$ and at most $2k_x$), then it also satisfies the fast-only rounds of the $(q+ek_x)$-th slice (resp. the $(q-ek_x)$-th slice). This is for instance the case of agent $B$. However, the only value of degree $e$ for which this is possible is $1$, which means that nodes $u$ and $v$ have the same degree: 1. Otherwise, an agent which starts executing ${\tt StarCheck}(x)$ at node $u$ spends too many rounds between two executions of lines~\ref{line:star:starb} to~\ref{line:star:stare} of Algorithm~\ref{alg:unk:star} to satisfy the fast-only rounds of two distinct slices.

					Consequently, there are only two nodes in the graph: $u$ and $v$. Moreover, there are two agents in the graph: $A$ and $B$. Agent $A$ satisfies not only the fast-only rounds but all rounds of the $(z+1)$-th and $(z+k_x+1)$-th slices and does not satisfy any round of the $p$-th and $(p+k_x)$-th slices. Agent $B$ must satisfy all rounds of these two slices, which means that for each of these slices, it has to execute a checking instruction consisting in leaving node $v$ and then a checking instruction consisting in entering it. However, during its executions of lines~\ref{line:star:starb} to~\ref{line:star:stare} of Algorithm~\ref{alg:unk:star} from node $u$, it first enters $v$ and then leaves it. It cannot satisfy all rounds of the $p$-th and $(p+k_x)$-th slices. This contradicts Claim~\ref{proof:unk:starc0} and completes the proof that agent $B$ starts executing ${\tt StarCheck}(x)$ at node $v$.

					Let us now show that agent $B$ starts executing ${\tt StarCheck}(x)$ in round $t-4dk_x$. In view of Claim~\ref{proof:unk:starc3}, the execution of ${\tt StarCheck}(x)$ by agent $B$ cannot start in a round strictly smaller than $t-4dk_x$. This implies that agent $B$ starts executing ${\tt StarCheck}(x)$ at the earliest in round $t-4dk_x$.

					Finally, in view of Algorithm~\ref{alg:unk:star}, since agent $B$ leaves (resp. enters) node $v$ in round $t-4dk_x+2dp$ (resp. $t-4dk_x+2dp+1$), it does so while executing line~\ref{line:star:move} (resp. line~\ref{line:star:back}), and thus in view of line~\ref{line:star:test}, its rank is $p$, and it starts executing ${\tt StarCheck}(x)$ in round $t-4dk_x$. This completes the proof of the claim.
				\end{proofclaim}

				In view of Claim~\ref{proof:unk:starc4}, every agent $B$ whose label belongs to $L_x$ starts executing ${\tt StarCheck}(x)$ in round $t-4dk_x$ at node $v$. In view of Proposition~\ref{proof:unk:distances}, agent $B$ completes this execution in round $t$ at node $v$. In view of Claim~\ref{proof:unk:starc3}, the execution by agent of ${\tt StarCheck}(x)$ returns {\tt true}. This completes the proof of the lemma when round $t$ is the earliest one in which an agent calls ${\tt EnsureCleanExploration}(x)$.

				Now assume that round $t$ is not the earliest one in which an agent calls ${\tt EnsureCleanExploration}(x)$. As explained at the very beginning of this proof, in view of Proposition~\ref{proof:unk:movelabel}, agent $A$ that calls routine ${\tt EnsureCleanExploration}(x)$ in round $t$ has a label belonging to $L_x$. Let $r<t$ be the earliest round in which an agent calls ${\tt EnsureCleanExploration}(x)$. The part of this proof related to the case when $t$ is the earliest round applies to round $r$. This means that there is an agent with label $l$ that calls ${\tt EnsureCleanExploration}(x)$ in round $r$ at node $v$ if and only if $l\in L_x$. This contradicts the fact that agent $A$, whose label belongs to $L_x$ calls ${\tt EnsureCleanExploration}(x)$ in round $t>r$, and completes the proof.
			\end{proof}

			The following lemmas focus on routine ${\tt EnsureCleanExploration}$. It is the second dancing protocol. It is executed after ${\tt StarCheck}$ to ensure a clean exploration by ${\tt GraphSizeCheck}$. The next lemma is the most technical one left. The other remaining lemmas mostly consist in showing that this one can be applied. Since all agents whose labels belong to $L_x$ start a common execution of ${\tt EnsureCleanExploration}$, in view of Lemma~\ref{proof:unk:star}, the next lemma roughly states that this routine allows them to detect agents whose label does not belong to $L_x$ and which would disturb the execution of ${\tt GraphSizeCheck}$.

			\begin{lemma}
				\label{proof:unk:ensure}
				Let $x$ be a positive integer. Let $A$ and $B$ be two agents. Assume that agent $A$ starts executing ${\tt EnsureCleanExploration}(x)$ at node $v$ in round $r_1$. Assume that in round $r_2$, agent $A$ executes a move instruction belonging either to ${\tt EnsureCleanExploration}(x)$ or to ${\tt GraphSizeCheck}(x)$, and that no agent declares that the gathering is achieved before round $r_2+1$. Also assume that there exists a node $u$ at distance at most $n_x^5$ from node $v$ that agents $A$ and $B$ visit in rounds belonging to $\{r_1, \dots, r_2+1\}$. If the label of agent $B$ does not belong to $L_x$, then the execution by agent $A$ of ${\tt EnsureCleanExploration}(x)$ returns {\tt false}.
			\end{lemma}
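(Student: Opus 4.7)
The plan is to argue by contradiction: suppose $A$'s execution of ${\tt EnsureCleanExploration}(x)$ returns {\tt true}. Then every cardinality check at line~\ref{line:clean:card} during this execution yields $k_x$. By Lemma~\ref{proof:unk:star}, the set $G$ of agents that called ${\tt EnsureCleanExploration}(x)$ at $v$ in round $r_1$ consists of exactly the $k_x$ agents with labels in $L_x$; since Algorithm~\ref{alg:ensure} does not branch on labels, these $k_x$ agents move in lockstep with $A$ throughout the routine, accounting for the full count $k_x$ at every check. Consequently, no agent outside $G$---in particular $B$, whose label is not in $L_x$---can be at $A$'s current node whenever $A$ performs a forward move during the routine.

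To quantify $B$'s freedom of movement, I would invoke Lemma~\ref{proof:unk:scheme} with its $v$-parameter set to $u$ and $r_A$ chosen so that $A$ is at $u$ in some round $r_A \in \{r_1, \ldots, r_2+1\}$ (such a round exists by hypothesis). For any move of $B$ in $\{r_1, \ldots, r_2\}$ that places $B$ at $u$ in the move round or in the following round, the move must fit one of the three categories of that lemma. Proposition~\ref{proof:unk:movelabel} combined with $B$'s label $\notin L_x$ eliminates categories (b) and (c); the remaining option is an edge traversal in ${\tt BallTraversal}(y)$ or in the backtrack of ${\tt Hypothesis}(y)$ for some $y \geq x$, which is preceded (and followed) by a wait of $7m_y^{2m_y^5} \geq 7m_x^{2m_x^5}$ rounds by lines~\ref{slowdowns0},~\ref{slowdown1}, and~\ref{slowdown2}. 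By Lemma~\ref{proof:unk:timebound1} and Proposition~\ref{proof:unk:sizecheck}, the window length $r_2 - r_1 + 2$ is at most $7n_x^{2n_x^5} + 2k_x\,{\tt T}({\tt EST}(n_x)) < 7m_x^{2m_x^5}$, so $B$ performs at most one move during the window, and a fortiori at most one move during $A$'s $\leq 4n_x^{n_x^5+6}$-round execution of ${\tt EnsureCleanExploration}(x)$. Hence $B$ occupies at most two distinct nodes $w_1, w_2$ throughout this execution, both lying at distance at most $n_x^5+1$ from $v$, and $\{w_1,w_2\}$ contains $u$ or a neighbor of $u$.

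The preceding successful ${\tt BallTraversal}(x)$, together with Lemma~\ref{proof:unk:degree}, ensures that every node within distance $n_x^5+1$ of $v$ has degree at most $n_x-1$, so every path enumerated at line~\ref{line:clean:enum} is fully traversable and the two outer iterations at line~\ref{line:clean:two} are identical in duration, each of length exactly $T/2$ where $T \leq 4n_x^{n_x^5+6}$ is the total duration of the routine. Hence $A$ forward-visits both $w_1$ and $w_2$ during each iteration. A pigeonhole argument now closes the proof: the intervals of $[r_1, r_1+T-1]$ during which $B$ sits at $w_1$ and at $w_2$ partition the execution, so one of them has length at least $T/2$; say it is the $w_1$-interval. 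Then $A$'s pass-$1$ forward-visit to $w_1$ occurs at some round $r_1 + \alpha_1$ with $\alpha_1 \leq T/2 - 1$, which lies inside the $w_1$-interval. At that round ${\tt CurCard}$ at $A$'s node is at least $k_x + 1$, contradicting the assumption.

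The main technical obstacle is the timing analysis, particularly in the corner cases where $r_2$ falls inside the subsequent ${\tt GraphSizeCheck}(x)$ and $B$'s single move happens close to an endpoint of $[r_1, r_2+1]$. In such cases one must exploit the fact that the wait of length $7m_y^{2m_y^5}$ surrounding $B$'s move dominates both the window length and the duration of one full outer pass of ${\tt EnsureCleanExploration}(x)$, forcing $B$ to be stationary at some node of the exploration zone for at least one full pass and ensuring the required encounter with $A$.
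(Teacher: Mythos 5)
Your overall strategy is the same as the paper's: apply Lemma~\ref{proof:unk:scheme} with Proposition~\ref{proof:unk:movelabel} to conclude that every relevant move of $B$ lies in ${\tt Hypothesis}(y)$, $y\geq x$, outside $\mathcal{F}(y)$, hence is sandwiched between waits of length at least $7m_y^{2m_y^5}\geq 7n_x^{2n_x^5}$; compare this with the bound of Lemma~\ref{proof:unk:timebound1} on the window length to confine $B$ to at most two adjacent nodes, one of which is $u$; observe that all $k_x$ agents of $L_x$ move in lockstep and enumerate a path $\pi$ through both of these nodes in each of the two outer iterations of line~\ref{line:clean:two}; and conclude that the cardinality check of line~\ref{line:clean:card} must fail. (Your addition of $2k_x{\tt T}({\tt EST}(n_x))$ to the window bound is redundant, since Lemma~\ref{proof:unk:timebound1} already bounds the \emph{total} duration of the three routines including ${\tt GraphSizeCheck}(x)$, but this is harmless.)

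There is, however, a genuine flaw in your concluding pigeonhole as written. You argue that one of the two residence intervals of $B$ (at $w_1$ or $w_2$) has length at least $T/2$, and that $A$'s pass-$1$ visit to $w_1$ occurs at a round $r_1+\alpha_1$ with $\alpha_1\leq T/2-1$, ``which lies inside the $w_1$-interval.'' That inference is invalid: an interval of length at least $T/2$ need not contain the first $T/2$ rounds of the execution (for instance, $B$ may sit at $w_2$ for the first $T/2-1$ rounds and at $w_1$ thereafter, in which case $A$'s pass-$1$ visit to $w_1$ misses $B$). The correct pigeonhole — which is what the paper uses, and which you yourself sketch in your final paragraph — is on the two \emph{passes} rather than on the two \emph{nodes}: since $B$ performs at most one edge traversal in the entire window, it is stationary at a single node of $\{w_1,w_2\}$ throughout at least one complete traversal of $\pi$ by the group, and during that traversal $A$ arrives at $B$'s node, executes line~\ref{line:clean:card}, observes ${\tt CurCard}\neq k_x$, and returns {\tt false}. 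With that step replaced, your argument matches the paper's proof.
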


			\begin{proof}
				In view of Lemma~\ref{proof:unk:scheme}, for every move instruction $i_B$ executed by agent $B$ in a round of $\{r_1, \dots, r_2\}$, either there exists $y\geq x$ such that instruction $i_B$ belongs to ${\tt Hypothesis}(y)$ but to none of the routines of $\mathcal{F}(y)$, or instruction $i_B$ belongs to a waiting period of ${\tt MoveToCentralNode}(x)$, or instruction $i_B$ belongs to a routine of $\mathcal{F}(x)\setminus\{{\tt MoveToCentralNode}(x)\}$. Since the label of agent $B$ does not belong to $L_x$, in view of Proposition~\ref{proof:unk:movelabel}, agent $B$ neither starts executing the waiting periods of ${\tt MoveToCentralNode}(x)$ nor the routines of $\mathcal{F}(x)\setminus\{{\tt MoveToCentralNode}(x)\}$. Thus, for every move instruction $i_B$ executed by agent $B$ in a round of $\{r_1, \dots, r_2\}$, there exists $y\geq x$ such that instruction $i_B$ belongs to ${\tt Hypothesis}(y)$ but to none of the routines of $\mathcal{F}(y)$.

				In view of Algorithms~\ref{alg:hyp} and~\ref{alg:btraversal}, every edge traversal performed while executing ${\tt Hypothesis}(y)$ but none of the routines of $\mathcal{F}(y)$ (either line~\ref{line:hyp:back} of Algorithm~\ref{alg:hyp}, or line~\ref{line:btrav:take1} of Algorithm~\ref{alg:btraversal}, or line~\ref{line:btrav:take2} of Algorithm~\ref{alg:btraversal}) is preceded and followed by a waiting period of at least $7m_y^{2m_y^5}\geq 7n_x^{2n_x^5}$ rounds. In view of Lemma~\ref{proof:unk:timebound1}, $7n_x^{2n_x^5}$ upper bounds the number of rounds required by any agent to execute lines~\ref{line:hyp:star} to~\ref{alg:check2} of Algorithm~\ref{alg:hyp} (\ie routines ${\tt StarCheck}(x)$, ${\tt EnsureCleanExploration}(x)$, and ${\tt GraphSizeCheck}(x)$). Hence, between rounds $r_1$ and $r_2$ agent $B$ performs at most one edge traversal.

				There are at most two nodes in which agent $B$ can be in every round of $\{r_1, \dots, r_2\}$. Among these at most two nodes, one is $u$, and if there is a second one, call it $w$, then it is a neighbor of node $u$. In view of Proposition~\ref{proof:unk:distances}, the distance between nodes $v$ and $v_A$ is at most $n_x-1$. In view of Lemma~\ref{proof:unk:degree}, the degree of every node at distance at most $4xm_x^5$ from $v_A$ is at most $n_x-1$. This implies the following. There is a path $\pi$ of length $n_x^5+1$ such that $\mathcal{N}(\pi, v)$ contains nodes $u$ and $w$. Moreover, all nodes of $\mathcal{N}(\pi, v)$ are at distance at most $4xm_x^5$ from node $v_A$. The degree of nodes $v$ and $u$, like every node of $\mathcal{N}(\pi, v)$, is at most $n_x-1$.

				Assume for the sake of contradiction that the execution by agent $A$ of ${\tt EnsureCleanExploration}(x)$ returns {\tt true}. Consider Algorithm~\ref{alg:ensure}, executed from node $v$  by all $k_x$ agents whose labels belong to $L_x$, starting in round $r_1$. In view of line~\ref{line:clean:back}, at the beginning of each iteration of the for loop of line~\ref{line:clean:enum}, agent $A$ is at node $v$. Each iteration of this loop consists in considering a path of length $n_x^5+1$ whose elements belong to $\{0, \dots, n_x-2\}$ and following it. Since the execution by agent $A$ of ${\tt EnsureCleanExploration}(x)$ does not return {\tt false}, it is not interrupted, and among all the paths enumerated, there is the path $\pi$. In view of line~\ref{line:clean:two}, during their common execution of ${\tt EnsureCleanExploration}(x)$ which is not interrupted before returning {\tt true}, all $k_x$ agents whose labels belong to $L_x$ follow path $\pi$ from node $v$ twice. Since agent $B$ performs at most one edge traversal between $u$ and $w$ in the meantime, there is a round in which all $k_x$ agents whose labels belong to $L_x$ execute line~\ref{line:clean:card} while at the same node (either $u$ or $w$) as agent $B$. In this round, the execution of ${\tt EnsureCleanExploration}(x)$ by each of the $k_x$ agents whose labels belong to $L_x$ (to which agent $A$ belongs) returns {\tt false}. This is a contradiction and concludes the proof.
			\end{proof}

			The intuitive notion of ``clean'' exploration presented in Section~\ref{subsec:intuition} comes into the picture with the following lemma. This notion is formally defined below. Let $x$ be a positive integer, and let $A$ be an agent that calls ${\tt GraphSizeCheck}(x)$ at a node $v$ in some round. In view of Algorithm~\ref{alg:SizeCheck}, during its execution of ${\tt GraphSizeCheck}(x)$, agent $A$ executes ${\tt EST^+}(n_x)$ from node $v$. Denote by $t_2$ and $t_3$ the rounds in which the execution by agent $A$ of ${\tt EST^+}(n_x)$ is started and completed, respectively. The execution of ${\tt GraphSizeCheck}(x)$ by agent $A$ is said to be clean if for every round $t_2\leq t\leq t_3$, agent $A$ is at node $v$ in round $t$ if and only if there is at least one other agent at this node in round $t$.

			\begin{lemma}
				\label{proof:unk:clean}
				Let $x$ be a positive integer. Let $A$ be an agent which calls ${\tt GraphSizeCheck}(x)$ in a round $t$. If no agent declares that the gathering is achieved before round $t+2k_x{\tt T}({\tt EST^+}(n_x))$, then the execution of ${\tt GraphSizeCheck}(x)$ by agent $A$ is clean.
			\end{lemma}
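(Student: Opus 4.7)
The plan is to exploit the synchronization enforced by the two preceding dancing protocols. Let $v$ be the node at which $A$ calls ${\tt GraphSizeCheck}(x)$ in round $t$, and let $r_1<t$ be the round in which $A$ called ${\tt EnsureCleanExploration}(x)$. Applying Lemma~\ref{proof:unk:star} at round $r_1$ shows that the set $G_x$ of agents calling ${\tt EnsureCleanExploration}(x)$ from $v$ in round $r_1$ consists of precisely the $k_x$ agents whose labels lie in $L_x$. Because these agents share a node, execute the same deterministic routine with the same parameter $x$, and observe identical values of ${\tt CurCard}$ in every round, they move and update their variables identically throughout ${\tt EnsureCleanExploration}(x)$ and therefore return the same Boolean. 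Since $A$ went on to call ${\tt GraphSizeCheck}(x)$, this common value is ${\tt true}$, so every agent of $G_x$ starts ${\tt GraphSizeCheck}(x)$ from $v$ in round $t$.

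Inspecting Algorithm~\ref{alg:SizeCheck}, during ${\tt GraphSizeCheck}(x)$ the $k_x$ agents of $G_x$ take turns executing ${\tt EST}^+(n_x)$, one per outer iteration, idling between turns. Since ${\tt EST}^+(n_x)$ ends by backtracking to its start node, every agent of $G_x$ is at $v$ whenever it is not its own turn. Let $[t_2,t_3]$ be the interval in which $A$ executes its ${\tt EST}^+(n_x)$. Throughout $[t_2,t_3]$ the other $k_x-1\geq 1$ agents of $G_x$ stay idle at $v$, so in any round of this interval in which $A$ is itself at $v$ the value of ${\tt CurCard}$ at $v$ is at least two; this establishes the cleanness condition in that case.

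It remains to rule out any other agent being at a node $u\neq v$ simultaneously with $A$ during $[t_2,t_3]$. By Proposition~\ref{proof:unk:distances}, such a $u$ lies within distance $n_x^5$ of $v$. A would-be intruder $B$ at $u$ cannot have its label in $L_x$, since every such label is carried by a distinct $G_x$-agent and those agents are all idle at $v$ in $[t_2,t_3]$ except for $A$ itself. Hence $B$'s label lies outside $L_x$, and I would then apply Lemma~\ref{proof:unk:ensure} with the above $r_1$, with $r_2=\tau$ (a round of $[t_2,t_3]$ in which $A$ is at $u$ and executes a move instruction of ${\tt GraphSizeCheck}(x)$), and with node $u$; the required non-declaration assumption is supplied by the hypothesis of the present lemma. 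The conclusion is that $A$'s execution of ${\tt EnsureCleanExploration}(x)$ returned ${\tt false}$, contradicting the fact that $A$ reached the call to ${\tt GraphSizeCheck}(x)$. Thus no such intruder exists and $A$ is alone at $u$.

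The main obstacle is the simultaneity claim of the first paragraph, namely that the $k_x$ agents of $G_x$ not only begin but also complete ${\tt EnsureCleanExploration}(x)$ in lockstep, with a common return value. Once Lemma~\ref{proof:unk:star} aligns their starting configurations, this follows from pure determinism, since the agents thereafter occupy the same node in every round and witness identical values of ${\tt CurCard}$, so their local computations and subsequent moves remain identical. The remaining steps are then routine bookkeeping combining this synchronization with Lemma~\ref{proof:unk:ensure} and the distance bound of Proposition~\ref{proof:unk:distances}.
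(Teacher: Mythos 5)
Your proof is correct and follows essentially the same route as the paper's: synchronize the $k_x$ agents with labels in $L_x$ via Lemma~\ref{proof:unk:star} and the determinism of ${\tt EnsureCleanExploration}(x)$, observe that the waiting agents at $v$ guarantee one direction of cleanness, and rule out an intruder at $u\neq v$ by noting its label must lie outside $L_x$ and invoking Lemma~\ref{proof:unk:ensure} for a contradiction. No gaps.
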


			\begin{proof}
				In view of Lemma~\ref{proof:unk:star}, there is a node $v$ and a round $t_1$ such that for each label $l\in L_x$ the execution of ${\tt StarCheck}(x)$ by the agent with label $l$  is completed at node $v$ in round $t_1$. In other words, agent $A$ starts executing ${\tt EnsureCleanExploration}(x)$ at node $v$ in round $t_1$, like the $k_x-1$ other agents whose labels belong to $L_x$. Moreover, since no instruction of Algorithm~\ref{alg:ensure} results in different behaviors depending on the executing agent, in every round in which at least one agent executes ${\tt EnsureCleanExploration}(x)$, all $k_x$ agents whose labels belong to $L_x$ are at the same node and execute the same instructions, until round $t_2$ in which the execution by each of them of this function returns {\tt true}. Then, in view of Proposition~\ref{proof:unk:sizecheck}, the execution of ${\tt GraphSizeCheck}(x)$ by each of the $k_x$ agents whose labels belong to $L_x$ is completed in round $t_2+2k_x{\tt T}({\tt EST^+}(n_x))$ at node $v$. In view of Algorithm~\ref{alg:SizeCheck}, the $k_x$ agents whose labels belong to $L_x$ take turns in executing ${\tt EST^+}(n_x)$ while the $k_x-1$ others wait at node $v$. Since $k_x-1\geq 1$, when agent $A$ is executing ${\tt EST^+}(n_x)$ and is at node $v$, there is at least one other agent at this node.

				Assume for the sake of contradiction that there exists a round $t_2\leq r\leq t_2+2k_x{\tt T}({\tt EST^+}(n_x))$ in which an agent $B$ occupies the same node $u\neq v$ as agent $A$. In view of Proposition~\ref{proof:unk:distances}, the distance between nodes $u$ and $v$ is at most $n_x^5$. In view of Algorithm~\ref{alg:SizeCheck}, during the execution of ${\tt GraphSizeCheck}(x)$ that all $k_x$ agents whose labels belong to $L_x$ start in round $t_2$ at node $v$, they take turns in executing ${\tt EST^+}(n_x)$ while the $k_x-1$ other agents wait at node $v$. Hence, in round $r$, when agent $A$ is at node $u\neq v$, all $k_x-1$ other agents whose label belongs to $L_x$ are at node $v$. This means that the label of agent $B$ does not belong to $L_x$. Moreover, no agent declares that the gathering is achieved before round $t_2+2k_x{\tt T}({\tt EST^+}(n_x)$. Thus, we can apply Lemma~\ref{proof:unk:ensure}, which contradicts the assumption that agent $A$ calls ${\tt GraphSizeCheck}(x)$. This concludes the proof.
			\end{proof}

			The two following lemmas state important high-level properties of our algorithm: its safety and liveness.

			\begin{lemma}
				\label{proof:unk:safety}
				Assume that an agent $A$ ends up declaring that the gathering is achieved at a node $v$ in a round $t$. Every agent declares that the gathering is achieved at node $v$ in round $t$. Moreover, in round $t$, every agent assigns to its variable $leader$ (resp. $size$) the value of the smallest label of an agent (resp. the size of the graph).
			\end{lemma}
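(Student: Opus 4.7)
The plan is to trace back the chain of conditions needed for agent $A$ to reach the declaration, and to use the earlier machinery to show that the underlying hypothesis is faithful to the true initial configuration in every way that matters. First, since $A$ declared gathering in round $t$, some call to ${\tt Hypothesis}(h)$ returned ${\tt true}$; tracing Algorithm~\ref{alg:hyp}, this required ${\tt BallTraversal}(h)$, ${\tt MoveToCentralNode}(h)$, ${\tt StarCheck}(h)$, ${\tt EnsureCleanExploration}(h)$ and ${\tt GraphSizeCheck}(h)$ all to return ${\tt true}$. By Lemma~\ref{proof:unk:star}, the set $G$ of agents that call ${\tt EnsureCleanExploration}(h)$ is exactly the set of agents whose labels lie in $L_h$, they all call it in the same round at node $v$, and $|G|=k_h$. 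From that common starting point, the remaining instructions of ${\tt Hypothesis}(h)$ are deterministic and depend only on data shared by all members of $G$, so every agent of $G$ returns ${\tt true}$ from ${\tt Hypothesis}(h)$ at $v$ in the same round $t$, executes lines~\ref{line:gath:lead} and~\ref{line:gath:size} of Algorithm~\ref{alg:mainnoupper}, and declares gathering in round $t$ at $v$, assigning identical values to $\mathit{leader}$ and $\mathit{size}$.

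It remains to show that $G$ is the entire agent population. By Lemma~\ref{proof:unk:clean} the execution of ${\tt GraphSizeCheck}(h)$ by the members of $G$ is clean, so the embedded simulation of ${\tt EST}$ (with the $k_h-1$ token agents sitting at $v$) completes correctly and learns that the graph has exactly $n_h$ nodes; otherwise the function would have returned ${\tt false}$, contrary to assumption. In particular $n=n_h$, so every node of the graph lies within distance $n_h-1\leq n_h^5+1$ of $v$ and has degree at most $n_h-1$. Suppose, for a contradiction, that some agent $B$ with label not in $L_h$ exists. Then during the execution of ${\tt EnsureCleanExploration}(h)$, which enumerates every path of length $n_h^5+1$ from $v$ with ports in $\{0,\dots,n_h-2\}$ twice, the group $G$ visits every node of the graph, waking up any dormant agent; consequently $B$ occupies, at some point of that execution or of ${\tt GraphSizeCheck}(h)$, a node at distance at most $n_h^5$ from $v$. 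Lemma~\ref{proof:unk:ensure} then forces ${\tt EnsureCleanExploration}(h)$ to return ${\tt false}$, a contradiction. So no such $B$ exists, every agent belongs to $G$, and in round $t$ all agents are co-located at $v$ and declare gathering there.

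Finally, the two assignments of Algorithm~\ref{alg:mainnoupper} read values directly off $\phi_h$: the variable $\mathit{leader}$ becomes the smallest label appearing in $\phi_h$, that is the smallest element of $L_h$ and, by the previous paragraph, the smallest label among all agents; and $\mathit{size}$ becomes $n_h=n$. The main delicate point is that the two conclusions must be obtained in the right order: the clean success of ${\tt GraphSizeCheck}(h)$ is needed to pin down $n=n_h$, and only once this equality is in hand can one invoke the diameter bound $n\leq n_h^5+1$ together with Lemma~\ref{proof:unk:ensure} to rule out stray agents with labels outside $L_h$ and to certify whole-graph coverage of ${\tt EnsureCleanExploration}(h)$.
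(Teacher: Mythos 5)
Your argument follows the same route as the paper's proof (Lemma~\ref{proof:unk:star} to synchronize the group $G$ of agents with labels in $L_h$, cleanness of ${\tt GraphSizeCheck}(h)$ to get $n=n_h$, then Lemma~\ref{proof:unk:ensure} to exclude an agent $B$ with label outside $L_h$), but it has one genuine gap: both Lemma~\ref{proof:unk:clean} and Lemma~\ref{proof:unk:ensure} carry the explicit hypothesis that \emph{no agent declares that the gathering is achieved before} a certain round, and you invoke them without verifying it. This hypothesis is not cosmetic: once an agent declares gathering it stops executing the algorithm, and all the behavioral facts these lemmas rely on (waiting periods of $7m_y^{2m_y^5}$ rounds around each edge traversal, the bound of one edge traversal by $B$ during the window, the turn-taking in ${\tt GraphSizeCheck}$) can fail for such an agent. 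The paper closes this gap by first proving the statement under the additional assumption that $t$ is the \emph{earliest} round in which any agent declares gathering --- under which the "no prior declaration" hypotheses hold by definition --- and then disposing of the general case by a short minimality argument: if some agent declared in a round $r<t$, the earliest-round case applied to $r$ forces \emph{all} agents (including $A$) to declare in round $r$, contradicting $A$'s declaration in round $t$. You should add this reduction at the top of your proof.

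Two smaller points. First, your claim that after the common start of ${\tt EnsureCleanExploration}(h)$ "the remaining instructions are deterministic and depend only on data shared by all members of $G$" is not literally true of ${\tt GraphSizeCheck}(h)$, where agents act according to their ranks at different times; the correct justification that they all finish together at $v$ with the same return value is Proposition~\ref{proof:unk:sizecheck} together with the cleanness of each of their explorations, which you only establish afterwards --- the order of your argument should be adjusted accordingly. Second, when applying Lemma~\ref{proof:unk:ensure} you need a single node $u$ at distance at most $n_h^5$ from $v$ that \emph{both} $A$ and $B$ visit within the window $\{r_1,\dots,r_2+1\}$; the paper obtains it concretely by taking $u$ to be $B$'s position in round $t$ and exhibiting a path of length $n_h^5+1$ prefixed by a shortest $v$--$u$ path that $A$ enumerates in ${\tt EnsureCleanExploration}(h)$. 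Your phrasing ("$B$ occupies, at some point, a node at distance at most $n_h^5$ from $v$") gestures at this but does not pin down the common node; it is worth making explicit.
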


			\begin{proof}
				First assume that $t$ is the earliest round in which an agent declares that the gathering is achieved.

				In view of Lemma~\ref{proof:unk:star}, there is a node $v$ and a round $t_1$ such that for each label $l\in L_x$ the execution by the agent with label $l$ of ${\tt StarCheck}(x)$ is completed at node $v$ in round $t_1$. In other words, agent $A$ starts executing ${\tt EnsureCleanExploration}(x)$ at node $v$ in round $t_1$, like the $k_x-1$ other agents whose labels belong to $L_x$. Moreover, since no instruction of Algorithm~\ref{alg:ensure} results in different behaviors depending on the executing agent, in every round in which at least one agent executes ${\tt EnsureCleanExploration}(x)$, all $k_x$ agents whose labels belong to $L_x$ are at the same node and execute the same instructions, until round $t_2$ in which the execution by each of them of this function returns {\tt true}. Then, in view of Proposition~\ref{proof:unk:sizecheck}, the execution of ${\tt GraphSizeCheck}(x)$ by each of the $k_x$ agents whose labels belong to $L_x$ is completed in round $t=t_2+2k_x{\tt T}({\tt EST^+}(n_x))$ at node $v$. In view of Lemma~\ref{proof:unk:clean}, all these executions of ${\tt GraphSizeCheck}(x)$ are clean. In view of the description of procedure ${\tt EST^+}$ in Section~\ref{sec:ninconnu}, agent $A$ declares that the gathering is achieved because the size $n$ of the graph is equal to $n_x$, which means that all $k_x$ agents whose labels belong to $L_x$ also declare that the gathering is achieved in round $t$ at node $v$.

				Note that since $n_x=n$, in view of line~\ref{line:gath:size} of Algorithm~\ref{alg:mainnoupper}, in round $t$, every agent whose label belongs to $L_x$ assigns to its variable $size$ the size of the graph. Moreover, if there is no agent whose label does not belong to $L_x$, then in view of line~\ref{line:gath:lead} of Algorithm~\ref{alg:mainnoupper}, in round $t$, every agent whose label belongs to $L_x$ assigns to its variable $leader$ the smallest label of an agent. Hence, if there is no agent whose label does not belong to $L_x$ and $t$ is the earliest round in which an agent declares that the gathering is achieved, then the lemma is verified.

				Still assuming that $t$ is the earliest round in which an agent declares that the gathering is achieved, let us now assume that there exists an agent $B$ whose label does not belong to $L_x$. Denote by $u$ the node (possibly different from $v$) at which agent $B$ is in round $t$.

				Let $\pi$ be a shortest path from node $v$ to node $u$. Since $n_x$ is the size of the graph, all elements of $\pi$ belong to $\{0, \dots, n_x-2\}$ and the length of $\pi$ is at most $n_x-1$. There exists a path $\zeta$ of length $n_x^5+1$ whose elements belong to $\{0, \dots, n_x-2\}$ which is prefixed by $\pi$. In view of line~\ref{line:clean:back} of Algorithm~\ref{alg:ensure}, at the beginning of each iteration of the for loop of line~\ref{line:clean:enum} of Algorithm~\ref{alg:ensure}, agent $A$ is at node $v$. Each iteration of this loop consists of considering a path of length $n_x^5+1$ whose elements belong to $\{0, \dots, n_x-2\}$ and following it. During the execution of ${\tt EnsureCleanExploration}(x)$ by agent $A$, there is an iteration of the for loop of line~\ref{line:clean:enum}, during which the value of variable $x$ is $\zeta$, and thus during which agent $A$ follows $\pi$ from node $v$, and visits node $u$, in some round $r_A$.

				Note that agent $A$ starts executing ${\tt EnsureCleanExploration}(x)$ at node $v$ in round $t_1$, and executes a move instruction belonging to ${\tt GraphSizeCheck}(x)$ in round $t-1$. Also note that agents $A$ and $B$ visit node $u$ in rounds of $\{t_1, \dots, t\}$ and the label of agent $B$ does not belong to $L_x$. In view of this and of the fact that $t$ is the earliest round in which an agent declares that the gathering is achieved, it follows from Lemma~\ref{proof:unk:ensure} that agent $A$ does not call ${\tt GraphSizeCheck}(x)$, which is a contradiction and concludes the analysis in the case when $t$ is the earliest round in which an agent declares that the gathering is achieved.

				Now assume that round $t$ is not the earliest one in which an agent declares that the gathering is achieved. Let $r<t$ be the earliest round in which an agent declares that the gathering is achieved. The part of this proof related to the case when $t$ is the earliest round applies to round $r$. This means that all agents declare that the gathering is achieved in round $r$. This contradicts the fact that agent $A$ declares that the gathering is achieved in round $t>r$, and completes the proof.
			\end{proof}

			\begin{lemma}
				\label{proof:unk:convergence}
				There is an agent that ends up declaring that the gathering is achieved.
			\end{lemma}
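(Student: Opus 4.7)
The plan is to pick the index $h^{\star}$ in the enumeration $\Omega$ such that $\phi_{h^{\star}} = \phi$, and to argue that the execution of ${\tt Hypothesis}(h^{\star})$ eventually returns ${\tt true}$ for some agent. If an agent declares gathering achieved during the execution of ${\tt Hypothesis}(h)$ for some $h < h^{\star}$ we are done, so I would work under the assumption that no agent has declared gathering by round $s_{A,h^{\star}} + 1$ for any agent $A$. By Lemma~\ref{proof:unk:timebound2}, this assumption forces every agent $A$ to start ${\tt Hypothesis}(h^{\star})$ in round $s_{A,h^{\star}}$.

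Next I would verify that the preprocessing part of ${\tt Hypothesis}(h^{\star})$ succeeds for every agent. Since $n_{h^{\star}} = n$, every node of the real graph has degree at most $n - 1 = n_{h^{\star}} - 1$, so by Lemma~\ref{proof:unk:degree} each call to ${\tt BallTraversal}(h^{\star})$ returns ${\tt true}$, and by Lemma~\ref{proof:unk:delay1} we obtain the synchronization bound $|w_{A,h^{\star}} - w_{B,h^{\star}}| \leq \mathcal{S}_{h^{\star}}$ for every pair of agents $A, B$. After the common wait of $\mathcal{S}_{h^{\star}}$ rounds on line~\ref{alg:prepro2}, all agents therefore start ${\tt MoveToCentralNode}(h^{\star})$ within $\mathcal{S}_{h^{\star}}$ rounds of one another, and because $\phi_{h^{\star}} = \phi$, every agent's label belongs to $L_{h^{\star}}$ and the precomputed path ${\tt path}_{h^{\star}}(L)$ is a valid shortest path of length at most $n_{h^{\star}} - 1$ to the central node $v_{h^{\star}}$ in the real graph.

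I would then show that all agents meet at $v_{h^{\star}}$ simultaneously and that the remaining tests succeed. The waiting window of $\mathcal{S}_{h^{\star}} + n_{h^{\star}}$ rounds on lines~\ref{line:move:whileb}--\ref{line:move:whilee} of Algorithm~\ref{alg:movecentral} absorbs the combined delay coming from the desynchronization bound $\mathcal{S}_{h^{\star}}$ and the at most $n_{h^{\star}} - 1$ moves of each path, so every agent eventually observes ${\tt CurCard} = k_{h^{\star}}$; the subsequent wait of $\mathcal{S}_{h^{\star}} + n_{h^{\star}}$ rounds on line~\ref{line:move:wait} lets the stragglers catch up and guarantees that the final check of line~\ref{line:move:card} evaluates to ${\tt true}$ for all of them, so ${\tt MoveToCentralNode}(h^{\star})$ returns ${\tt true}$. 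From this common configuration, the dancing protocols of ${\tt StarCheck}(h^{\star})$ and ${\tt EnsureCleanExploration}(h^{\star})$ proceed in full synchrony among all $k_{h^{\star}}$ agents, so the observed cardinalities always match the values prescribed by Algorithms~\ref{alg:unk:star} and~\ref{alg:ensure} and no agent ever assigns ${\tt false}$ to its variable $b$. Finally, Lemma~\ref{proof:unk:clean} guarantees that the ensuing executions of ${\tt EST^+}(n_{h^{\star}})$ inside ${\tt GraphSizeCheck}(h^{\star})$ are clean, and since $n_{h^{\star}} = n$ each of them returns ${\tt true}$, so ${\tt Hypothesis}(h^{\star})$ returns ${\tt true}$ and the agent declares gathering achieved, contradicting our working assumption.

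The main obstacle I expect lies in the third step: carefully composing the delay bounds coming from Lemmas~\ref{proof:unk:timebound2} and~\ref{proof:unk:delay1} with the waiting windows hard-coded inside ${\tt MoveToCentralNode}(h^{\star})$ to certify that all $k_{h^{\star}}$ agents are simultaneously present at $v_{h^{\star}}$ precisely when each of them evaluates the condition of line~\ref{line:move:card}, and in then checking that no spurious cardinality fluctuation caused by agents on higher hypotheses reaches $v_{h^{\star}}$ during the dancing protocols — a verification that relies on the two schemes of Section~\ref{subsec:intuition} exactly in the way they were designed to be invoked.
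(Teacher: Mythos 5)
Your proposal is correct and follows essentially the same route as the paper's proof: fix the index with $\phi_{h^\star}=\phi$, assume for contradiction that no agent ever declares gathering, use Lemma~\ref{proof:unk:timebound2} to place every agent at the start of ${\tt Hypothesis}(h^\star)$ in round $s_{A,h^\star}$, and then verify in turn that ${\tt BallTraversal}$, ${\tt MoveToCentralNode}$, ${\tt StarCheck}$, ${\tt EnsureCleanExploration}$, and ${\tt GraphSizeCheck}$ all return {\tt true}, the last via Lemma~\ref{proof:unk:clean} and $n_{h^\star}=n$. The synchronization step you flag as the main obstacle is exactly what the paper isolates as Claim~\ref{proof:unk:convc1}, resolved by the same delay-absorbing waiting windows of $\mathcal{S}_{h^\star}+n_{h^\star}$ rounds you identify.
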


			\begin{proof}
				For the sake of contradiction, assume that no agent declares that the gathering is achieved. Let $x$ be a positive integer such that $\phi_x$ is the real initial configuration $\phi$. In particular, $n_x=n$, $L_x$ is the set of all the labels of the agents in the graph, and there is a single node $u_x$ such that for every $l$ in $L_x$, ${\tt path}_x(l)$ is a shortest path from the initial node of the agent with label $l$ to $u_x$. Let $F$ be the earliest agent (or one of the earliest agents) to wake-up and $l_F$ be its label. Since it never declares that the gathering is achieved, in view of Lemma~\ref{proof:unk:timebound2}, agent $F$ calls ${\tt Hypothesis}(x)$ in round $s_{F,x}$. Moreover, still in view of Lemma~\ref{proof:unk:timebound2}, every agent $A\neq F$, starts executing ${\tt Hypothesis}(x)$ in round $s_{A,x}$ which is at least $s_{F,x}$.

				In view of Algorithm~\ref{alg:hyp}, in order to show that agent $F$ ends up declaring that the gathering is achieved, it is enough to show that the fact that $\phi_x$ is the initial configuration implies that the calls of $F$ to ${\tt BallTraversal}(x)$, ${\tt MoveToCentralNode}(x)$, ${\tt StarCheck}(x)$, ${\tt EnsureCleanExploration}(x)$, and ${\tt GraphSizeCheck}(x)$ all return {\tt true}.

				In view of Lemma~\ref{proof:unk:degree} and Algorithm~\ref{alg:btraversal}, the execution by every agent of function ${\tt BallTraversal}(x)$ would return {\tt false} only if, while executing it, the agent visited a node with degree at least $n_x$, which cannot happen in a graph of $n_x=n$ nodes.

				In view of Algorithm~\ref{alg:movecentral}, the execution by every agent $A$ of function ${\tt MoveToCentralNode}(x)$ would return {\tt false} only if either there were no node in $\phi_x$ with label $l_A$ or ${\tt path}_x(l_A)$ were not a valid path in the graph. However, since $\phi_x$ is the initial configuration, neither of these conditions is satisfied.

				To prove that the call of agent $F$ to ${\tt StarCheck}(x)$ returns {\tt true}, we need the following claim.

				\begin{claim}
					\label{proof:unk:convc1}
					There exists a round $r$ in which all agents start executing ${\tt StarCheck}(x)$ at node $u_x$.
				\end{claim}

				\begin{proofclaim}
					In view of Lemmas~\ref{proof:unk:timebound2} and~\ref{proof:unk:delay1}, during their execution of ${\tt Hypothesis}(x)$, all agents start executing line~\ref{alg:prepro2} of Algorithm~\ref{alg:hyp} in round $w_{F,x}+\mathcal{S}_x$ at the latest. Then, they spend $\mathcal{S}_x$ rounds executing line~\ref{alg:prepro2} of Algorithm~\ref{alg:hyp}, and start executing ${\tt MoveToCentralNode}(x)$ in a round $r_1$ of $\{w_{F,x}+\mathcal{S}_x, \dots, w_{F,x}+2\mathcal{S}_x\}$. This means that when an agent starts executing ${\tt MoveToCentralNode}(x)$, the other agents have finished executing ${\tt BallTraversal}(x)$, and are either waiting or also starting the execution of ${\tt MoveToCentralNode}(x)$.

					In view of Lemma~\ref{proof:unk:initnode} and Proposition~\ref{proof:unk:distances}, every agent $A$ is at node $v$ when its execution of ${\tt BallTraversal}(x)$ is completed and when it starts executing ${\tt MoveToCentralNode}(x)$. Then, all agents spend at most $n_x-1$ rounds moving, and since $\phi_x$ is the initial configuration, there is a round $r_3$ of $\{w_{F,x}+\mathcal{S}_x, \dots, s_{F,x}+2\mathcal{S}_x+n_x-1\}$ in which they are all at node $u_x$.

					Note that since every agent $A$ follows a shortest path from its initial node $v_A$ to node $u_x$, there is no round in which it is at node $u_x$ and executes line~\ref{line:move:move} of Algorithm~\ref{alg:movecentral}. In other words, there is no round $r_1\leq r<r_3$ in which there are $k_x$ agents at node $u_x$. Hence, the earliest agents to reach node $u_x$ do not complete the execution of lines~\ref{line:move:whileb} to~\ref{line:move:whilee} before round $r_3$. More precisely, let $r_2$ be the earliest round in which an agent executes line~\ref{line:move:whileb}. In each round $r_2\leq r<r_3$, there are fewer than $k_x$ agents at node $v$. Round $r_2$ belongs to $\{w_{F,x}+\mathcal{S}_x, \dots, s_{F,x}+2\mathcal{S}_x+n_x-1\}$, which means that $r_3-r_2\leq \mathcal{S}_x+n_x$.

					Hence, all agents start executing line~\ref{line:move:wait} in round $r_3$. Then, in round $r_3+\mathcal{S}_x+n_x$, they all complete the execution of this line and start executing ${\tt StarCheck}(x)$ in the same round at node $u_x$. This concludes the proof of this claim.
				\end{proofclaim}

				In view of Algorithm~\ref{alg:unk:star}, the agents take turns in visiting the neighbors of node $u_x$ while the $k_x-1$ other agents are waiting. During its visits of the neighbors of node $u_x$, each agent alternatively leaves node $v$ and enters it back. In the rounds after leaving node $v$, each agent is alone at its node and does not notice a cardinality different from 1 while the $k_x-1$ other agents are at node $v$ and do not notice a cardinality different from $k_x-1$. In the next round, the agent whose turn it is to visit the neighbors is back at node $v$ and no agent notices a cardinality different from $k_x$. More precisely, if we denote by $r_1$ the round in which all agents start executing ${\tt StarCheck}(x)$ and by $d$ the degree of node $u_x$, then for every non-negative even integer (resp. non-negative odd integer) $o \leq 4dk_x$, in round $r_1+o$, there are $k_x$ agents at node $u_x$ checking whether there are $k_x$ agents at their current node (resp. $k_x-1$ agents at node $u_x$ checking whether there are $k_x-1$ agents at their current node, and one agent at a neighbor of node $u_x$ possibly checking whether there is 1 agent at its current node). Hence, no execution of ${\tt StarCheck}(x)$ returns {\tt false}. All agents start executing ${\tt EnsureCleanExploration}(x)$ in round $r_1+4dk_x$ at node $u_x$.

				In view of Algorithm~\ref{alg:SizeCheck}, in each round of their execution of ${\tt EnsureCleanExploration}(x)$, all agents execute the same instructions at the same node. Their execution of this function returns {\tt false} if and only if there is a round in which there are not $k_x$ agents at their current node, which does not occur since they stay together all along the execution.

				Finally, all agents start executing ${\tt GraphSizeCheck}(x)$ in the same round $r_2$ at the same node. In view of Proposition~\ref{proof:unk:distances}, this node is node $u_x$ where they started executing ${\tt StarCheck}(x)$. In view of Algorithm~\ref{alg:SizeCheck}, all agents take turns in executing ${\tt EST^+}(n_x)$ while the $k_x-1$ other agents wait. In view of Lemma~\ref{proof:unk:clean}, the execution of ${\tt GraphSizeCheck}(x)$ by each of them is clean. In view of the description of procedure ${\tt EST^+}$ in Section~\ref{sec:ninconnu}, {the execution by every agent of ${\tt GraphSizeCheck}(x)$ returns {\tt true} because the size $n$ of the graph is equal to $n_x$. In view of Proposition~\ref{proof:unk:sizecheck}, it does so in round $r_2+2k_x{\tt T}({\tt EST^+}(n_x))$. This concludes the proof of the lemma.}
			\end{proof}

			From Lemmas~\ref{proof:unk:safety} and~\ref{proof:unk:convergence}, we get the following theorem.

			\begin{theorem}
				\label{theo:unknown}
				Algorithm ${\tt GatherUnknownUpperBound}$ solves the gathering and leader election problems. Moreover, at the end of its execution of ${\tt GatherUnknownUpperBound}$, each agent knows the size of the graph.
			\end{theorem}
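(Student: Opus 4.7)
The plan is to derive the theorem as a short consequence of the two high-level lemmas that immediately precede it, namely Lemma~\ref{proof:unk:convergence} (liveness) and Lemma~\ref{proof:unk:safety} (safety). I would first invoke Lemma~\ref{proof:unk:convergence} to obtain at least one agent, call it $A$, that eventually declares the gathering achieved, say at some node $v$ in some round $t$. This single piece of information is exactly the hypothesis needed to trigger Lemma~\ref{proof:unk:safety}.

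Next I would apply Lemma~\ref{proof:unk:safety} to agent $A$ to conclude three things simultaneously: (1) \emph{every} agent declares the gathering achieved, (2) they all do so in the common round $t$ at the common node $v$, and (3) in that round every agent assigns to its variable $leader$ the smallest label of any agent in the team and to its variable $size$ the size of the graph. Item~(1) together with~(2) gives the correctness of gathering; item~(3) then yields both the leader election guarantee (all agents agree on the same elected label, which is actually the label of a genuine agent) and the additional property that each agent learns $n$ at the end of its execution.

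The only real content of the argument thus lies in verifying that the hypotheses of Lemma~\ref{proof:unk:safety} are satisfied as soon as Lemma~\ref{proof:unk:convergence} provides a declaring agent; this is immediate because Lemma~\ref{proof:unk:safety} only assumes the existence of some agent that ends up declaring, without any extra precondition. Since no hidden algebraic or combinatorial calculation is required here, I expect no obstacle at this stage: all the delicate work (clean exploration via ${\tt EnsureCleanExploration}$ and ${\tt GraphSizeCheck}$, slowdowns and ball traversals protecting against interference between hypotheses, and termination when $\phi_x=\phi$) has already been discharged inside Lemmas~\ref{proof:unk:safety} and~\ref{proof:unk:convergence}. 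The proof of the theorem will therefore be a two-line deduction that chains the two lemmas together, concluding with a remark that items~(1)--(3) above are exactly the three statements of the theorem.
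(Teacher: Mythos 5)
Your proposal is correct and matches the paper exactly: the paper derives Theorem~\ref{theo:unknown} directly from Lemmas~\ref{proof:unk:safety} and~\ref{proof:unk:convergence} with no further argument, precisely the two-step chaining you describe.
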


	\section{Consequences for the gossiping problem} \label{sec:elecgoss}

		In the previous sections, we have shown that the lack of direct means of communication does not preclude the gathering and leader election. In what follows, we show that the lack of direct means of communication also does not preclude communication itself, because it turns out that the feasibility of the gathering problem implies the feasibility of the gossiping problem. In the gossiping problem, each agent has a message to transmit that is a binary string (not necessarily different from other messages), and the goal for all agents is to learn all messages. Without loss of generality, we assume that each message $M$ is equal to ${\tt code}(M')$, for some binary string $M'$. The reason why we make such an assumption is given below.

		Algorithm~\ref{alg:gossip} gives the pseudocode of procedure~${\tt Gossip}$ that permits to solve the gossiping problem provided the following two conditions are satisfied: $(1)$ all agents know a common upperbound $N$ on the graph size, and $(2)$ all agents start executing the procedure in the same round and from the same node. Note that these conditions can be reached starting from any initial configuration, if the agents first apply ${\tt GatherKnownUpperBound}$ (resp. ${\tt GatherUnknownUpperBound}$), according to Theorem~\ref{theo:known} (resp. Theorem~\ref{theo:unknown}).

		\begin{algorithm}
			\begin{footnotesize}
			\caption{Algorithm~${\tt Gossip}$\label{alg:gossip}}
			\label{algo:gossip}
			\begin{algorithmic}[1]
				\Begin
					\State let $M$ be the message to be transmitted
					\State $a\gets{\tt CurCard}$; $i\gets 0$; $j\gets 2$;
						$b\gets{\tt true}$; $S\gets \varnothing$
					\While{$i\neq a$} \label{line:gossip:2}
						\State $(m, k)\gets{\tt Communicate}(j, M, b)$ \label{line:gossip:tool}
						\If{$m$ is suffixed by $01$} \label{line:gossip:4}
							\State $S\gets S\cup\{(m, k)\}$; $i\gets i+k$; $j\gets 2$
							\If{there is a couple $(M,*)\in S$}
								\State $b\gets{\tt false}$
							\EndIf
						\Else
							\State $j\gets j+2$ \label{line:gossip:fail}
						\EndIf
					\EndWhile \label{line:gossip:3}
				\End
			\end{algorithmic}
			\end{footnotesize}
		\end{algorithm}

		When executing procedure~${\tt Gossip}$, each agent handles three variables: $S,i$ and $b$. The variable $S$ is used to store couples $(m,k)$ where $m$ is a message that has been received by the agent and $k$ is a positive integer corresponding to the number of agents for which $m$ is the message. Those couples are learned via successive calls to function ${\tt Communicate}$. The variable $i$ is meant to store the number of agents whose message has been received, while the variable $b$ is a boolean whose purpose is to indicate whether the message of the executing agent has been transmitted to the members of the team or not ($b$ is initialized to {\tt true}, and then it is assigned the value {\tt false} when the message of the agent has been transmitted). The procedure is completed when variable $i$ becomes equal to the total number of agents. As it is explained below, this event means that each agent has transmitted its message.

		To be more specific, assume that all the agents know a common upper bound $N$ on the graph size and start executing procedure~${\tt Gossip}$ in some round $t$ at some node $v$. At the beginning of the execution of every agent we have $S=\emptyset$, $b=$ {\tt true} and $i=0$. Denote by $\sigma$ (resp. $k_{\sigma}$) the lexicographically smallest message among the shortest messages to be transmitted (resp. the number of agents for which $\sigma$ is the message to be transmitted). By the while loop of Algorithm~\ref{alg:gossip} and Lemma~\ref{lem:com} (which can be applied  because each message is an image of a binary string under function ${\tt code}$), it follows that for every positive even integer $j \leq |\sigma|$, the execution of function ${\tt Communicate}(j,M_A,b)$ by each agent $A$ (where ${\tt M_A}$ is the message of even length of agent $A$) is started (resp. completed) at node $v$ in round $t+\sum_{s=1}^{s=\frac{j}{2}}10(s-1){\tt T}({\tt EXPLO}(N))$ (resp. in round $t+\sum_{s=1}^{s=\frac{j}{2}}10s{\tt T}({\tt EXPLO}(N))$). Its return value is $(\sigma,k_{\sigma})$ if $|\sigma|=j$, and it is a couple whose first element contains no bit $0$, otherwise. Hence,  upon completion of round $t+\sum_{s=1}^{s=\frac{|\sigma|}{2}}10s{\tt T}({\tt EXPLO}(N))$, every agent is at node $v$ and for each agent we have: $S=\{(\sigma,k_{\sigma})\}$, $i=k_{\sigma}$, and $b=$ {\tt false} (resp. {\tt true})  if the message of the agent is $\sigma$ (resp. is not $\sigma$). At this point, if $i=k_{\sigma}$ is equal to the number of agents in the team, all agents know that all the messages have been transmitted (in fact, all agents had the same message) and stop the execution of procedure~${\tt Gossip}$. Otherwise, the agents continue their execution. Denote by $\sigma'$ (resp. $k_{\sigma'}$) the lexicographically smallest message among the shortest messages different from $\sigma$ (resp. the number of agents for which $\sigma'$ is the message to be transmitted). Using similar arguments as above, we can show that upon completion of round $t+\sum_{s=1}^{s=\frac{|\sigma|}{2}}10s{\tt T}({\tt EXPLO}(N))+\sum_{s=1}^{s=\frac{|\sigma'|}{2}}10s{\tt T}({\tt EXPLO}(N))$, every agent is at node $v$ and ifor each agent we have: $S=\{(\sigma,k_{\sigma}),(\sigma',k_{\sigma'})\}$, $i=k_{\sigma}+k_{\sigma'}$, and $b=$ {\tt false} (resp. {\tt true}) if the message of the agent is $\sigma$ or $\sigma'$ (resp. is neither $\sigma$ nor $\sigma'$). At this point, if $i=k_{\sigma}+k_{\sigma'}$ is equal to the number of agents in the team, all agents know that all the messages have been transmitted and stop the execution of procedure~${\tt Gossip}$. Otherwise, the agents still continue their execution.

		By induction on the number of messages to be transmitted, we can show that there exists a time $T$ that is polynomial in $N$ and in the length of the largest message, such that the execution of procedure~${\tt Gossip}$ by each agent is completed in round $t+T$: by then, for each agent, the set $S$ is such that $(m,k)\in S$ iff $k>0$ and there are exactly $k$ agents for which $m$ is the message to be transmitted.

		Hence from Theorem~\ref{theo:known} and Theorem~\ref{theo:unknown}, we get the following result about Algorithm~${\tt GossipKnownUpperbound}$ and Algorithm~${\tt GossipUnknownUpperBound}$: the former (resp. latter) algorithm simply consists in applying Algorithm~${\tt GatherKnownUpperBound}$ (resp. Algorithm~${\tt GatherUnknownUpperBound}$) and then applying Algorithm~${\tt Gossip}$.

		\begin{theorem}
			Assuming that the agents initially know a common upper bound $N$ on the graph size (resp. initially do not know any upper bound on the graph size), Algorithm~${\tt GossipKnownUpperbound}$ (resp. Algorithm~${\tt GossipUnknownUpperBound}$ ) solves the gossiping problem. Moreover,  the time complexity of Algorithm~${\tt GossipKnownUpperbound}$ is polynomial in the known upper bound $N$, in the length $\ell$ of the smallest label among the agents and in the length of the largest message to be transmitted.
		\end{theorem}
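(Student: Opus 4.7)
The plan is to build the proof on top of the two gathering theorems (Theorem~\ref{theo:known} and Theorem~\ref{theo:unknown}) together with the key technical result on the \texttt{Communicate} function (Lemma~\ref{lem:com}), following the informal analysis that immediately precedes the theorem. First I would invoke the relevant gathering theorem to reduce the initial setting to the hypotheses needed by Algorithm~\ref{alg:gossip}: after \texttt{GatherKnownUpperBound} (resp.\ \texttt{GatherUnknownUpperBound}), all agents declare gathering in the same round at a common node $v$ and, when they subsequently begin \texttt{Gossip}, they all start in the same round $t$ from $v$. In the unknown-upper-bound case I would additionally use the guarantee, stated in Theorem~\ref{theo:unknown}, that every agent has learned the exact size $n$ of the graph, which can then serve as the parameter $N$ required inside \texttt{Communicate}.

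Next I would prove correctness of \texttt{Gossip} by induction on the number of distinct messages held by the agents. The induction invariant, to be established at the end of each successful pass through the while loop (i.e.\ each iteration that enters the \texttt{if} branch at line~\ref{line:gossip:4}), is the following: all agents are at node $v$ in a common round; $S$ is the same for every agent and consists precisely of the couples $(m,k_m)$ where $m$ is one of the $|S|$ lexicographically-and-length smallest distinct messages already discovered and $k_m$ is the exact number of agents holding $m$; $i=\sum_{(m,k_m)\in S} k_m$; and $b$ is \texttt{false} for an agent exactly when its own message is already in $S$. The base case $S=\emptyset$ holds at round $t$. For the inductive step I would apply Lemma~\ref{lem:com} to the current call of $\texttt{Communicate}(j,M,b)$: the hypotheses of the lemma are met since all agents are co-located at $v$, start the call in the same round with the same parameter $j$, and each input message is by assumption of the form ${\tt code}(M')$. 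The lemma then yields that, in any single node $v$ setting, the third "invisibility" condition is trivially satisfied, and therefore the return value $(l,k)$ equals $(\sigma\,1^{j-|\sigma|}, k_\sigma)$, where $\sigma$ is the lex-smallest code among the still-participating agents whose code has length at most $j$, and $k_\sigma$ the number of such agents. By Proposition~\ref{pro:code}, $l$ is suffixed by $01$ in an odd position iff $j=|\sigma|$, which is exactly the test at line~\ref{line:gossip:4}; the doubling of $j$ (line~\ref{line:gossip:fail}) guarantees that the successive even values of $j$ will eventually equal $|\sigma|$. The reset of $b$ to \texttt{false} whenever the discovered couple matches the agent's own message maintains the invariant for the next iteration.

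Termination and time complexity follow from the fact that each successful iteration strictly increases $|S|$, so after at most $a={\tt CurCard}$ iterations, $i$ reaches $a$ and the loop exits; by the invariant, $S$ then contains all messages together with their multiplicities, which solves gossiping. To bound the running time of \texttt{Gossip} itself I would observe, as sketched in the preceding intuitive discussion, that every call to \texttt{Communicate}$(j,\cdot,\cdot)$ lasts $5j\,{\tt T}({\tt EXPLO}(N))$ rounds by Lemma~\ref{lem:com}, that for each discovered message $\sigma$ the value $j$ doubles at most $\log|\sigma|$ times before succeeding, and that the number of distinct messages is bounded by the number of agents, hence by $N$. Summing gives a polynomial in $N$ and in the length of the longest message. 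Combined with the polynomial gathering time of Theorem~\ref{theo:known}, this yields the claimed polynomial bound in $N$, $\ell$, and the longest message length for \texttt{GossipKnownUpperbound}.

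I expect the main obstacle to be the careful verification that the three hypotheses of Lemma~\ref{lem:com} remain satisfied at every iteration of the while loop, in particular the invariant that all agents remain co-located and perfectly synchronized between consecutive calls to \texttt{Communicate}, since each agent updates its own $b$ and $S$ based on a value computed only from its own observations. This hinges on showing that the return value $(l,k)$ is the same for every agent (so they all take the same branch of the \texttt{if} at line~\ref{line:gossip:4} and spend the same number of rounds before the next call), which is precisely what Lemma~\ref{lem:com} provides; the remaining care is to confirm that \texttt{Gossip} never issues a \texttt{Communicate} call with an input string outside the image of $\texttt{code}$, which is why the paper assumes each original message $M'$ is first passed through $\texttt{code}$.
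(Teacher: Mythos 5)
Your proposal follows essentially the same route as the paper: reduce to the gathering theorems to get all agents co-located and synchronized at a single node, then apply Lemma~\ref{lem:com} inductively over the distinct messages (ordered by length, then lexicographically), exactly as in the paper's discussion preceding the theorem. One slip worth correcting: line~\ref{line:gossip:fail} increments $j$ by $2$ rather than doubling it, so a failed iteration costs the agents one more even value of $j$, not a halving of the remaining search; this is what actually guarantees that $j$ hits $|\sigma|$ exactly (with genuine doubling the test at line~\ref{line:gossip:4} could be skipped over and correctness would fail), and the resulting cost per message is $O(|\sigma|^2)\,{\tt T}({\tt EXPLO}(N))$ rather than logarithmically many calls --- still polynomial, so your final complexity claim stands.
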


	\section{Conclusion}

		We designed deterministic algorithms for fundamental problems, such as gathering, leader election and gossiping in the synchronous scenario, under a model much weaker than the traditional one, in which the ability to talk among co-located agents is replaced by the mere information of how many agents are currently co-located with a given agent. It is clear that this assumption cannot be entirely removed: if agents do not know anything about the number of co-located agents, even a team of two agents would never be aware that they met.

		Our algorithm assuming the knowledge of an upper bound $N$ on the size of the network has complexity polynomial in $N$, in the length of the smallest label and, in case of gossiping, in the length of the largest message. We did not try to actually optimize this complexity: this is a long-standing open problem even in the traditional model, and even for only two agents.

		The purpose of our algorithm working without any {\em a priori} knowledge about the network is to show feasibility of gathering under this harsher scenario. The algorithm, which emulates a solution with direct means of communication in the scenario where agents are deprived of it, has time complexity exponential in the size of the network and in the labels of agents. The natural open problem yielded by our results is whether deterministic gathering, leader election and gossiping can be performed without any {\em a priori} knowledge, in time polynomial in the size of the network, in the length of the smallest label and, in case of gossiping, in the length of the largest message. Note that, at first glance, one might think that we could easily get such a complexity here by simply choosing to emulate a more efficient algorithm. However, some of the tools that we designed to enable such an emulation are heavily exponential themselves: using them with a more efficient but more technical algorithm (putting aside the question of whether such a change would be even feasible), would lead to an additional burden without improving the complexity to cross the polynomial border.
		A first step towards a polynomial solution of gathering and gossiping without direct communication and without any {\em a priori} knowledge would be to add the possibility of randomization, and design a randomized algorithm for these tasks working in polynomial time with high probability.
		

		In this paper we considered the synchronous scenario. In the asynchronous scenario, gathering, leader election and gossiping were solved in \cite{DPV}, using the traditional  assumption that co-located agents can talk. It would be interesting to investigate if this assumption could be similarly weakened in the asynchronous case. This is not clear, as then agents can meet also inside edges (if meeting is allowed only at nodes and agents crossing each other in an edge do not notice it, asynchronous gathering is impossible). Would knowing how many agents are in every such meeting point be enough to solve the above problems, when any chatter is forbidden? Clearly, the solution presented in this paper could not be used, as it heavily relies on waiting times that cannot be controlled in the asynchronous scenario.

	\pagebreak

	\bibliographystyle{plain}
	

\end{document}